\newtheorem{theorem}{Theorem}[section]
\newtheorem{lemma}[theorem]{Lemma}
\newtheorem{proposition}[theorem]{Proposition}
\newtheorem{corollary}[theorem]{Corollary}
\theoremstyle{definition}
\theoremstyle{remark}
\newtheorem{remark}[theorem]{Remark}
\numberwithin{equation}{section}
\newcommand{\ba}{\begin{array}}
\newcommand{\ea}{\end{array}}
\newcommand{\f}{\frac}
\newcommand{\la}{\lambda}
\newcommand{\ds}{\displaystyle}
\begin{document}
\date{}
\title{ \bf\large{On the impact of spatial heterogeneity and drift rate in a three-patch two-species Lotka-Volterra competition model over a stream}\thanks{S. Chen is supported by National Natural Science Foundation of China (Nos. 12171117, 11771109) and Shandong Provincial Natural Science Foundation of China (No. ZR2020YQ01). 
}}
\author{
Shanshan Chen\footnote{Corresponding Author, Email: chenss@hit.edu.cn}\\[-1mm]
{\small Department of Mathematics, Harbin Institute of Technology}\\[-2mm]
{\small Weihai, Shandong 264209, P. R. China}\\[1mm]
Jie Liu\footnote{Email: liujie9703@126.com}\\[-1mm]
{\small School of Mathematics, Harbin Institute of Technology}\\[-2mm]
{\small Harbin, Heilongjiang, 150001, P. R. China}\\[1mm]
Yixiang Wu\footnote{Email: yixiang.wu@mtsu.edu} \\[-1mm]
{\small Department of Mathematics, Middle Tennessee State University}\\[-2mm]
{\small Murfreesboro, Tennessee 37132, USA}\\[1mm]
}

\maketitle
\begin{abstract}
In this paper, we study a three-patch two-species Lotka-Volterra competition patch model over a stream network. 
The individuals are subject to both random and directed movements, and the two species are assumed to be identical except for the movement rates. The environment is heterogeneous, and the carrying capacity is lager in upstream locations. We treat one species as a resident species and investigate whether the other species can invade or not. Our results show that the spatial heterogeneity of environment and the magnitude of the drift rates have a large impact on the competition outcomes of the stream species. 
\\[2mm]
\noindent {\bf Keywords}: Lotka-Volterra competition model, patch environment, evolution of dispersal, directed drift, random movement.\\[2mm]
\noindent {\bf MSC 2020}: 92D25, 92D40, 34C12, 34D23, 37C65.
\end{abstract}

\section{Introduction}

The species living in stream environment is subject both passive random movement and directed drift \cite{speirs2001population}. Intuitively, the drift will carry individuals to the downstream end, which may be crowded or hostile. However, the random dispersal may drive the individuals to the upper stream locations, which are usually more favorable for the species \cite{jiang2020two}. Therefore, the joint impact of both undirectional and directed dispersal rates on the population dynamics of the species are usually complicated and have attracted increasing research interests recently \cite{HuangJin, jin2011seasonal, lou2014evolution,  lutscher2006effects, lutscher2007spatial, lutscher2005effect, speirs2001population}.

Dispersal has profound effects on the distribution and abundance of organisms, and understanding the mechanisms for the evolution of dispersal is a fundamental question related to dispersal \cite{johnson1990evolution}. In the seminal works of Hastings \cite{hastings1983can}  and Dockery \emph{et al.} \cite{dockery1998evolution}, it has been shown that in a spatial heterogeneous environment, when two competing species are identical except for the random dispersal rate, evolution of dispersal favors the species with a smaller dispersal rate.
However, in an advective environment when individuals are subject to both undirectional random dispersal and directed movement, species with a faster dispersal rate can be selected \cite{cantrell2006movement, cantrell2010evolution, chen2012dynamics}.

Two species reaction-diffusion-advection competition models of the following form has been proposed to study the evolution of dispersal for stream species \cite{lam2015evolution,lou2014evolution, lou2018coexistence, lou2016qualitative, lou2015evolution, ma2020evolution,  vasilyeva2011population, vasilyeva2012flow,  yan2022competition,zhao2016lotka, zhou2016lotka}:
\begin{equation}\label{rs-comp}
\begin{cases}
u_t=d_1u_{xx}-q_1 u_x+u[r(x)-u-v], &0<x<l,\;\;t>0,\\
v_t=d_2v_{xx}-q_2 v_x+v[r(x)-u-v], &0<x<l,\;\;t>0,\\
d_1u_x(0,t)-q_1 u(0,t)=d_2v_x(0,t)-q_2 v(0,t)=0, &t>0, \\
d_1u_x(l,t)-q_1 u(l,t)=d_2v_x(l,t)-q_2 v(l,t)=0, &t>0,\\
u(x,0), v(x, 0)\ge(\not \equiv)0,  &0<x<l.
\end{cases}
\end{equation}
In \cite{vasilyeva2011population}, the authors have studied the existence and properties of the positive equilibrium of \eqref{rs-comp}. In \cite{lam2015evolution,lou2014evolution,vasilyeva2012flow}, the authors have treated species $u$ as a resident species and studied the conditions under which the species $u$ only semitrivial equilibrium is stable/unstable. Various results on the global dynamics of \eqref{rs-comp} are presented in \cite{lou2016qualitative, lou2015evolution, ma2020evolution, zhao2016lotka, zhou2016lotka}. In particular, if $r(x)$ is constant, the works \cite{lou2016qualitative,lou2015evolution,zhou2016lotka} show that the species with a larger diffusion rate and/or a smaller advection rate wins the competition. If $r(x)$ is a decreasing function, the authors in \cite{zhao2016lotka} use $q_1$ and $q_2$ as bifurcation parameters to study the global dynamics of \eqref{rs-comp}.


To study the evolution of dispersal in a river network, the authors in \cite{jiang2020two, Jiang-Lam-Lou2021} propose and investigate three-patch two-species Lotka-Volterra competition models.  Let 
$\bm u=(u_1, u_2, u_3)$ and $\bm v=(v_1, v_2, v_3)$ be the population density of two competing species respectively, where $u_i$ and $v_i$ are the densities in patch $i$. Suppose that the dispersal patterns of the individuals and the configuration of the patches are shown in Fig. \ref{river}.
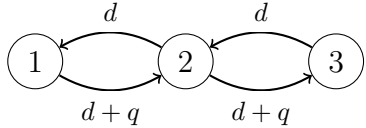
\begin{figure}[htbp]
\centering
\begin{tikzpicture}
\begin{scope}[every node/.style={draw}, node distance= 1.5 cm]
    \node[circle] (1) at (0,0) {$1$};
    \node[circle] (2) at (2,0) {$2$};
    \node[circle] (3) at (4,0) {$3$};
\end{scope}
\begin{scope}[every node/.style={fill=white},
              every edge/.style={thick}]
    \draw[thick] [->](1) to [bend right] node[below=0.1] {{\footnotesize $d+q$}} (2);
    \draw[thick] [->](2) to [bend right] node[below=0.1] {{\footnotesize $d+q$}} (3);

    \draw[thick] [<-](1) to [bend left] node[above=0.1] {{\footnotesize $d$}} (2);
    \draw[thick] [<-](2) to [bend left] node[above=0.1] {{\footnotesize $d$}} (3);
\end{scope}
\end{tikzpicture}
\caption{A stream with three patches, where $d$ is the random movement rate and $q$ is the directed drift rate. Patch 1 is  the  upstream end, and patch 3 is  the downstream end. }\label{river}
\end{figure}
The competition patch model over the stream network in Fig. \ref{river} (with $r_1=r_2=r_3$) in  \cite{jiang2020two, Jiang-Lam-Lou2021} is: 
\begin{equation}\label{pat-cp}
\begin{cases}
\ds\frac{du_i}{dt}=\ds\sum_{j=1}^{3} (d_1 D_{ij}+q_1Q_{ij})u_j+r_iu_i\left(1-\frac{u_i+v_i}{k_i}\right), &i=1,2,3,\;\;t>0,\\
\ds\frac{dv_i}{dt}=\ds\sum_{j=1}^{3}(d_2 D_{ij}+q_2Q_{ij})v_j+r_iv_i\left(1-\frac{u_i+v_i}{k_i}\right),&i=1,2,3,\;\; t>0, \\
\bm u(0)=\bm u_0\ge(\not\equiv)\;\bm0,\; \bm v(0)=\bm v_0\ge(\not\equiv)\; \bm0,
\end{cases}
\end{equation}
where $d_1$ and $d_2$ are random movement rates; $q_1$ and $q_2$ are directed movement rates;  $\bm r=(r_1, r_2, r_3)$ is the growth rate; $\bm k=(k_1, k_2, k_3)$ is the carrying capacity;  and two  $3\times 3$ matrices $D=(D_{ij})$ and $Q=(Q_{ij})$ 
 represent the random movement pattern and   directed drift pattern of individuals, respectively, where
\begin{equation}\label{DQ}
    D=\begin{bmatrix}
  -1 & 1 &0 \\
  1&-2& 1\\
  0&1 & -1 \\
 \end{bmatrix},\;\;Q=\begin{bmatrix}
  -1 & 0 &0 \\
  1&-1& 0\\
  0&1 & 0\\
 \end{bmatrix}.
\end{equation}
We can write the model as

\begin{equation}\label{3p}
\begin{cases}
\ds\frac{du_1}{dt}=-(d_1+q_1)u_1+d_1u_2+r_1u_1\left(1-\ds\f{u_1+v_1}{k_1}\right),\smallskip\\
\ds\frac{du_2}{dt}=(d_1+q_1)u_1-(2d_1+q_1)u_2+d_1u_3+r_2u_2\left(1-\ds\f{u_2+v_2}{k_2}\right),\\
\ds\frac{du_3}{dt}=(d_1+q_1)u_2-d_1u_3+r_3u_3\left(1-\ds\f{u_3+v_3}{k_3}\right),\smallskip\\
 \ds\frac{dv_1}{dt}=-(d_2+q_2)v_1+d_2v_2++r_1v_1\left(1-\ds\f{u_1+v_1}{k_1}\right),\smallskip\\
\ds\frac{dv_2}{dt}=(d_2+q_2)v_1-(2d_2+q_2)v_2+d_2v_3+r_2v_2\left(1-\ds\f{u_2+v_2}{k_2}\right),\smallskip\\
\ds\frac{dv_3}{dt}=(d_2+q_2)v_2-d_2v_3+r_3v_3\left(1-\ds\f{u_3+v_3}{k_3}\right),\\
\bm u(0)=\bm u_0\ge(\not\equiv)\;\bm0,\;\bm v(0)=\bm v_0\ge(\not\equiv)\;\bm0.
\end{cases}
\end{equation}
We assume $d_1, d_2, q_1, q_2>0$ and $r_i, k_i>0$ for $i=1, 2, 3$. We adopt the same assumption in \cite{jiang2020two}  on $\bm k=(k_1, k_2, k_3)$:
\begin{enumerate}
\item [$(\bf{H})$]  $k_1> k_2 > k_3>0$. 
\end{enumerate}
Biologically, $(\bf{}{H})$ means that the upstream locations are more favorable for both species. Model \eqref{3p} has two semitrivial equilibria $(\bm u^*,\bm 0)$ and $(\bm 0,\bm v^*)$.

Two-species Lotka-Volterra competition patch models have attracted many research interests recently. Model \eqref{pat-cp} with $n$ patches in spatially homogeneous environment (i.e. $r_1=\dots=r_n$ and $k_1=\dots=k_n$) has been considered in our earlier papers \cite{chen2022invasion,chen2021}, but many techniques and results there cannot be generalized to the situation when $\bm k$ is non-constant. 
The authors in \cite{hamida2017evolution,noble2015evolution} have studied the global dynamics of model \eqref{pat-cp} with two patches and $q:=q_1=q_2$. They have showed that there exists a critical drift rate such that below it the species with a smaller dispersal rate wins the competition while above it the species with a larger dispersal rate wins. 
In a competition model with two patches, the authors in \cite{cheng2019coexistence,gourley2005two,lin2014global} have showed that the species with more evenly distributed resources has less competition advantage. In \cite{chen2022global},  the global dynamics of a Lotka-Volterra competition patch model is classified under some assumptions on patches, which requires $d_1/q_1=d_2/q_2$ in terms of \eqref{pat-cp}.
For more studies on competition patch models, we refer to the works \cite{stephen2007ideal, cantrell2012evolutionary, cantrell2017evolution, kirkland2006evolution, levin1984dispersal,lou2019ideal, mcpeek1992evolution, smith2008monotone, xiang2019evolutionarily}.


We will take an adaptive dynamics approach \cite{dieckmann1996dynamical, geritz1998evolutionarily} to analyze \eqref{3p} by viewing species $\bm u$ as the resident species and species $\bm v$ as the mutant/invading species. We fix parameters $d_1$ and $q_1$ and vary $d_2$ and $q_2$. We show that there exists a curve $q=q_{\bm u}^*(d)$ dividing the $(d_2, q_2)$-plane into two regions such that $(\bm u^*, \bm 0)$ is  stable if and only if $(d_2, q_2)$ is above the curve.
Our results complement  those in \cite{jiang2020two} by defining and analyzing \ the curve  $q=q_{\bm u}^*(d)$ and obtaining the global dynamics of model \eqref{3p}. In particular, we show that if $q_1<\underline q$  the curve $q=q_{\bm u}^*(d)$ is bounded (see Fig. \ref{small}) and if $q_1>\overline q$ it is unbounded (see Fig. \ref{large}). This result is in sharp contrast with the corresponding one for the model in spatially homogeneous environment ($k_1=k_2=k_3$) \cite{chen2022invasion}, where the  curve $q=q_{\bm u}^*(d)$ is always unbounded. We give explicitly parameter ranges for competitive exclusion and conditions for coexistence/bistability  in three cases ($q_1<\underline q$, $\underline q\le q_1\le\overline q$ and $q_1>\overline q$). Our results show that the magnitude of  the drift rates and the spatial heterogeneity of  environment have an large impact on the competition outcomes of the stream species.

Our paper is organized as follows. In section 2, we list some preliminary results. In section 3, we state the main results on model \eqref{3p}. We give some conclusive remarks and numerical simulations in section 4. The proofs of the main results are presented in section 5. In the appendix, we show the relations of $\underline q$, $\overline q$, and $q_0$. These relations are implicitly included in the main results, and we prove them for reader's convenience. 

 \section{Preliminary}\label{section_pre}
Let $A=(a_{ij})_{n\times n}$ be a square matrix with real entries, $\sigma(A)$ be the set of all eigenvalues of $A$, and $s(A)$ be the  {\it spectral bound} of $A$, i.e.
$s(A)=\max\{{\rm Re} \lambda: \lambda\in\sigma(A)\}$.
The matrix $A$ is called \emph{irreducible} if it cannot be placed into block upper triangular form by simultaneous row and column permutations and \emph{essentially nonnegative} if $a_{ij}\ge 0$ for all $1\le i, j\le n$ and $i\neq j$. By the Perron-Frobenius Theorem, if $A$ is irreducible and essentially nonnegative, then $s(A)$ is an eigenvalue of $A$ (called the \emph{principal eigenvalue} of $A$), which is the unique eigenvalue associated with a nonnegative eigenvector. The following result on the monotonicity of  spectral bound can be found in \cite{altenberg2012resolvent, chen2022two}:
\begin{lemma}\label{theorem_quasi}
Let $A=(a_{ij})_{n\times n}$ be an irreducible  and essentially nonnegative  matrix and  $M=\text{diag}(m_i)$ be a real diagonal  matrix.   If $s(A)=0$, then
$$
\ds\frac{d}{d\mu} s(\mu A+M)\le 0
$$
for $\mu\in (0, \infty)$ and the  inequality is strict except for the case $m_1=\cdots=m_n$. Moreover,
$$
\lim_{\mu\rightarrow 0}s(\mu A+M)=\max_{1\le i\le n}\{m_i\} \;\;
\text{and} \;\; \lim_{\mu\rightarrow\infty}s(\mu A+R)=\sum_{i=1}^n{\theta_im_i},
$$
where $\theta_i\in (0, 1)$, $1\le i\le n$, is determined by $A$ and $\ds\sum_{i=1}^n{\theta_i}=1$ (if $A$ has each column sum equaling zero, then $\bm\theta=(\theta_1,\dots,\theta_n)^T$ is a positive eigenvector of $A$ corresponding to eigenvalue $0$).
\end{lemma}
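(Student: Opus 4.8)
The plan is to treat $\lambda(\mu):=s(\mu A+M)$ as a function of $\mu$ on $(0,\infty)$ and exploit the Perron--Frobenius structure of $\mu A+M$, which is irreducible and essentially nonnegative for every $\mu>0$. First I would record the objects furnished by $s(A)=0$: since $A$ is irreducible and essentially nonnegative, $0$ is its principal eigenvalue, so there are strictly positive right and left eigenvectors $\bm r,\bm l$ with $A\bm r=\bm 0$ and $\bm l^{T}A=\bm 0^{T}$. For each $\mu>0$, Perron--Frobenius gives that $\lambda(\mu)$ is a simple eigenvalue of $\mu A+M$ with positive right and left eigenvectors $\bm\phi(\mu),\bm\psi(\mu)$, and simplicity makes $\lambda(\cdot)$ and the (suitably normalized) eigenvectors real-analytic on $(0,\infty)$. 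Differentiating $(\mu A+M)\bm\phi=\lambda\bm\phi$ in $\mu$ and pairing with $\bm\psi$ yields the standard perturbation formula
\[
\lambda'(\mu)=\frac{\bm\psi(\mu)^{T}A\,\bm\phi(\mu)}{\bm\psi(\mu)^{T}\bm\phi(\mu)}.
\]

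For the monotonicity I would combine two ingredients. The cheap one is a uniform upper bound: by the Collatz--Wielandt characterization $s(B)=\min_{\bm x>0}\max_i(B\bm x)_i/x_i$ valid for irreducible essentially nonnegative $B$, testing with $\bm x=\bm r$ gives $\lambda(\mu)\le\max_i\big(\mu(A\bm r)_i+m_ir_i\big)/r_i=\max_i m_i=\lambda(0)=s(M)$ for all $\mu\ge 0$. The structural one is that $\mu\mapsto\lambda(\mu)$ is convex, because the entries of $\mu A+M$ are affine in $\mu$ and the spectral bound of an essentially nonnegative matrix is a convex function of entrywise-convex parameters (a classical property; see, e.g., \cite{altenberg2012resolvent, chen2022two}). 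A convex function on $[0,\infty)$ that never exceeds its value at the left endpoint must be non-increasing, so $\lambda'(\mu)\le 0$ on $(0,\infty)$. For strictness: if all $m_i$ equal $m$ then $\mu A+M=\mu A+mI$ and $\lambda(\mu)\equiv m$, so $\lambda'\equiv 0$; conversely, if $\lambda'(\mu_0)=0$ at some $\mu_0$, then convexity together with $\lambda'\le 0$ forces $\lambda'\equiv 0$ on $[\mu_0,\infty)$, whence by analyticity $\lambda'\equiv 0$ on all of $(0,\infty)$, i.e.\ $\lambda\equiv\max_i m_i$; comparing with the limit computed below then forces all $m_i$ to coincide.

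For the limits I would argue as follows. As $\mu\to 0^{+}$, continuity of the spectral bound in the matrix entries gives $\lambda(\mu)\to s(M)=\max_i m_i$. For $\mu\to\infty$ the key is the exact identity obtained by pairing the eigenvalue equation with the left null vector $\bm l$ of $A$: since $\bm l^{T}A=\bm 0^{T}$,
\[
\lambda(\mu)=\frac{\bm l^{T}M\,\bm\phi(\mu)}{\bm l^{T}\bm\phi(\mu)}=\frac{\sum_i l_i m_i\phi_i(\mu)}{\sum_i l_i\phi_i(\mu)}.
\]
Dividing $(\mu A+M)\bm\phi=\lambda\bm\phi$ by $\mu$ and letting $\mu\to\infty$ (so that $\tfrac1\mu M\to 0$ and $\tfrac1\mu\lambda\to 0$, as $\lambda$ is bounded between $\min_i m_i$ and $\max_i m_i$) shows $A\bm\phi(\mu)\to\bm 0$, hence $\bm\phi(\mu)$ converges to the normalized right null vector $\bm r$ of $A$ by continuity of the simple Perron eigenvector. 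Passing to the limit in the exact identity gives $\lambda(\mu)\to\sum_i\theta_i m_i$ with $\theta_i:=l_ir_i/\sum_j l_jr_j\in(0,1)$ and $\sum_i\theta_i=1$, these weights being determined by $A$ alone. When every column sum of $A$ vanishes one may take $\bm l=\bm 1$, so $\theta_i=r_i/\sum_j r_j$ is proportional to the positive eigenvector $\bm r$ of $A$ for the eigenvalue $0$, matching the stated special case. Finally, since $\theta_i>0$ and $\sum_i\theta_i=1$, the equality $\sum_i\theta_i m_i=\max_i m_i$ holds only when all $m_i$ are equal, which closes the strictness argument.

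The main obstacle is pinning down the sign of $\lambda'(\mu)$. The eigenvectors $\bm\phi(\mu),\bm\psi(\mu)$ belong to $\mu A+M$, not to $A$, so $\bm\psi^{T}A\bm\phi$ does not vanish and its sign is not visible from the perturbation formula alone; a direct attempt reduces to proving a correlation inequality between $M$ and the left eigenvector with respect to $\bm\phi$-weights, which is awkward. The convexity property is what circumvents this: paired with the essentially free upper bound $\lambda(\mu)\le\lambda(0)$ it delivers monotonicity without any delicate estimate on the eigenvectors, and it is also the mechanism behind the strictness dichotomy via analytic continuation.
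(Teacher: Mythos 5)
Your overall architecture is sound, and most of the individual steps are correct: the Collatz--Wielandt bound $s(\mu A+M)\le \max_i m_i$ obtained by testing with the Perron vector $\bm r$ of $A$, the exact identity $\lambda(\mu)=\bm l^{T}M\bm\phi(\mu)/\bm l^{T}\bm\phi(\mu)$ coming from the left null vector of $A$, the compactness-plus-uniqueness argument giving $\bm\phi(\mu)\to\bm r$ as $\mu\to\infty$, and the analyticity dichotomy for strictness are all fine. (For the record, the paper does not prove this lemma; it quotes it from the two works it cites, so the benchmark is the proof in \cite{altenberg2012resolvent}, which proceeds differently, as noted below.)

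The genuine gap is the convexity step, which carries the entire monotonicity argument (and, through it, the strictness claim). The principle you invoke --- that the spectral bound of an essentially nonnegative matrix is a convex function of entrywise-convex (in particular affine) parameters --- is false. Counterexample: $B(t)=\bigl(\begin{smallmatrix}0 & 1-t\\ t & 0\end{smallmatrix}\bigr)$ is nonnegative and irreducible for $t\in(0,1)$ with entries affine in $t$, yet $s(B(t))=\sqrt{t(1-t)}$ is strictly concave. The actual classical results do not cover your situation: Kingman's theorem needs entrywise \emph{log-convex} dependence (affine functions are not log-convex), and Cohen's theorem gives convexity of $s(B+D)$ only in the \emph{diagonal} matrix $D$, whereas your perturbation direction is $A$ itself, which has off-diagonal entries. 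The convexity of $\mu\mapsto s(\mu A+M)$ that you need is in fact true, but it requires a real argument that exploits $M$ being diagonal; for instance, by the Donsker--Varadhan/Friedland--Karlin characterization $s(B)=\max_{p\in\Delta}\inf_{\bm u\gg \bm 0}\sum_i p_i(B\bm u)_i/u_i$ one gets $s(\mu A+M)=\max_{p\in\Delta}\bigl[\mu\inf_{\bm u\gg\bm 0}\sum_i p_i(A\bm u)_i/u_i+\sum_i p_i m_i\bigr]$, a supremum of functions affine in $\mu$ (the $M$-term separates from the infimum precisely because $M$ is diagonal). Alternatively, the cited proof of Altenberg avoids convexity in $\mu$ altogether: from $(\mu A+M)\bm\phi=\lambda\bm\phi$ one has $A\bm\phi=\mu^{-1}\,\mathrm{diag}(\lambda-m_i)\bm\phi$, hence $\lambda'(\mu)=\mu^{-1}\bigl[\lambda-\sum_i m_i\psi_i\phi_i/\bm\psi^{T}\bm\phi\bigr]$, and Cohen's convexity in the diagonal, applied with $s(\mu A)=\mu\,s(A)=0$, yields the gradient inequality $\sum_i m_i\psi_i\phi_i/\bm\psi^{T}\bm\phi\ge\lambda$, which signs $\lambda'$. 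As written, however, your proof rests on a false general principle, so the monotonicity and strictness assertions are not established.
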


Let  $\bm m=(m_1, m_2, m_3)$ be a real vector. We write $\bm m\gg \bm 0$ if $m_i>0$ for all $i=1, 2, 3$, and $\bm m>\bm 0$ if $\bm m\ge \bm 0$ and $\bm m\neq \bm 0$. 
Matrix $dD+qQ+\text{diag}(m_i)$ is irreducible and essentially nonnegative for any $d, q>0$, where $D$ and $Q$ are defined by \eqref{DQ}. By the Perron-Frobenius Theorem, $s\left(dD+qQ+\text{diag}(m_i)\right)$ is the principal eigenvalue of the following eigenvalue problem:
\begin{equation}\label{eigen}
\ds\sum_{j=1}^{3}(dD_{ij}+qQ_{ij})\phi_j+m_i\phi_i=\la\phi_i, \;\;i=1, 2, 3.
\end{equation}

We need to consider the following single species patch model:
\begin{equation}\label{pat-s}
\begin{cases}
\ds\frac{du_i}{dt}=\ds\sum_{j=1}^{3}(dD_{ij}+qQ_{ij})u_j+r_iu_i\left(1-\frac{u_i}{k_i}\right),&i=1,2,3,\;\;t>0,\\
\bm u(0)=\bm u_0>\bm0.
\end{cases}
\end{equation}
The global dynamics of \eqref{pat-s} is as follows: 
\begin{lemma}\label{DS-single}
Let $D$ and $Q$ be defined in \eqref{DQ}, $\bm r, \bm k\gg\bm 0$,  $d>0$, and $q\ge 0$. Then model \eqref{pat-s} admits a unique positive equilibrium $\bm u^*\gg\bm 0$, which is globally asymptotically stable.
\end{lemma}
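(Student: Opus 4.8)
The plan is to recognize \eqref{pat-s} as a \emph{cooperative} and \emph{strictly sublinear} system and to invoke the standard theory of monotone sub-homogeneous dynamical systems, for which the global dynamics is governed entirely by the stability of the trivial equilibrium $\bm 0$. Writing the right-hand side of \eqref{pat-s} as $\bm F(\bm u)$, the off-diagonal entries of its Jacobian are $dD_{ij}+qQ_{ij}\ge 0$ for $i\neq j$, so $\bm F$ is cooperative, and since the movement network of Fig.~\ref{river} is connected the matrix $dD+qQ+\mathrm{diag}(r_i-2r_iu_i/k_i)$ is irreducible; hence the induced semiflow is \emph{strongly monotone} on the nonnegative orthant. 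Moreover each reaction term satisfies $r_i(\alpha u_i)(1-\alpha u_i/k_i)-\alpha r_iu_i(1-u_i/k_i)=r_i\alpha(1-\alpha)u_i^2/k_i>0$ for $\alpha\in(0,1)$ and $u_i>0$, while the linear movement part is homogeneous of degree one, so $\bm F$ is strictly sublinear. This sublinearity already forces the positive equilibrium to be \emph{unique} whenever it exists.

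Next I would establish dissipativity by exhibiting an invariant absorbing box. Every column of $D$ and of $Q$ sums to zero, so the movement terms conserve total mass; a direct check on the face $u_i=M$ (with $u_j\le M$ for $j\neq i$) shows that the movement contribution to $\dot u_i$ is at most $M$ times the $i$-th row sum of $dD+qQ$, which equals $-q,\,0,\,q$ for $i=1,2,3$. Adding the reaction $r_iM(1-M/k_i)$, which is strongly negative once $M>k_i$, one sees that for $M$ large enough (e.g.\ $M>\max_i k_i(1+q/r_i)$) the vector field points strictly inward on each such face. Hence $[0,M]^3$ is positively invariant and absorbing, the semiflow is point dissipative, and every solution with $\bm u_0>\bm 0$ stays bounded and eventually enters the interior of this box.

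The crucial quantitative step is to show that $\bm 0$ is unstable, i.e.\ $s\big(dD+qQ+\mathrm{diag}(r_i)\big)>0$. Because each column of $dD+qQ$ sums to zero, the positive row vector $(1,1,1)$ is a left eigenvector of $dD+qQ$ for the eigenvalue $0$; since $dD+qQ$ is irreducible and essentially nonnegative, Perron--Frobenius identifies $0$ as its principal eigenvalue, so $s(dD+qQ)=0$. I would then apply Lemma~\ref{theorem_quasi} with $A=dD+qQ$ and $M=\mathrm{diag}(r_i)$: the map $\mu\mapsto s\big(\mu(dD+qQ)+\mathrm{diag}(r_i)\big)$ is nonincreasing with limits $\max_i r_i>0$ as $\mu\to 0^+$ and $\sum_i\theta_i r_i>0$ as $\mu\to\infty$, so its value at $\mu=1$ is bounded below by $\sum_i\theta_i r_i>0$. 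Equivalently, monotonicity of the spectral bound under the entrywise order gives $s\big(A+\mathrm{diag}(r_i)\big)\ge s(A+(\min_i r_i)I)=s(A)+\min_i r_i>0$. Thus the principal eigenvalue of the linearization at $\bm 0$ is positive.

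Finally, with strong monotonicity, strict sublinearity, dissipativity, and instability of $\bm 0$ all in hand, the general dichotomy for monotone strictly sublinear systems yields a unique positive equilibrium $\bm u^*$ that is globally asymptotically stable among all solutions with $\bm u_0>\bm 0$; irreducibility forces $\bm u^*\gg\bm 0$, since a nonnegative nonzero equilibrium of a strongly monotone flow cannot have a vanishing component. The main obstacle I anticipate is not existence or local stability but the \emph{global} convergence: ruling out nontrivial $\omega$-limit sets uses the sublinear structure in an essential way, so the real work lies in verifying the abstract hypotheses (strong monotonicity via irreducibility, strict sublinearity, and dissipativity) carefully enough to quote the theorem rather than re-deriving convergence by hand. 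A self-contained alternative would be to combine the order-preserving property with the sub-solution $\varepsilon\bm\phi$ (where $\bm\phi\gg\bm 0$ is the principal eigenvector at $\bm 0$ and $\varepsilon$ is small) and the constant super-solution $M\bm 1$, and then squeeze the solution between the resulting monotone iterations.
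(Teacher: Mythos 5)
Your proposal is correct, and its overall architecture matches the paper's: both arguments reduce the lemma to the instability of $\bm 0$ by appealing to the theory of strongly monotone, strictly sublinear systems, and both exploit the fact that the columns of $D$ and $Q$ sum to zero. The differences are in how much is verified and in the key spectral step. The paper delegates the reduction entirely to citations and spends its effort on a one-line proof of $s\left(dD+qQ+\mathrm{diag}(r_i)\right)>0$: multiplying the eigenvalue equation on the left by $(1,1,1)$ annihilates the movement terms and yields the explicit identity $s=\sum_{i=1}^3 r_i\phi_i>0$, where $\bm\phi\gg\bm 0$ is the normalized principal eigenvector. You instead verify the hypotheses of the abstract theorem by hand (cooperativity, irreducibility, the sublinearity computation $r_i\alpha(1-\alpha)u_i^2/k_i>0$, and dissipativity via the invariant box $[0,M]^3$ with $M>\max_i k_i(1+q/r_i)$ — your row-sum computation $(-q,0,q)$ is correct), and you prove the spectral inequality in two steps: first $s(dD+qQ)=0$ from the positive left eigenvector $(1,1,1)$ and Perron--Frobenius, then either Lemma \ref{theorem_quasi} (whose hypothesis $s(A)=0$ you have just checked) or entrywise monotonicity of the spectral bound to get $s\left(dD+qQ+\mathrm{diag}(r_i)\right)\ge\min_i r_i>0$. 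Both routes are sound; the paper's is shorter and produces the explicit formula for $s$, which is the kind of identity reused later (e.g., in \eqref{limila}), while yours is more modular and self-contained — it applies verbatim to any irreducible, essentially nonnegative movement matrix with zero column sums without ever touching the eigenvector of the perturbed matrix, and your closing remark about squeezing between $\varepsilon\bm\phi$ and $M\bm 1$ correctly identifies how one would make the global convergence elementary if one did not wish to quote the abstract dichotomy.
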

\begin{proof}
By \cite{cosner1996variability,li2010global,Lu1993},  it suffices to show that $\bm 0$ is unstable, i.e. 
$$
s:=s\left(dD+qQ+\text{diag}(r_i)\right)>0.
$$
Let $\phi^T=(\phi_1, \phi_2, \phi_3)^T\gg\bm 0$ with  $\sum_{i=1}^3\phi_i=1$ be the positive eigenvector of  $dD+qQ+\text{diag}(r_i)$ corresponding to $s$. Multiplying $(1, 1, 1)$ to the left of $dD\phi+qQ\phi+\text{diag}({r_i})\phi=s\phi$, we get $s=\sum_{i=1}^3{{r_i\phi_i}}>0$. This proves the result. 
\end{proof}

By Lemma \ref{DS-single}, model \eqref{3p} has two semitrivial equilibria $(\bm u^*, \bm 0)$ and $(\bm 0, \bm v^*)$, where $\bm u^* (\text{resp., }\bm v^*)\gg \bm 0$ is the positive equilibrium of \eqref{pat-s} with $(d, q)$ replaced by $(d_1, q_1)$ (resp., $(d_2, q_2)$). Linearizing model \eqref{3p} at  $(\bm u^*,\bm 0)$, we can easily see that its stability is determined by the sign of  $\la_1\left(d_2,q_2,\bm {1}-{\bm u^*}/{\bm k}\right)$, which is the principal eigenvalue of  the following eigenvalue problem:
\begin{equation}\label{3eigen}
\begin{cases}
\la\phi_1=-(d_2+q_2)\phi_1+d_2\phi_2+r_1\left(1-\ds\frac{u^*_1}{k_1}\right)\phi_1,\\
\la\phi_2=(d_2+q_2)\phi_1-(2d_2+q_2)\phi_2+d_2\phi_3+r_2\left(1-\ds\frac{u^*_2}{k_2}\right)\phi_2,\\
\la\phi_3=(d_2+q_2)\phi_2-d_2\phi_3+r_3\left(1-\ds\frac{u^*_3}{k_3}\right)\phi_3.
\end{cases}
\end{equation}
In particular, $(\bm u^*,\bm 0)$ is locally asymptotically stable if $\la_1\left(d_2,q_2,\bm {1}-{\bm u^*}/{\bm k}\right)<0$ and  unstable if $\la_1\left(d_2,q_2,\bm {1}-{\bm u^*}/{\bm k}\right)>0$. Here, we abuse the notation by denoting $\bm {1}-{\bm u^*}/{\bm k}:=(1-u_1^*/k_1, 1-u_2^*/k_2, 1-u_3^*/k_3)$.

\section{Main results}

We fix $d_1, q_1>0, \bm r, \bm k\gg\bm 0$ and view species $\bm u$ as the resident species and $\bm v$ as the invading/mutating species.   We investigate the dynamics of  model \eqref{3p} varying $(d_2, q_2)$. For this purpose, we divide the first quadrant of the $(d, q)$-plane into six regions:
\begin{equation}\label{GGG}
\begin{split}
G_{11}:=&\{(d,q):\; d\ge d_1,q\ge  \frac{q_1}{d_1}d,(d,q)\ne(d_1,q_1)\},\\
G_{12}:=&\{(d,q):\; 0<d<d_1,q\ge q_1\},\\
G_{13}:=&\{(d, q):\; d\ge d_1,q_1\le q< \frac{q_1}{d_1}d,(d,q)\ne(d_1,q_1)\},\\
G_{21}:=&\{(d,q):\; 0<d \le d_1,0<q\le \ds\f{q_1}{d_1}d,(d,q)\ne(d_1,q_1)\},\\
G_{22}:=&\{(d,q):\; d>d_1,0<q\le q_1\},\\
G_{23}:=&\{(d,q):\; 0<d\le d_1,\f{q_1}{d_1}d<q\le q_1,(d,q)\ne(d_1,q_1)\}.
\end{split}
\end{equation}
\begin{figure}[ht]
\centering\includegraphics[width=0.6\textwidth]{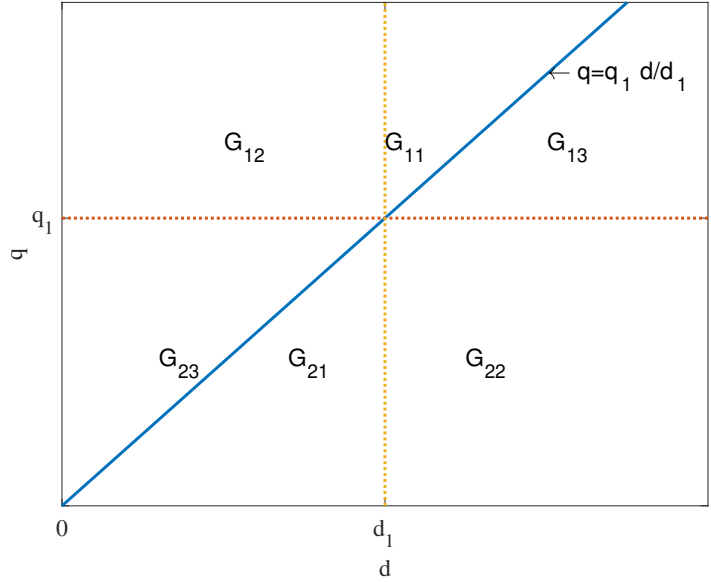}\\
\caption{Illustration of the six regions of $(d, q)$-plane.}
\label{fig1}
\end{figure}
For readers' convenience, we graph the six regions in Fig. \ref{fig1}. 

\subsection{Invasion curve}
 We consider the local stability of $(\bm u^*,\bm 0)$ in this subsection. Biologically, if $(\bm u^*,\bm 0)$ is stable, then a small amount of species $\bm v$ cannot  invade species $\bm u$; if $(\bm u^*,\bm 0)$ is unstable, then a small amount of species $\bm v$ may be able to invade species $\bm u$. We prove that there exists a curve $q=q_{\bm u}^*(d)$ in the $(d, q)-$plane such that $(\bm u^*,\bm 0)$ is locally asymptotically stable if $(d_2, q_2)$ is above the curve and $(\bm u^*,\bm 0)$ is unstable if it is below the curve. To this end, we define 
\begin{equation}\label{dstar}
d^*=\begin{cases}
\infty,&\text{if}\;\;\sum_{i=1}^3 r_i\left(1-\frac{u_i^*}{k_i}\right)\ge 0,\\
d_0,&\text{if}\;\; \sum_{i=1}^3 r_i\left(1-\frac{u_i^*}{k_i}\right)< 0,
\end{cases}
\end{equation}
where $d=d_0>0$ is the unique root of $\la_1(d, 0, \bm 1-\bm {u^*/k})=0$ if  $\sum_{i=1}^3 r_i\left(1-{u_i^*}/{k_i}\right)< 0$ (see the existence of $d_0$ in  Lemma \ref{eigenv}).
We have the following result about the local stability/instability of the semitrivial equilibrium $(\bm u^*,\bm 0)$:
\begin{theorem}\label{inva}
Suppose that $(\bf {H})$ holds, $\bm r\gg\bm 0$, and $d_1,q_1>0$. Then there exists a continuous function $q=q_{\bm u}^*(d): (0, d^*)\to \mathbb{R}_+$ passing through $(d_1, q_1)$ such that the following statements hold for model \eqref{3p}:
\begin{enumerate}
\item [{$\rm (i)$}] If $(d_2, q_2)\in S_1$, then the semitrivial equilibrium $(\bm u^*,\bm 0)$  is locally asymptotically stable;
\item [{$\rm (ii)$}] If $(d_2, q_2)\in S_2$, then the semitrivial equilibrium $(\bm u^*,\bm 0)$  is unstable.
\end{enumerate}
Here, $S_1\cup S_2$ is a partition of the first quadrant of the $(d, q)$-plane defined as follows:
\begin{equation}\label{S}
\begin{split}
S_1:=&\left\{(d,q):\; 0<d<d^*,\;\;q>q_{\bm u}^*(d)\right\}\cup S_1^*,\\
S_2:=&\left\{(d,q):\;0<d<d^*,\;0<q<q_{\bm u}^*(d)\right\},
\end{split}
\end{equation}
where
\begin{equation}
    S^*_1=\begin{cases}
    \left\{(d,q):d\ge d^*,\;\;q>0\right\}, &\;\;\text{if}\;\;d^*\ne\infty,\\
    \emptyset, &\;\;\text{if}\;\;d^*=\infty.\\
    \end{cases}
\end{equation}
\end{theorem}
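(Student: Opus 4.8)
The plan is to reduce the whole statement to the sign of the single quantity
$\la_1(d,q,\bm 1-\bm u^*/\bm k)$, the principal eigenvalue of
$dD+qQ+\mathrm{diag}\!\left(r_i(1-u_i^*/k_i)\right)$ defined through \eqref{3eigen}, which I abbreviate by $\Lambda(d,q)$. By the linearization computed before the theorem, $(\bm u^*,\bm 0)$ is locally asymptotically stable when $\Lambda(d,q)<0$ and unstable when $\Lambda(d,q)>0$, so the sought curve is exactly the zero level set $\{\Lambda=0\}$. The first observation is that this set passes through $(d_1,q_1)$: since $\bm u^*\gg\bm 0$ is the positive equilibrium of \eqref{pat-s} with $(d,q)=(d_1,q_1)$, it satisfies $\left(d_1D+q_1Q+\mathrm{diag}(r_i(1-u_i^*/k_i))\right)\bm u^*=\bm 0$, so by the Perron--Frobenius theorem $\bm u^*$ is its principal eigenvector and $\Lambda(d_1,q_1)=0$.

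Two structural tools drive the argument. First, for every $d,q>0$ the matrix $dD+qQ$ is conjugate, via $W=\mathrm{diag}(1,\rho,\rho^2)$ with $\rho=\sqrt{(d+q)/d}$, to a symmetric tridiagonal matrix; hence $\Lambda$ is a simple real eigenvalue whose right and left principal eigenvectors are $\phi=W\widetilde\phi\gg\bm0$ and $\psi=W^{-1}\widetilde\phi\gg\bm0$, where $\widetilde\phi$ is the Perron vector of the symmetrized matrix. Second, along any ray $q=cd$ one has $dD+qQ=d(D+cQ)$ with $s(D+cQ)=0$ (both $D$ and $Q$ have zero column sums), so Lemma \ref{theorem_quasi} applies with $A=D+cQ$, $\mu=d$ and shows $\Lambda$ is strictly decreasing along each ray; in particular the $q=0$ slice ($A=D$, $\theta_i=\frac13$) gives that $\Lambda(d,0)$ decreases from $\max_i r_i(1-u_i^*/k_i)$ to $\frac13\sum_i r_i(1-u_i^*/k_i)$. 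This identifies $d^*$ as the threshold in \eqref{dstar}, and a Schur-complement computation yields $\lim_{q\to\infty}\Lambda(d,q)=r_3(1-u_3^*/k_3)$ for fixed $d$.

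The heart of the proof is to show the zero set is a graph over $d$ with the stable region above it, for which it suffices to prove $\partial_q\Lambda<0$ at every point of $\{\Lambda=0\}$. From the perturbation formula $\partial_q\Lambda=\psi^TQ\phi/(\psi^T\phi)$ and the explicit form of $Q$ one gets $\psi^TQ\phi=\phi_1(\psi_2-\psi_1)+\phi_2(\psi_3-\psi_2)$, so the claim is that the left principal eigenvector is strictly decreasing from upstream to downstream. Writing $P=(d+q)-r_1(1-u_1^*/k_1)$ and $R=d-r_3(1-u_3^*/k_3)$ and using the three rows of $\left(dD+qQ+\mathrm{diag}(r_i(1-u_i^*/k_i))\right)\phi=\bm0$, this reduces to the sign condition $-r_1(1-u_1^*/k_1)+P^2 r_3(1-u_3^*/k_3)/(dR)<0$, where $P,R>0$ at a zero. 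I expect the main obstacle to be exactly here: the condition holds once the resident equilibrium satisfies $1-u_1^*/k_1>0>1-u_3^*/k_3$, i.e. the resident sits below capacity upstream and above capacity downstream. Establishing this sign pattern is the key use of $(\mathbf H)$ together with the downstream bias of $Q$, and I would prove it as a separate monotonicity lemma for \eqref{pat-s}, exploiting the identity $\sum_i r_i u_i^*(1-u_i^*/k_i)=0$ (obtained by summing the equilibrium equations, since $D,Q$ have zero column sums) to force a sign change among the $1-u_i^*/k_i$.

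With $\partial_q\Lambda<0$ on $\{\Lambda=0\}$ in hand, the assembly is routine. For each $d\in(0,d^*)$ we have $\Lambda(d,0)>0$ while $\lim_{q\to\infty}\Lambda(d,q)=r_3(1-u_3^*/k_3)<0$, so the intermediate value theorem yields a zero, and strict transversality makes it unique; this defines $q=q_{\bm u}^*(d)$, continuous (indeed smooth) by the implicit function theorem and passing through $(d_1,q_1)$. The same transversality forces a single downward crossing in $q$, giving $\Lambda<0$ on $S_1$ and $\Lambda>0$ on $S_2$. Finally, when $d^*<\infty$ and $d\ge d^*$ we have $\Lambda(d,0)\le0$; were $\Lambda(d,q_\sharp)\ge0$ for some $q_\sharp>0$ there would be an upward zero crossing, contradicting $\partial_q\Lambda<0$ at zeros, so $\Lambda<0$ for all $q>0$, which accounts for $S_1^*$. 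This completes the partition claimed in \eqref{S}.
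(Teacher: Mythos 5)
Your overall architecture is the same as the paper's (Lemma \ref{eigenv}): identify the invasion curve with the zero set of $\lambda_1(d,q,\bm 1-\bm u^*/\bm k)$, prove strict transversality $\partial_q\lambda_1<0$ at every zero, obtain the behavior of $\lambda_1(d,0)$ from Lemma \ref{theorem_quasi} along the ray $q=0$ and the limit $\lim_{q\to\infty}\lambda_1(d,q)=r_3\left(1-u_3^*/k_3\right)$, and assemble via the intermediate value theorem and the implicit function theorem, with $d^*$ read off from \eqref{dstar}. Even your transversality computation is the paper's in disguise: the weights $\left(1,\ d/(d+q),\ d^2/(d+q)^2\right)$ the paper multiplies against are exactly the diagonal of $W^{-2}$, so the paper's weighted right eigenvector is precisely your left principal eigenvector, and both computations reduce $\partial_q\lambda_1<0$ at a zero to the sign pattern $1-u_1^*/k_1>0>1-u_3^*/k_3$.

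The genuine gap sits exactly at the point you flagged as the main obstacle: that sign pattern (Lemma \ref{uv} (i)--(ii) in the paper), and the proof you sketch for it would fail. The identity $\sum_i r_iu_i^*\left(1-u_i^*/k_i\right)=0$ is just the telescoping sum $\sum_j\left(\tilde f_j-\tilde f_{j-1}\right)=0$ of the fluxes $\tilde f_j=d_1u_{j+1}^*-(d_1+q_1)u_j^*$; it is completely symmetric between the pattern you need and its reverse $1-u_1^*/k_1<0<1-u_3^*/k_3$, so it can only prevent all three quantities $1-u_i^*/k_i$ from sharing one strict sign and gives no information about \emph{which} patch is undersaturated. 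The correct argument is a flux contradiction using $(\bf H)$: from the first and third equations of \eqref{3TS-1}, $r_1u_1^*\left(1-u_1^*/k_1\right)=-\tilde f_1$ and $r_3u_3^*\left(1-u_3^*/k_3\right)=\tilde f_2$; if $\tilde f_1\ge 0$ then $u_2^*\ge\left(1+q_1/d_1\right)u_1^*>u_1^*\ge k_1>k_2$, so the middle equation forces $\tilde f_2>\tilde f_1\ge 0$, which simultaneously yields $u_3^*<k_3$ (third equation) and $u_3^*>u_2^*>k_3$ (from $\tilde f_2>0$ and $(\bf H)$), a contradiction; a similar argument gives $\tilde f_2<0$, and then $u_1^*<k_1$, $u_3^*>k_3$ follow. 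Note that this lemma is needed in three separate places in your proof, not just one: for $\partial_q\lambda_1<0$ at zeros, for $\max_i r_i\left(1-u_i^*/k_i\right)>0$ (so that $\lambda_1(d,0)>0$ for small $d$, without which $d^*$ and the IVT argument collapse), and for $\lim_{q\to\infty}\lambda_1(d,q)=r_3\left(1-u_3^*/k_3\right)<0$. Once Lemma \ref{uv} (i)--(ii) is supplied, the rest of your argument is sound and coincides with the paper's.
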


\begin{remark}
We call the curve in the first quadrant of  $(d, q)$-plane defined by the function $q=q_{\bm u}^*(d)$ in Theorem \ref{inva} the \emph{invasion curve}. This curve consists with all the points $(d, q_{\bm u}^*(d))$ such that $\lambda_1(d, q_{\bm u}^*(d), \bm {1}-{\bm u^*}/{\bm k})=0$, i.e., $(\bm u^*,\bm 0)$  is linearly neutrally stable.   The invasion curve divides the first quadrant into $S_1\cup S_2$, where $S_1$ is the region above the curve and $S_2$ is the region below it. By Theorem \ref{inva}, $(\bm u^*,\bm 0)$  is locally asymptotically stable in $S_1$ and unstable in $S_2$.
\end{remark}

In the following of this paper, we denote
\begin{equation}\label{larsdr}
\begin{split}
&\underline q:=\min\left\{\ds\f{r_1}{k_1}(k_1-k_2),\ds\f{r_3}{k_3}(k_2-k_3)\right\},\\
&\overline q:=\max\left\{\ds\f{r_1}{k_1}(k_1-k_2),\ds\f{r_3}{k_3}(k_2-k_3)\right\}.
\end{split}
\end{equation}
We take $\underline q$ and $\overline q$ as the threshold values for the drift rates. Specifically, if a drift rate is below  $\underline q$ (above $\overline q$), we call it a \emph{slow (large) drift}; if a drift rate is between $\underline q$ and $\overline q$, we call it an \emph{intermediate drift}. These definitions coincide with those in \cite{Jiang-Lam-Lou2021} if $r_1=r_2=r_3$. It turns out that the magnitude of  drift rate $q_1$ will have a large impact on the shape of the invasion curve and the dynamics of the model. 

We have the following result about the invasion curve:
\begin{proposition}\label{locs}
Suppose that $(\bf {H})$ holds, $\bm r\gg\bm 0$, and $d_1,q_1>0$. Let $S_1$ and $S_2$ be defined in Theorem \ref{inva}. Then the following statements hold:
\begin{enumerate}
    \item [${\rm (i)}$]  $ G_{11}\subset S_1$ and $G_{21}\subset S_2$;
     \item [${\rm (ii)}$] If $q_1>\overline q$, then $G_{12}\subset S_1$ and
     $G_{22}\subset S_2$;
     \item [${\rm (iii)}$] If $q_1<\underline q$, then $G_{13}\subset S_1$
     and $G_{23}\subset S_2$
\end{enumerate}
\end{proposition}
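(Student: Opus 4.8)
The plan is to analyze the sign of $\lambda_1(d,q,\bm 1 - \bm{u^*}/\bm k)$ on each region $G_{ij}$ by comparing it to the reference point $(d_1,q_1)$, at which we know the principal eigenvalue vanishes. The key observation is that $(\bm u^*, \bm 0)$ is neutrally stable at $(d_2,q_2)=(d_1,q_1)$, since in that case $\bm v^* = \bm u^*$ and the linearization shares the zero principal eigenvalue; thus $\lambda_1(d_1,q_1,\bm 1 - \bm{u^*}/\bm k)=0$ with positive principal eigenvector $\bm u^*$. For part $\mathrm{(i)}$, I would exploit a monotonicity of $\lambda_1$ in the parameters $(d,q)$ along appropriate directions. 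Writing the eigenvalue matrix as $dD + qQ + \mathrm{diag}\bigl(r_i(1-u_i^*/k_i)\bigr)$, I want to show that moving into $G_{11}$ (increasing both $d$ and $q/d$ past their values at $(d_1,q_1)$) strictly decreases $\lambda_1$ below $0$, giving stability, while moving into $G_{21}$ strictly increases it above $0$, giving instability. The natural tool is Lemma~\ref{theorem_quasi}: after normalizing so that the essentially-nonnegative irreducible part has spectral bound zero, the spectral bound is monotone decreasing in the scaling parameter $\mu$, strictly unless the diagonal entries are all equal.

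\medskip

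\noindent The concrete mechanism I would use is a substitution that reveals how $dD+qQ$ depends on $(d,q)$. Note that $D$ and $Q$ are related: one can check that the drift-plus-diffusion operator can be rewritten after a diagonal (gauge) transformation $\phi_i \mapsto \rho^{i}\phi_i$ for a suitable $\rho$ depending on the ratio $q/d$, which symmetrizes the movement matrix and exposes $\lambda_1$ as a spectral bound of $\mu A + \mathrm{diag}(m_i)$ with $A$ essentially nonnegative, $s(A)=0$, and $\mu$ proportional to $d$. Then along rays $q/d = $ const through the origin, Lemma~\ref{theorem_quasi} gives strict monotone decrease of $\lambda_1$ in $d$; and varying the ratio $q/d$ at fixed $d$ shifts the diagonal $m_i = r_i(1-u_i^*/k_i)$ weights. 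Combining the ray-monotonicity in $d$ with a separate comparison in the drift direction should pin down the sign on $G_{11}$ and $G_{21}$, since those two regions are precisely where both coordinates move consistently (up/up versus down/down) relative to $(d_1,q_1)$.

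\medskip

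\noindent For parts $\mathrm{(ii)}$ and $\mathrm{(iii)}$, the regions $G_{12}, G_{22}, G_{13}, G_{23}$ are the ``mixed'' regions where $d$ and $q/d$ move in opposite senses, so pure monotonicity does not settle the sign and the hypotheses $q_1>\overline q$ or $q_1<\underline q$ must enter. Here I would use the explicit limiting formulas from Lemma~\ref{theorem_quasi}: $\lim_{\mu\to 0} s = \max_i m_i$ and $\lim_{\mu\to\infty} s = \sum_i \theta_i m_i$, together with the sign structure of $m_i = r_i(1-u_i^*/k_i)$ forced by hypothesis $(\bf H)$ and the definitions of $\underline q, \overline q$ in \eqref{larsdr}. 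The thresholds $\frac{r_1}{k_1}(k_1-k_2)$ and $\frac{r_3}{k_3}(k_2-k_3)$ are exactly the boundary-patch balances between growth deficit and drift, so the conditions $q_1 \gtrless \overline q, \underline q$ control whether the upstream-depletion effect dominates; I would extract the sign of $\lambda_1$ on the mixed regions by evaluating it along the region boundaries (the segments $q=q_1$ and $d=d_1$ and the ray $q=\frac{q_1}{d_1}d$) and using continuity together with the fact that the invasion curve passes through $(d_1,q_1)$ and cannot re-cross into these regions under the stated drift assumptions.

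\medskip

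\noindent The main obstacle I anticipate is the mixed regions in $\mathrm{(ii)}$ and $\mathrm{(iii)}$: there the two effects compete, monotonicity alone is insufficient, and I will need a sharp quantitative use of the drift threshold—most likely a direct estimate of $\lambda_1$ on the relevant boundary segment showing it has a definite sign under $q_1>\overline q$ (resp.\ $q_1<\underline q$), and then a monotonicity argument to propagate that sign into the interior of $G_{12}$ (resp.\ $G_{13}$) and its downstream counterpart. Establishing that the gauge transformation cleanly reduces $\lambda_1$ to the form required by Lemma~\ref{theorem_quasi}, and tracking how the transformation parameter $\rho$ couples to the diagonal $m_i$, is the delicate computational core on which the whole sign analysis rests.
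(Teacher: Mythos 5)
Your skeleton for part (i) is workable, with one simplification and one missing ingredient. The gauge transformation you describe is unnecessary: along a ray $q=cd$ the matrix is $d(D+cQ)+\mathrm{diag}(m_i)$, and $D+cQ$ is already irreducible, essentially nonnegative, with zero column sums, hence $s(D+cQ)=0$; Lemma \ref{theorem_quasi} applies directly. The missing ingredient is your ``separate comparison in the drift direction'': monotonicity of $\lambda_1$ in $q$ at fixed $d$ is \emph{not} a consequence of Lemma \ref{theorem_quasi} (the perturbation matrix $Q$ is reducible, and the remaining part $dD+\mathrm{diag}(m_i)$ is not diagonal), and it is not obviously true globally. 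What the paper actually proves (Claim 1 in the proof of Lemma \ref{eigenv}) is only that $\partial_q\lambda_1<0$ \emph{at any zero} of $\lambda_1$, via the weighted identity with weights $\bigl(1, d/(d+q), d^2/(d+q)^2\bigr)$ and the facts $u_1^*<k_1$, $u_3^*>k_3$ from Lemma \ref{uv}(ii). You would need that computation (or an equivalent non-vanishing statement plus an intermediate-value argument) to finish (i); as stated, this step is asserted rather than proved.

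The fatal gap is in parts (ii) and (iii). Your phrase that the invasion curve ``cannot re-cross into these regions under the stated drift assumptions'' is exactly the statement $\lambda_1\bigl(d_2,q_2,\bm 1-\bm u^*/\bm k\bigr)\neq 0$ on $G_{12}\cup G_{22}$ when $q_1>\overline q$ and on $G_{13}\cup G_{23}$ when $q_1<\underline q$ — this is the paper's Lemma \ref{ne0la}, and it is the technical heart of the whole proposition, not a consequence of continuity or of boundary evaluation. Moreover, the mechanism by which the thresholds \eqref{larsdr} enter is not the sign pattern of $m_i=r_i(1-u_i^*/k_i)$ (one always has $m_1>0>m_3$ by Lemma \ref{uv}(ii), independently of $q_1$) nor the limiting formulas of Lemma \ref{theorem_quasi} (which control $\lambda_1$ only near $d\to 0$ and $d\to\infty$ along rays, saying nothing about the bounded mixed regions). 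Rather, $q_1\gtrless \overline q,\underline q$ forces the \emph{monotonicity of the resident profile}: $u_1^*<u_2^*<u_3^*$ if $q_1>\overline q$ and $u_1^*>u_2^*>u_3^*$ if $q_1<\underline q$ (Lemma \ref{uv}(iii)--(iv)); this is then fed into flux-type identities \eqref{ineq2mo}--\eqref{ineq3mo} comparing a hypothetical zero eigenvector $\bm\phi$ at $(d_2,q_2)$ against $\bm u^*$ at $(d_1,q_1)$, producing a sign contradiction. Nothing in your proposal supplies this step or a substitute for it. (Separately, the paper anchors the actual signs by a homotopy in $\bm k$ to the homogeneous case of \cite{chen2022invasion} for (i)--(ii), and by the derivative estimate \eqref{smaqsign} for (iii); your boundary-evaluation idea could replace these anchors, but only after the non-vanishing lemma is in hand.)
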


 We explore further properties of the invasion
curve:
\begin{proposition}\label{profi}
Suppose that $(\bf {H})$ holds, $\bm r\gg\bm 0$, and $d_1,q_1>0$. Let $q=q_{\bm u}^*(d): (0, d^*)\to \mathbb{R}_+$ be defined in Theorem \ref{inva}. Then the following statements hold:
\begin{enumerate}
    \item [${\rm (i)}$] $\lim_{d\to0}q_{\bm u}^*(d)=q_0$, where
\begin{equation}\label{qinf}
   q_0=\max\left\{r_1\left(1-\frac{u_1^*}{k_1}\right),\ r_2\left(1-\frac{u_2^*}{k_2}\right)\right\};
\end{equation}
\item [${\rm (ii)}$] If $q_1<\underline q$, then
\begin{equation}\label{qinf1}
  d^*=d_0\;\;\text{and}\;\; \lim_{d\to d^*} q_{\bm u}^*(d)=0;
\end{equation}

\item [${\rm (iii)}$] If $q_1>\overline q$, then 
\begin{equation}\label{qinf2}
  d^*=\infty\;\;\text{and}\;\;\lim_{d\to\infty} \ds\frac{q_{\bm u}^*(d)}{d}=\theta\;\;\text{for some}\;\;\theta\in\left(0,\ \frac{q_1}{d_1}\right);
\end{equation}
\item [${\rm (iv)}$] If $\underline q\le q\le \overline q$, then \eqref{qinf1} holds when $\sum_{i=1}^3 r_i\left(1-{u_i^*}/{k_i}\right)< 0$, and 
\eqref{qinf2} holds when $\sum_{i=1}^3 r_i\left(1-{u_i^*}/{k_i}\right)\ge 0$.
\end{enumerate}
\end{proposition}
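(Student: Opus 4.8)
The plan is to analyze every part through the principal eigenvalue $\lambda_1(d,q,\bm m):=s(dD+qQ+\mathrm{diag}(m_i))$ of \eqref{3eigen}, where $m_i:=r_i(1-u_i^*/k_i)$; by Theorem \ref{inva} the invasion curve is the zero set $\{\lambda_1(d,q,\bm m)=0\}$, and $\lambda_1$ is jointly continuous and strictly decreasing in $q$. Two structural facts are used throughout. Since $\bm u^*$ solves \eqref{pat-s} at $(d_1,q_1)$, it is a positive null vector of $d_1D+q_1Q+\mathrm{diag}(m_i)$, so $\lambda_1(d_1,q_1,\bm m)=0$; and because $D,Q$ have zero column sums, left–multiplying by $(1,1,1)$ gives $\sum_i m_iu_i^*=0$. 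From the patch-$3$ identity $m_3=d_1-(d_1+q_1)u_2^*/u_3^*$ and the patch-$1$ identity $m_1=(d_1+q_1)-d_1u_2^*/u_1^*$, a short case analysis using $(\mathbf H)$ and $\sum_i m_iu_i^*=0$ yields the sign pattern $m_1>0>m_3$ for all $d_1,q_1>0$ (e.g.\ $m_1\le0$ forces $u_2^*>u_1^*\ge k_1>k_2$, hence $m_2<0$, contradicting $\sum_i m_iu_i^*=0$). In particular $\max\{m_1,m_2\}>0$ and $\max_i m_i=\max\{m_1,m_2\}=q_0$.

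For (i) I pass to $d=0$, where $qQ+\mathrm{diag}(m_i)$ is lower triangular with diagonal entries $m_1-q,\,m_2-q,\,m_3$, so $\lambda_1(0,q,\bm m)=\max\{m_1-q,\,m_2-q,\,m_3\}$. Since $m_3<0<\max\{m_1,m_2\}$, this is positive for $q<q_0$, zero at $q=q_0$, and negative for $q>q_0$; joint continuity and strict $q$-monotonicity of $\lambda_1$ then force $q_{\bm u}^*(d)\to q_0$ as $d\to0$.

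The behaviour as $d$ reaches $d^*$ is dictated by the sign of $\sum_i m_i$, which is exactly the dichotomy in \eqref{dstar}. Lemma \ref{theorem_quasi} with $A=D$ (uniform null vector) shows $d\mapsto\lambda_1(d,0,\bm m)$ decreases strictly from $q_0>0$ to $\tfrac13\sum_i m_i$. Hence if $\sum_i m_i<0$ there is a unique $d_0$ with $\lambda_1(d_0,0,\bm m)=0$, so $d^*=d_0$ and the curve meets the $d$-axis there, giving $\lim_{d\to d^*}q_{\bm u}^*(d)=0$; this is (ii) and the first case of (iv). If $\sum_i m_i\ge0$ then $d^*=\infty$, and I rescale $q=\theta d$: applying Lemma \ref{theorem_quasi} with $A=D+\theta Q$ (zero column sums, null vector $\propto(1,1+\theta,(1+\theta)^2)$) yields, as $d\to\infty$,
\[
\lambda_1(d,\theta d,\bm m)\to g(\theta):=\frac{m_1+(1+\theta)m_2+(1+\theta)^2m_3}{1+(1+\theta)+(1+\theta)^2}.
\]
Thus the limiting slope must solve $g(\theta)=0$, i.e.\ $P(1+\theta)=0$ with $P(s):=m_3s^2+m_2s+m_1$; since $g$ crosses zero monotonically there, a squeeze in $\theta$ (using strict $q$-monotonicity of $\lambda_1$) upgrades this to genuine convergence $q_{\bm u}^*(d)/d\to\theta^*$, proving (iii) and the second case of (iv) once $\theta^*\in(0,q_1/d_1)$ is established.

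The step I expect to be the crux is the two-sided bound $\theta^*\in(0,q_1/d_1)$. Positivity follows from $g(0)=\tfrac13\sum_i m_i\ge0$ and $g(\theta)\to m_3<0$, the root being strictly positive once $\sum_i m_i>0$. Since $P$ is a downward parabola ($m_3<0$) with $P(1)=\sum_i m_i\ge0$, the upper bound $\theta^*<q_1/d_1$ is equivalent to $P(\rho)<0$ at $\rho:=(d_1+q_1)/d_1$, i.e.\ to $\sum_i m_iw_i<0$ where $\bm w=(1,\rho,\rho^2)$ is the null vector of $d_1D+q_1Q$. To obtain this sign I will bring in the positive left null vector $\bm\psi$ of $d_1D+q_1Q+\mathrm{diag}(m_i)$: combining $(d_1D+q_1Q)\bm w=0$ with $\bm\psi^{\top}(d_1D+q_1Q)=-\bm\psi^{\top}\mathrm{diag}(m_i)$ gives the orthogonality identity $\sum_i\psi_iw_im_i=0$. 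The adjoint relations yield $\psi_1>\psi_2>\psi_3$ (using $m_1>0>m_3$), so the weights $1/\psi_i$ are increasing while $w_im_i$ has a single sign change from $+$ to $-$; writing $\sum_i w_im_i=\sum_i(\psi_iw_im_i)(1/\psi_i-1/\psi_2)$ and comparing against the zero-sum identity then forces $\sum_i w_im_i<0$. The remaining inputs are the implications $q_1<\underline q\Rightarrow\sum_i m_i<0$ and $q_1>\overline q\Rightarrow\sum_i m_i>0$, which select the correct alternative in (ii)–(iv) and supply the strictness needed for $\theta^*>0$; these I take from the threshold analysis relating $\underline q$, $\overline q$, and $q_0$ carried out in the appendix.
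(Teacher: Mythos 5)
Your proposal is correct, and in the decisive parts it takes a genuinely different route from the paper. For (i)--(ii) the paper extracts convergent subsequences of normalized eigenvectors and passes to the limit in the eigenvalue system, whereas you exploit the explicit structure of the limiting operators: at $d=0$ the matrix $qQ+\mathrm{diag}(m_i)$ is triangular, so $\lambda_1(0,q,\bm m)=\max\{m_1-q,m_2-q,m_3\}$, and a squeeze using joint continuity plus the sign characterization of $\lambda_1$ across the curve gives $q_{\bm u}^*(d)\to q_0$ without subsequences; the same mechanism handles the endpoint $d_0$. The substantive divergence is in (iii)/(iv): the paper proves only boundedness of $q_{\bm u}^*(d)/d$ and then defers both the existence of the limiting slope and the bound $\theta\in(0,q_1/d_1)$ to \cite{chen2022invasion}, while you give a complete in-paper argument --- rescaling along rays $q=\theta d$, applying Lemma \ref{theorem_quasi} to $A=D+\theta Q$ (whose positive null vector is indeed proportional to $(1,1+\theta,(1+\theta)^2)$) to get $g(\theta)$, squeezing to obtain genuine convergence, and then the duality step: the left null vector $\bm\psi$ of $d_1D+q_1Q+\mathrm{diag}(m_i)$ satisfies $\psi_1>\psi_2>\psi_3$ (columns $1$ and $3$ of the adjoint equation, using $m_1>0>m_3$) and $\sum_i\psi_iw_im_i=0$ with $\bm w=(1,\rho,\rho^2)$, $\rho=(d_1+q_1)/d_1$, whence $\sum_iw_im_i=\sum_i(\psi_iw_im_i)\left(1/\psi_i-1/\psi_2\right)<0$ since the $i=1,3$ terms are strictly negative and the $i=2$ term vanishes; this gives $P(\rho)<0$ and hence $\theta<q_1/d_1$. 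I checked all of these identities and sign claims; they hold, so your argument buys a self-contained proof where the paper relies on an external citation. Three small repairs for the write-up: (a) you assert global strict monotonicity of $\lambda_1$ in $q$, but the paper only establishes $\partial_q\lambda_1<0$ at zeros (Claim 1 in Lemma \ref{eigenv}); your squeezes need only the sign characterization ($\lambda_1>0$ below the curve, $\lambda_1<0$ above it, and $\lambda_1<0$ for all $q>0$ when $d\ge d_0$), which is exactly what Lemma \ref{eigenv} supplies, so invoke that instead; (b) the inputs $q_1<\underline q\Rightarrow\sum_im_i<0$ and $q_1>\overline q\Rightarrow\sum_im_i>0$ are Lemma \ref{uv} (iii)--(iv), not the appendix (the appendix only relates $q_0$, $\underline q$, $\overline q$); (c) in case (iv) with $\sum_im_i=0$ your construction yields $\theta=0$, which lies outside the open interval claimed in \eqref{qinf2} --- a boundary defect your method makes visible but which is equally unaddressed by the paper's citation-based proof, so it is not a flaw of your approach relative to the paper's.
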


\begin{remark}
By Propositions \ref{locs}-\ref{profi}, the invasion curve lies in $G_{12}\cup G_{22}$ when the drift rate $q_1$ is small, and it lies in $G_{13}\cup G_{23}$ when $q_1$ is large. Moreover, if $q_1$ is small, the invasion curve is defined on a bounded interval $(0, d_0)$; if $q_1$ is large, it is defined on $(0, \infty)$ and has a slant asymptote $q=\theta d$ for some $\theta\in (0, q_1/d_1)$. 
\end{remark}

\subsection{Competitive exclusion}
In this subsection, we study the global dynamics of model \eqref{3p} and find some parameter ranges of  $(d_2, q_2)$ such that competitive exclusion happens. The relations of $\underline q$, $\overline q$ and $q_0$ are implicitly included in the results below. However, for reader's convenience, we include the proof in the appendix. 

Firstly, we consider the small drift case, i.e., $q_1<\underline q$.
\begin{theorem}\label{gdyn}
Suppose that $(\bf {H})$ holds, $\bm r\gg\bm 0$, and $d_1,q_1>0$ with $q_1<\underline q$. Then the following statements hold:
\begin{enumerate}
    \item [${\rm (i)}$] If $(d_2,q_2)\in G_{21}\cup G_{23}$, then the semi-trivial equilibrium $(\bm 0,\bm v^*)$ of \eqref{3p} is globally asymptotically stable;
    \item [${\rm (ii)}$] If $(d_2,q_2)\in G_{11}\cup G_{12}^*\cup G_{13}$,  then
    the semi-trivial equilibrium $(\bm u^*,\bm 0)$ of \eqref{3p} is globally asymptotically stable.
\end{enumerate}
Here, $G_{12}^*$ is defined by 
 \begin{equation}\label{G1213s}
 G^*_{12}=\{(d_2,q_2): (d_2,q_2)\in G_{12},q_2>\overline q\}.
    \end{equation}
\end{theorem}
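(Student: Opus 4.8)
\section*{Proof proposal}

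The plan is to regard \eqref{3p} as a two–species competitive system and to invoke the theory of strongly monotone dynamical systems. First I would verify the monotone structure: in the Jacobian of \eqref{3p} the within–species coupling is governed by $d_iD+q_iQ$, whose off–diagonal entries are nonnegative, while the only inter–species terms are the patchwise competition terms $-r_iu_i/k_i$ and $-r_iv_i/k_i$, which are nonpositive. Hence, after the reflection $\bm v\mapsto-\bm v$, the system is cooperative and, since $D$ renders the dispersal matrix irreducible, the induced flow is strongly monotone on the interior of the nonnegative cone; the logistic nonlinearity makes it dissipative and keeps the nonnegative cone positively invariant. By the theory of monotone competitive systems (Hsu, Smith and Waltman), the global attractor lies in the competitive order interval $[(\bm 0,\bm v^*),(\bm u^*,\bm 0)]$, and if \eqref{3p} has no coexistence equilibrium then exactly one of the two semitrivial equilibria attracts every orbit issuing from data with both species present; moreover, a locally asymptotically stable semitrivial equilibrium must be that global attractor. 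Thus it suffices to prove, for (i), that $(\bm 0,\bm v^*)$ is locally asymptotically stable and, for (ii), that $(\bm u^*,\bm 0)$ is locally asymptotically stable, together with the nonexistence of a coexistence equilibrium in each case.

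The local stabilities I would read off from the region analysis and its symmetric counterpart. For (ii), $G_{11}\subset S_1$ by Proposition \ref{locs}(i) and $G_{13}\subset S_1$ by Proposition \ref{locs}(iii) (using $q_1<\underline{q}$), so $\lambda_1(d_2,q_2,\bm 1-\bm u^*/\bm k)<0$ there; the part $G^*_{12}$ is not covered by Proposition \ref{locs} and needs a dedicated estimate, showing that the defining condition $q_2>\overline{q}$ forces $(d_2,q_2)$ above the invasion curve (here the relations among $\underline{q},\overline{q},q_0$ from the appendix, together with $\lim_{d\to 0}q_{\bm u}^*(d)=q_0$ from Proposition \ref{profi}, are used). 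For (i), the stability of $(\bm 0,\bm v^*)$ follows from Proposition \ref{locs} applied with the roles of $\bm u$ and $\bm v$ interchanged: relative to the base point $(d_2,q_2)$ the point $(d_1,q_1)$ satisfies $d_1\ge d_2$ and $q_1\ge q_2$, and it lies in the swapped analogue of $G_{11}$ when $(d_2,q_2)\in G_{21}$ and of $G_{13}$ when $(d_2,q_2)\in G_{23}$ (where in addition $q_2\le q_1<\underline{q}$), so in either case Proposition \ref{locs}(i),(iii) places $(d_1,q_1)$ in the stable region, i.e. $\lambda_1(d_1,q_1,\bm 1-\bm v^*/\bm k)<0$.

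The crux is the nonexistence of a coexistence equilibrium. If $(\tilde{\bm u},\tilde{\bm v})$ were such an equilibrium then $\tilde{\bm u},\tilde{\bm v}\gg\bm 0$ and, setting $\bm m=\bm 1-(\tilde{\bm u}+\tilde{\bm v})/\bm k$, each of $\tilde{\bm u}$ and $\tilde{\bm v}$ is a positive eigenvector of $d_1D+q_1Q+\text{diag}(r_im_i)$, respectively $d_2D+q_2Q+\text{diag}(r_im_i)$, with eigenvalue $0$; by the Perron–Frobenius characterization this forces $\lambda_1(d_1,q_1,\bm m)=\lambda_1(d_2,q_2,\bm m)=0$. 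To contradict this I would use Lemma \ref{theorem_quasi}: for any direction $(d_0,q_0)$ the matrix $d_0D+q_0Q$ is irreducible, essentially nonnegative, and has zero column sums, so $s(d_0D+q_0Q)=0$ and $t\mapsto\lambda_1(td_0,tq_0,\bm m)=s\!\left(t(d_0D+q_0Q)+\text{diag}(r_im_i)\right)$ is strictly decreasing whenever $\bm m$ is nonconstant. When $(d_1,q_1)$ and $(d_2,q_2)$ lie on a common ray through the origin this is immediately contradictory; the remaining off–ray configurations are reduced to this case by first moving radially to a common $d$–coordinate and then controlling the residual change in $q$.

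I expect this last reduction to be the main obstacle. Ray monotonicity of $\lambda_1$ does not control its variation transverse to rays: the derivative $\partial\lambda_1/\partial q$ is a positive multiple of $\bm\psi^TQ\bm\phi$ (with $\bm\phi,\bm\psi$ the right and left principal eigenvectors), and this quantity has no definite sign for a general $\bm m$. Consequently, excluding coexistence for parameter points off the ray through $(d_1,q_1)$—in particular throughout $G^*_{12}$ and the off–diagonal portions of the six regions—will require extracting sign information on $\bm m=\bm 1-(\tilde{\bm u}+\tilde{\bm v})/\bm k$ from hypothesis $(\mathbf H)$ (the ordering $k_1>k_2>k_3$) and from the equilibrium identities, and then showing that the zero level set $\{\lambda_1(\cdot,\cdot,\bm m)=0\}$ cannot meet both parameter points. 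Once coexistence is excluded, the monotone–systems dichotomy yields global asymptotic stability of $(\bm 0,\bm v^*)$ in case (i) and of $(\bm u^*,\bm 0)$ in case (ii).
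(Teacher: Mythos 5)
Your overall framework---the monotone-systems dichotomy, local stability via Proposition \ref{locs} and the $\bm u\leftrightarrow\bm v$ symmetry, plus nonexistence of a coexistence equilibrium---is exactly the skeleton of the paper's proof, and your local-stability analysis is essentially sound: for (i) the symmetry argument placing $(d_1,q_1)$ in the swapped analogues of $G_{11}$ and $G_{13}$ works (and $q_2\le q_1<\underline q$ is indeed available in $G_{23}$, as you need). Two smaller remarks: for $G_{11}\cup G_{12}^*$ the paper sidesteps the ``dedicated estimate'' you anticipate by proving instead that $(\bm 0,\bm v^*)$ is \emph{unstable} (by symmetry, $q_2>\overline q$ puts $(d_1,q_1)$ in the swapped analogue of $G_{22}$, so Proposition \ref{locs}(ii) applies), which is all the dichotomy requires once coexistence is excluded; and in $G_{13}$, where one only knows $(\bm u^*,\bm 0)$ is stable, the possibility that $(\bm 0,\bm v^*)$ is \emph{neutrally} stable must be addressed separately (the paper invokes a theorem of Lam--Munther for this), a case your ``locally asymptotically stable semitrivial equilibrium must be the global attractor'' argument does not cover.

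The genuine gap is the one you name yourself: nonexistence of a positive equilibrium. Ray monotonicity from Lemma \ref{theorem_quasi} rules out coexistence only when $(d_2,q_2)$ lies on the ray $q=(q_1/d_1)d$, a one-dimensional slice of the two-dimensional regions $G_{21}\cup G_{23}$ and $G_{11}\cup G_{12}^*\cup G_{13}$; since, as you correctly observe, $\partial\lambda_1/\partial q$ is a positive multiple of $\bm\psi^TQ\bm\phi$ and has no definite sign for general $\bm m$, the proposed ``reduction to a common ray'' cannot be carried out, so the core of the theorem remains unproved. The paper closes exactly this gap with a different, structure-specific argument (Lemmas \ref{sign}--\ref{nonex}): for a putative positive equilibrium $(\bm u,\bm v)$ one introduces the fluxes $f_j=d_1u_{j+1}-(d_1+q_1)u_j$ and $g_j=d_2v_{j+1}-(d_2+q_2)v_j$, shows $f_1,f_2,g_1,g_2<0$ using $({\bf H})$, and then combines the discrete summation identities \eqref{fg0}--\eqref{gf20} with componentwise monotonicity of the equilibrium profiles (e.g.\ $u_1>u_2>u_3$ when $q_1<\underline q$, extracted from $({\bf H})$ and the drift bounds) to reach a strict-sign contradiction region by region, including all off-ray points and $G_{12}^*$. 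Without an argument of this kind---not merely ``sign information on $\bm m$,'' but identities that couple the two species' profiles---your proof does not go through.
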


\begin{figure}[ht]
\centering\includegraphics[width=0.8\textwidth]{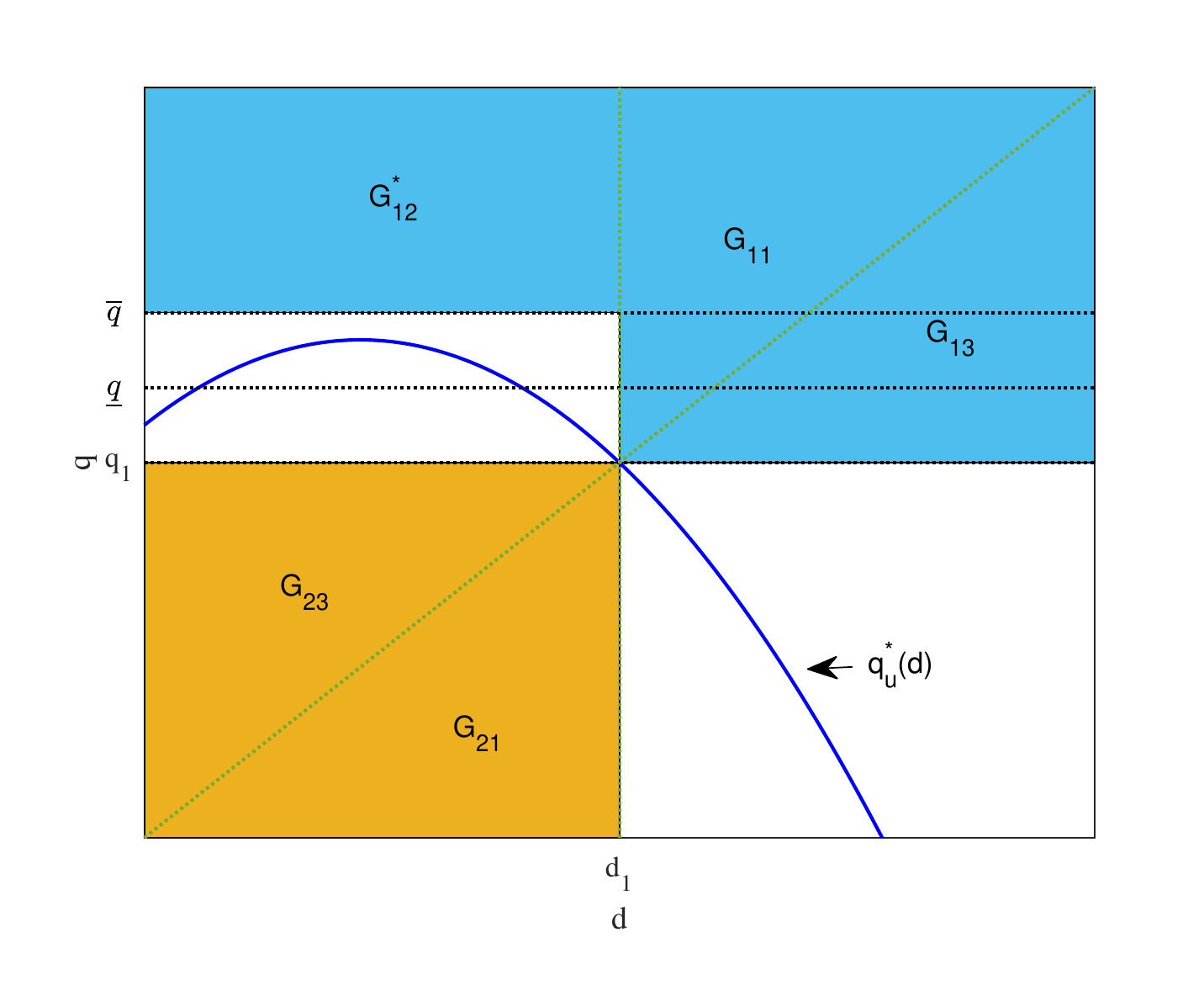}\\
\caption{Illustration of the result for the case $q_1<\underline{q}$.  If $(d_2, q_2)$ is above the curve $q=q_{\bm{u}}^*(d)$, then  $(\bm u^*,\bm 0)$ is stable; and if $(d_2, q_2)$ is under the curve, then $(\bm u^*,\bm 0)$ is unstable. If $(d_2,q_2)\in G_{21}\cup G_{23}$, $(\bm0,\bm v^*)$ is globally asymptotically stable; if $(d_2,q_2)\in G_{11}\cup G_{12}^*\cup G_{13}$, $(\bm u^*,\bm 0)$ is globally asymptotically stable. 
}
\label{small}
\end{figure}

\begin{remark}
Our results on model \eqref{3p} for the small drift rate case are summarized in Fig. \ref{small}. We have proved that competitive exclusion appears if $(d_2, q_2)$ falls into the blue and yellow regions of Fig. \ref{small}. 
\end{remark}

Next, we consider the large drift case, i.e.,  $q_1>\overline q$.
\begin{theorem}\label{gdyn1}
Suppose that $(\bf {H})$ holds, $\bm r\gg\bm 0$, and $d_1,q_1>0$ with $q_1>\overline q$. Then the following statements hold:
\begin{enumerate}
    \item [${\rm (i)}$] If $(d_2,q_2)\in G_{21}\cup G_{22}\cup G_{23}^*$, then the semi-trivial equilibrium $(\bm 0,\bm v^*)$  is globally asymptotically stable;
    \item [${\rm (ii)}$] If $(d_2,q_2)\in G_{11}\cup G_{12}$,  then the semi-trivial equilibrium $(\bm u^*,\bm 0)$  is globally asymptotically stable.
\end{enumerate}
Here, $G_{23}^*$ is defined by
\begin{equation}\label{G23*0}
G_{23}^*=\{(d_2,q_2):(d_2,q_2)\in G_{23},q_2<\underline q\}.
\end{equation}
\end{theorem}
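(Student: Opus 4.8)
The plan is to reduce the global dynamics of \eqref{3p} to the local stability of its two semitrivial equilibria via the theory of monotone systems. In the competitive order the semiflow generated by \eqref{3p} is strongly monotone and dissipative (the movement matrices $D,Q$ make the coupling irreducible), so by the competitive-exclusion results for such patch models (see \cite{smith2008monotone} and the versions in our earlier work \cite{chen2022invasion,chen2021}), once we know that one of $(\bm u^*,\bm 0)$, $(\bm 0,\bm v^*)$ is unstable while the other is locally asymptotically stable, the latter is globally asymptotically stable and no coexistence equilibrium exists. Thus it suffices to assign, on each region in the statement, local stability to one boundary equilibrium and instability to the other. The stability of $(\bm u^*,\bm 0)$ is read off from $\la_1(d_2,q_2,\bm 1-\bm u^*/\bm k)$, i.e.\ from the invasion curve $q=q_{\bm u}^*(d)$ of Theorem \ref{inva}; for $(\bm 0,\bm v^*)$ I will use the symmetry of the model under interchanging $(\bm u,d_1,q_1)\leftrightarrow(\bm v,d_2,q_2)$.

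First I would settle the stability of $(\bm u^*,\bm 0)$. For part (ii), $(\bm u^*,\bm 0)$ is locally asymptotically stable on $G_{11}$ by Proposition \ref{locs}(i) and on $G_{12}$ by Proposition \ref{locs}(ii) (where $q_1>\overline q$ is used). For part (i), $(\bm u^*,\bm 0)$ is unstable on $G_{21}$ and $G_{22}$ by Proposition \ref{locs}(i),(ii). The remaining case is $G_{23}^*\subset S_2$: since Proposition \ref{locs}(iii) requires $q_1<\underline q$ it does not apply here, and this is the first place where the coarse region inclusions break down. I would instead show that on $(0,d_1]$ the invasion curve satisfies $q_{\bm u}^*(d)\ge\underline q$: it passes through $(d_1,q_1)$ with $q_1>\overline q\ge\underline q$, it tends to $q_0$ as $d\to 0$ by Proposition \ref{profi}(i) with $q_0\ge\underline q$ by the relations among $\underline q,\overline q,q_0$ proved in the appendix, and it is monotone in the large-drift regime (Proposition \ref{profi}(iii)); hence for $(d_2,q_2)\in G_{23}^*$ one gets $q_2<\underline q\le q_{\bm u}^*(d_2)$, i.e.\ $(d_2,q_2)\in S_2$.

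Next I would settle the stability of $(\bm 0,\bm v^*)$ by the species-swap symmetry. The key observation, exactly as in the proof of Lemma \ref{DS-single}, is that $\bm v^*$ is a positive null vector of $d_2D+q_2Q+\text{diag}(r_i(1-v_i^*/k_i))$, so by Perron-Frobenius $\la_1(d_2,q_2,\bm 1-\bm v^*/\bm k)=0$; thus the stability of $(\bm 0,\bm v^*)$ is governed by the invasion curve $q=q_{\bm v}^*(d)$ associated to the resident $(d_2,q_2)$, and I only need to locate $(d_1,q_1)$ relative to it. A direct computation maps the regions of $(d_2,q_2)$ to regions of $(d_1,q_1)$ in the swapped problem: $G_{21}\mapsto G_{11}(d_2,q_2)$ and $G_{11}\mapsto G_{21}(d_2,q_2)$ (handled by Proposition \ref{locs}(i)), $G_{12}\mapsto G_{22}(d_2,q_2)$ with $q_2\ge q_1>\overline q$ and $G_{23}^*\mapsto G_{13}(d_2,q_2)$ with $q_2<\underline q$ (handled by Proposition \ref{locs}(ii),(iii)). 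This yields $(\bm 0,\bm v^*)$ stable on $G_{21},G_{23}^*$ and unstable on $G_{11},G_{12}$, and explains why the restriction $q_2<\underline q$ must be imposed on the $G_{23}$ part: it is precisely what makes the swapped point land in the $G_{13}$ slot covered by Proposition \ref{locs}(iii).

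The hard part will be the stability of $(\bm 0,\bm v^*)$ on the portion of $G_{22}$ with $q_2\le\overline q$ (alongside the $G_{23}^*$ estimate above). The swap sends such $(d_2,q_2)$ to $(d_1,q_1)\in G_{12}(d_2,q_2)$, but Proposition \ref{locs}(ii) only gives $G_{12}\subset S_1$ when the resident drift exceeds $\overline q$, which now fails, and no part of Proposition \ref{locs} covers the $G_{12}$ slot otherwise. I would instead bound the invasion curve $q_{\bm v}^*$ of the resident $(d_2,q_2)$ from above on $(0,d_1]$, using Proposition \ref{profi} for its endpoint value and asymptotics together with the eigenvalue monotonicity of Lemma \ref{theorem_quasi}, so that it stays below $q_1$ for all $d\le d_1$ and $(d_1,q_1)$ lies in $S_1(d_2,q_2)$. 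The delicate point is that this bound must hold uniformly in the resident parameters $(d_2,q_2)$ ranging over $G_{22}$, rather than at a single fixed resident; this is exactly where the quantitative monotonicity of $\la_1$ from Lemma \ref{theorem_quasi} and the appendix relations among $\underline q,\overline q,q_0$ carry the real weight of the proof.
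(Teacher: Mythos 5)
Your foundational reduction is incorrect, and this breaks the whole proposal. You claim that, by monotone systems theory, ``once we know that one of $(\bm u^*,\bm 0)$, $(\bm 0,\bm v^*)$ is unstable while the other is locally asymptotically stable, the latter is globally asymptotically stable and no coexistence equilibrium exists.'' That is not what the theory gives, and it is false in general: in a strongly monotone competitive system one can have $(\bm u^*,\bm 0)$ stable, $(\bm 0,\bm v^*)$ unstable, and a pair of positive equilibria (one stable, one unstable) ordered between them, in which case neither semitrivial equilibrium is globally stable. The correct statement (the Hsu--Smith--Waltman trichotomy used in the paper) requires the \emph{nonexistence of positive equilibria} as a separate input; it cannot be deduced from the local stability picture. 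The paper's proof therefore rests on Lemma \ref{nonex}, whose proof is the real technical work (the identities \eqref{fg0}--\eqref{gf20}, the sign analysis of Lemma \ref{sign}, and case-by-case monotonicity arguments for the components of a putative positive equilibrium). Nothing in your proposal addresses nonexistence of positive equilibria --- you have the implication exactly backwards.

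This false reduction also forces you into an unnecessary and unresolved problem: determining the local stability of $(\bm 0,\bm v^*)$ on every region, which produces your ``hard part'' on $G_{22}$ with $q_2\le\overline q$, for which you offer only a plan (uniform bounds on the swapped invasion curve $q^*_{\bm v}$ over all residents in $G_{22}$), not an argument. The paper never needs this: on $G_{21}\cup G_{22}$ it combines instability of $(\bm u^*,\bm 0)$ (Proposition \ref{locs} (i)--(ii)) with Lemma \ref{nonex}, and monotone theory then yields global stability of $(\bm 0,\bm v^*)$ \emph{without ever analyzing its local stability}. The remaining case $(d_2,q_2)\in G_{23}^*$ is handled by the species-swap symmetry reducing to Theorem \ref{gdyn} (ii) with $(d_1,q_1)\in\hat G_{13}$ relative to the resident $(d_2,q_2)$ (where $q_2<\underline q$); internally that argument uses stability of the resident plus nonexistence of positive equilibria plus a trichotomy that also treats the neutrally stable case via \cite[Theorem 1.4]{LAM2016Munther} --- not a claim that $G_{23}^*\subset S_2$. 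Finally, even your claim $q^*_{\bm u}(d)\ge\underline q$ on $(0,d_1]$ does not follow from your reasoning: knowing the endpoint values $q_0$ (as $d\to0$) and $q_1$ (at $d_1$) exceed $\underline q$ does not prevent the curve from dipping below $\underline q$ in between, and Proposition \ref{profi} (iii) gives a slant asymptote, not monotonicity of the curve.
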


\begin{figure}[ht]
\centering\includegraphics[width=0.8\textwidth]{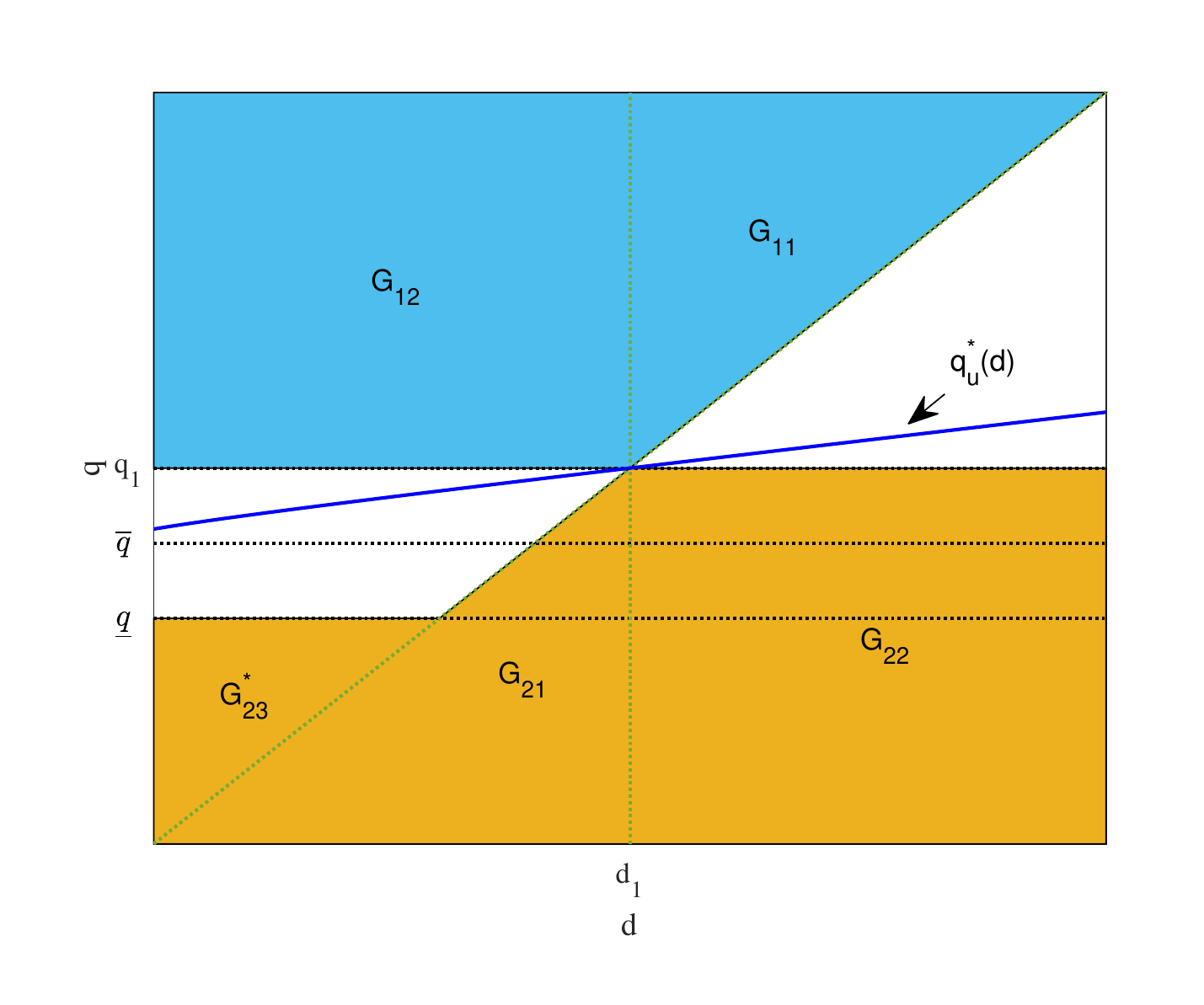}\\
\caption{Illustration of the result for the case $q_1>\overline{q}$. If $(d_2, q_2)$ is above the curve $q=q_{\bm{u}}^*(d)$, then  $(\bm u^*,\bm 0)$ is stable; and if $(d_2, q_2)$ is under the curve, then $(\bm u^*,\bm 0)$ is unstable. If $(d_2,q_2)\in G_{21}\cup G_{22}\cup G_{23}^*$, $(\bm0,\bm v^*)$ is globally asymptotically stable; and if $(d_2,q_2)\in G_{11}\cup G_{12}$, $(\bm u^*,\bm 0)$ is globally asymptotically stable.}
\label{large}
\end{figure}

\begin{remark}
Our results on model \eqref{3p} for the large drift rate case are summarized in Fig. \ref{large}. Different from the small drift rate case, the invasion curve is unbounded. Again, we are able to prove that competitive exclusion happens if $(d_2, q_2)$ falls into the blue and yellow regions of Fig. \ref{large}.
\end{remark}

Then, we consider the intermediate drift case, i.e., $\underline q\le q_1\le \overline q$, and we have the following result on the global dynamics of model \eqref{3p}.
\begin{theorem}\label{gdyn2}
Suppose that $(\bf {H})$ holds, $\bm r\gg\bm 0$, and $d_1,q_1>0$ with $\underline q\le q_1\le \overline q$. Let $G_{12}^*$ be defined by \eqref{G1213s} and $G_{23}^*$ be defined by \eqref{G23*0}. Then the following statements hold:
\begin{enumerate}
    \item [${\rm (i)}$] If $(d_2,q_2)\in G_{21}\cup G_{23}^*$, then the semi-trivial equilibrium $(\bm 0,\bm v^*)$  is globally asymptotically stable;
    \item [${\rm (ii)}$] If $(d_2,q_2)\in G_{11}\cup G^*_{12}$, then
    the semi-trivial equilibrium $(\bm u^*,\bm 0)$  is globally asymptotically stable.
\end{enumerate}
\end{theorem}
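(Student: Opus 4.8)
The plan is to leverage the global dynamics theory for monotone (competitive) systems together with the invasion-curve analysis already established in Theorem 3.2 and Propositions 3.5--3.6. Model \eqref{3p} is a two-species Lotka-Volterra competition system, so it generates a strongly monotone dynamical system with respect to the competitive order $(\bm u, \bm v)\le_K(\tilde{\bm u},\tilde{\bm v})$ iff $\bm u\le\tilde{\bm u}$ and $\bm v\ge\tilde{\bm v}$. For such systems the standard trichotomy (see \cite{smith2008monotone}) applies: if both semitrivial equilibria are present, then either one of them is globally asymptotically stable, or there is a coexistence state and bistability, or there is a stable coexistence state with both semitrivials unstable. The strategy is therefore to pin down the local stability of the two semitrivial equilibria $(\bm u^*,\bm 0)$ and $(\bm 0,\bm v^*)$ on the relevant parameter regions and then invoke the monotone-systems dichotomy to rule out coexistence and conclude global convergence to the appropriate semitrivial equilibrium.

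First I would handle the stability of $(\bm u^*,\bm 0)$. For part (ii), with $(d_2,q_2)\in G_{11}\cup G_{12}^*$, I would show that $(d_2,q_2)\in S_1$, i.e., that it lies above the invasion curve, so that $\lambda_1(d_2,q_2,\bm 1-\bm u^*/\bm k)<0$ and $(\bm u^*,\bm 0)$ is locally asymptotically stable by Theorem \ref{inva}. The inclusion $G_{11}\subset S_1$ is already given by Proposition \ref{locs}(i). For $G_{12}^*$, which requires $q_2>\overline q$, I would use Proposition \ref{profi}(i) and the structure of the curve in the intermediate case (Proposition \ref{profi}(iv)): since $q_0\le\overline q$ in this regime and the curve either terminates at finite $d_0$ with limit $0$ or grows at most linearly, the horizontal level $q_2>\overline q\ge q_0$ sits above the curve for $0<d_2<d_1$, placing $G_{12}^*$ in $S_1$. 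Symmetrically, for part (i) with $(d_2,q_2)\in G_{21}\cup G_{23}^*$, I would establish the instability of $(\bm u^*,\bm 0)$ (so $(d_2,q_2)\in S_2$) via $G_{21}\subset S_2$ from Proposition \ref{locs}(i), and handle $G_{23}^*$ (where $q_2<\underline q$) using the analogous lower bound $q_{\bm u}^*(d)\ge\underline q$-type estimate on that region.

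The more delicate point is that local stability of $(\bm u^*,\bm 0)$ alone does not give \emph{global} asymptotic stability; I must simultaneously rule out coexistence. Here I would exploit the symmetric role of the two species by defining an invasion curve $q=q_{\bm v}^*(d)$ for $(\bm 0,\bm v^*)$ through $\lambda_1(d_1,q_1,\bm 1-\bm v^*/\bm k)$, and show that on each region in question exactly one semitrivial equilibrium is linearly stable while the other is unstable. Then, by the theory of two-species competition models, the absence of an interior equilibrium of one stability type forces global convergence to the stable semitrivial state; concretely, I would apply the result that if $(\bm u^*,\bm 0)$ is locally asymptotically stable and $(\bm 0,\bm v^*)$ is unstable and no coexistence equilibrium exists, then $(\bm u^*,\bm 0)$ is globally asymptotically stable (and vice versa). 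I expect the main obstacle to be the verification that on $G_{23}^*$ and $G_{12}^*$ the relevant eigenvalue inequalities hold with the correct signs, i.e., establishing the precise placement of these regions relative to \emph{both} invasion curves in the intermediate regime where $\underline q\le q_1\le\overline q$; this requires the monotonicity of $s(\cdot)$ from Lemma \ref{theorem_quasi} applied along the rays and horizontal segments bounding these regions, and careful use of the comparison between $q_0$, $\underline q$, and $\overline q$ deferred to the appendix. Once both eigenvalue signs are fixed on every region, excluding coexistence and invoking the monotone dichotomy closes the argument.
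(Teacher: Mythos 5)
Your proposal has the right ambient framework (strongly monotone competitive dynamics plus the invasion-curve results), and your handling of $G_{11}$ and $G_{21}$ via Proposition \ref{locs}(i) matches the paper. But there is a genuine gap at the decisive step: you never actually prove the nonexistence of positive equilibria. You write that you would ``show that on each region in question exactly one semitrivial equilibrium is linearly stable while the other is unstable'' and then let the monotone-systems dichotomy ``rule out coexistence.'' The dichotomy cannot do that. The Hsu--Smith--Waltman trichotomy says: either a positive equilibrium exists, or one of the semitrivial equilibria attracts all solutions with both components nontrivial. In this six-dimensional system the stability signs of $(\bm u^*,\bm 0)$ and $(\bm 0,\bm v^*)$ are perfectly compatible with the existence of interior equilibria (e.g., a stable/unstable pair ordered in the competitive cone), in which case neither semitrivial is globally attracting. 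The paper's proof rests precisely on the ingredient your plan omits: Lemma \ref{nonex} (nonexistence of positive equilibria), which is proved not by spectral arguments but by the algebraic identities \eqref{fg0}--\eqref{gf20} for $f_j=d_1u_{j+1}-(d_1+q_1)u_j$, $g_j=d_2v_{j+1}-(d_2+q_2)v_j$, the sign result $f_j,g_j<0$ (Lemma \ref{sign}), and monotonicity of the components ($u_1<u_2<u_3$ or $u_1>u_2>u_3$) of any hypothetical coexistence state. Your eigenvalue-based plan has no substitute for this.

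A secondary issue is your placement of $G_{12}^*$ above the invasion curve: knowing $\lim_{d\to 0}q^*_{\bm u}(d)=q_0\le\overline q$ and the endpoint/asymptotic behavior does not prevent the curve from exceeding $\overline q$ somewhere in $(0,d_1)$, so ``$q_2>\overline q\ge q_0$ sits above the curve'' is not justified as stated (and Proposition \ref{locs}(ii) is unavailable here since it requires $q_1>\overline q$). The paper avoids both difficulties with a resident--invader symmetry that your proposal gestures at but does not exploit: for $(d_2,q_2)\in G_{23}^*$ one has $q_2<\underline q$ and, relabeling $\bm v$ as the resident, $(d_1,q_1)\in\hat G_{13}=\{(d,q): d\ge d_2,\ q_2\le q<\frac{q_2}{d_2}d,\ (d,q)\ne(d_2,q_2)\}$, so the small-drift Theorem \ref{gdyn}(ii) applied to the relabeled system gives global stability of $(\bm 0,\bm v^*)$ outright; the case $G_{12}^*$ reduces the same way to the large-drift Theorem \ref{gdyn1}. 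This reduction delivers nonexistence, the correct instability statement, and global convergence in one stroke, which is what your argument is missing.
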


\begin{figure}[ht]
\centering\includegraphics[width=0.8\textwidth]{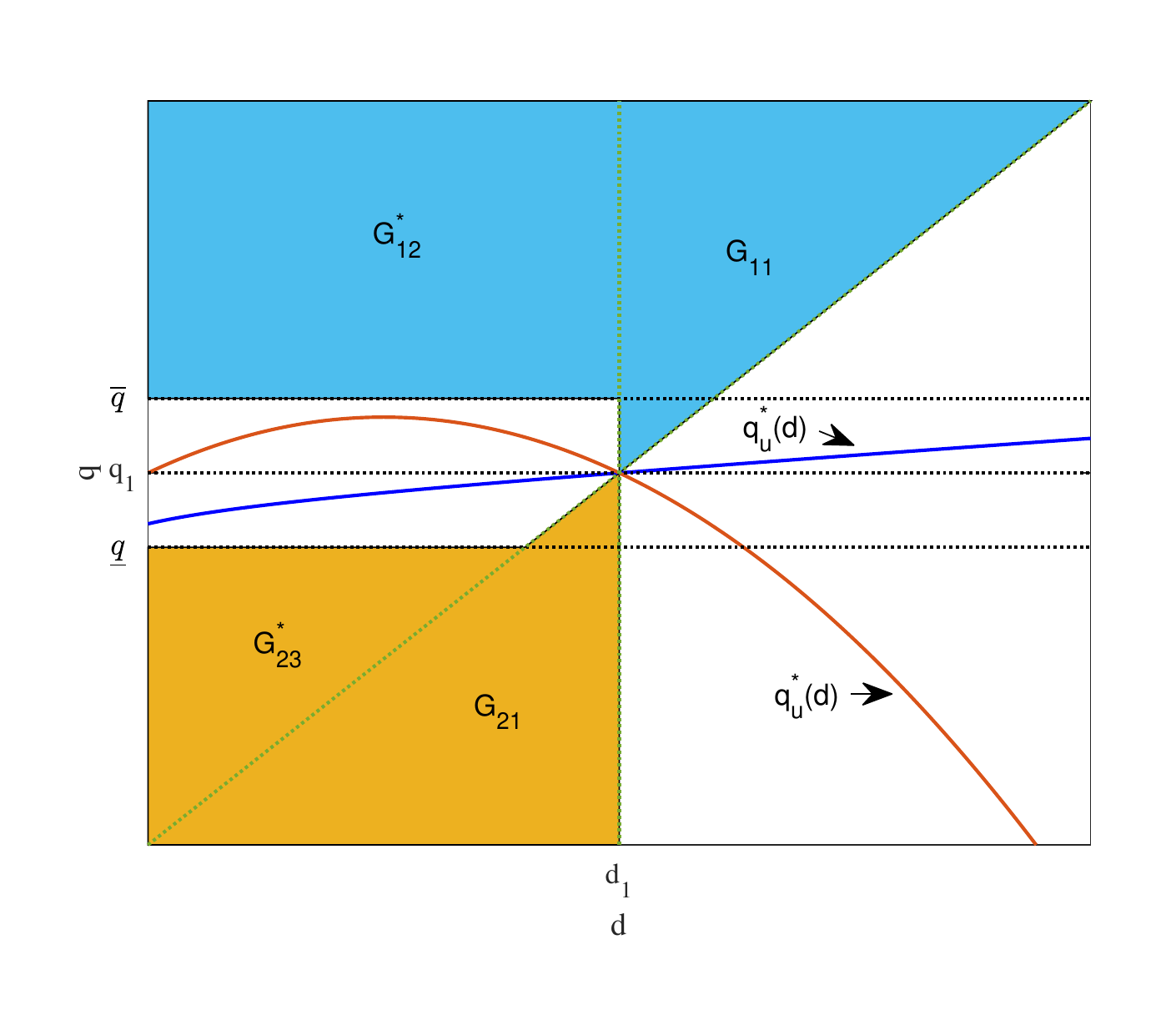}\\
\caption{Illustration of the result for the case $\underline{q}\le q_1\le \overline{q}$. If $(d_2,q_2)\in G_{21}\cup G_{23}^*$, $(\bm0,\bm v^*)$ is globally asymptotically stable; and if $(d_2,q_2)\in G_{11}\cup G^*_{12}$, $(\bm u^*,\bm 0)$ is globally asymptotically stable. }
\label{inter}
\end{figure}

\begin{remark}
Our results on model \eqref{3p} for the intermediate drift rate case are summarized in Fig. \ref{inter}. In this case, the invasion curve may be defined on either a bounded or an unbounded interval.  However, we know that it must locate between the yellow and blue regions in Fig. \ref{inter}, where competitive exclusion happens. 
\end{remark}

In view of Theorems \ref{gdyn}, \ref{gdyn1}, and \ref{gdyn2}, the global dynamics of model \eqref{3p} in $G_{11}\cup G_{21}$ is independent of $q_1$: 
\begin{corollary}\label{corollary_global3p}
Suppose that $(\bf {H})$ holds, $\bm r\gg\bm 0$, and $d_1,q_1>0$. Then the following statements hold:
\begin{enumerate}
\item [${\rm (i)}$] If $(d_2,q_2)\in G_{11}$, then  the semitrivial equilibrium $(\bm u^*,\bm 0)$  is globally asymptotically stable; 

\item [${\rm (ii)}$] If $(d_2,q_2)\in G_{21}$, then  the semitrivial equilibrium   $(\bm 0,\bm v^*)$  is globally asymptotically stable.
\end{enumerate}
\end{corollary}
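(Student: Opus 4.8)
The plan is to obtain this corollary directly from the three global-dynamics theorems (Theorems \ref{gdyn}, \ref{gdyn1}, and \ref{gdyn2}) by a case analysis on the magnitude of the fixed resident drift rate $q_1$. First I would note that, since $(\bf H)$ gives $k_1>k_2>k_3>0$ and $\bm r\gg\bm 0$, both quantities $\frac{r_1}{k_1}(k_1-k_2)$ and $\frac{r_3}{k_3}(k_2-k_3)$ appearing in \eqref{larsdr} are strictly positive, so $0<\underline q\le\overline q$. Consequently the three ranges $q_1<\underline q$, $\underline q\le q_1\le\overline q$, and $q_1>\overline q$ form a partition of $(0,\infty)$, and exactly one of the three theorems applies to any admissible choice of $q_1$.

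Next I would read off the conclusions of each theorem restricted to the two regions $G_{11}$ and $G_{21}$. In the small-drift case $q_1<\underline q$, part (ii) of Theorem \ref{gdyn} lists $G_{11}$ among the regions where $(\bm u^*,\bm 0)$ is globally asymptotically stable, while part (i) lists $G_{21}$ among the regions where $(\bm 0,\bm v^*)$ is globally asymptotically stable. The large-drift case $q_1>\overline q$ is identical: Theorem \ref{gdyn1}(ii) places $G_{11}$ in the $(\bm u^*,\bm 0)$-stable set and Theorem \ref{gdyn1}(i) places $G_{21}$ in the $(\bm 0,\bm v^*)$-stable set. The intermediate case $\underline q\le q_1\le\overline q$ is covered by Theorem \ref{gdyn2}, whose parts (ii) and (i) again contain $G_{11}$ and $G_{21}$ respectively. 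Since these three cases exhaust all $q_1>0$ and the conclusion is the same in every case, statements (i) and (ii) of the corollary follow at once, with no dependence on $q_1$.

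I do not anticipate a genuine obstacle here, as the content of the corollary is entirely contained in the three theorems; the only point requiring care is the bookkeeping that confirms $G_{11}$ (resp.\ $G_{21}$) genuinely appears in the stable region for $(\bm u^*,\bm 0)$ (resp.\ $(\bm 0,\bm v^*)$) in all three theorems, which is verified by inspecting the explicit region lists in each statement. The value of the corollary is precisely this uniformity: although the invasion curve and the finer partition of the plane depend heavily on whether $q_1$ is slow, intermediate, or large, the competitive outcome in $G_{11}\cup G_{21}$ is insensitive to the magnitude of $q_1$.
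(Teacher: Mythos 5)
Your proposal is correct and is essentially the paper's own argument: the paper states this corollary without a separate proof, presenting it exactly as you do — as an immediate consequence of reading off the regions $G_{11}$ and $G_{21}$ from Theorems \ref{gdyn}, \ref{gdyn1}, and \ref{gdyn2}, which cover the three mutually exclusive and exhaustive cases $q_1<\underline q$, $\underline q\le q_1\le\overline q$, and $q_1>\overline q$. Your preliminary observation that $0<\underline q\le\overline q$ under $(\bf H)$, so the case analysis is genuinely exhaustive, is the only bookkeeping needed and you handle it correctly.
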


More importantly, we have the following result about the evolution of random dispersal and directed drift rates.

\begin{corollary}\label{corollary_global3p1}
Suppose that $(\bf {H})$ holds, $\bm r\gg\bm 0$, and $d_1,q_1>0$. Then the following statements hold:
\begin{enumerate}
\item [${\rm (i)}$] Fix $d_1=d_2$. If $q_1<q_2$, then   the semitrivial equilibrium  $(\bm u^*,\bm 0)$  is globally asymptotically stable; If $q_1>q_2$, then  the semitrivial equilibrium  $(\bm 0,\bm v^*)$  is globally asymptotically stable;

\item [${\rm (ii)}$] Fix $q_1=q_2<\underline q$. If $d_1<d_2$, then   the semitrivial equilibrium  $(\bm u^*,\bm 0)$   is globally asymptotically stable; If $d_1>d_2$, then  the semitrivial equilibrium  $(\bm 0, \bm v^*)$   is globally asymptotically stable;

\item [${\rm (ii)}$] Fix $q_1=q_2>\overline q$. If $d_1<d_2$, then  the semitrivial equilibrium  $(\bm 0, \bm v^*)$  is globally asymptotically stable; If  $d_1>d_2$, then   the semitrivial equilibrium  $(\bm u^*, \bm 0)$  is globally asymptotically stable.
\end{enumerate}
\end{corollary}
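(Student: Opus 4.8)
The plan is to deduce this corollary directly from the global classification already established in Corollary~\ref{corollary_global3p} and Theorems~\ref{gdyn} and~\ref{gdyn1}, with essentially no new analysis required. In every sub-case the resident data $(d_1,q_1)$ are fixed and exactly one of $d_2,q_2$ is varied, so the mutant pair $(d_2,q_2)$ moves along either the vertical line $d=d_1$ (part~(i)) or the horizontal line $q=q_1$ (parts~(ii) and~(iii)). The whole argument therefore amounts to locating $(d_2,q_2)$ among the six regions defined in~\eqref{GGG} and then reading off the corresponding stability statement. The ray $q=(q_1/d_1)d$ through $(d_1,q_1)$ is the separatrix between the ``1'' and ``2'' families of regions, and the point $(d_1,q_1)$ itself is excluded from all six.

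For part~(i) I set $d_2=d_1$. Because $q_2>q_1=(q_1/d_1)d_2$ when $q_2>q_1$, the pair lies strictly above the ray and in $\{d\ge d_1\}$, so $(d_2,q_2)\in G_{11}$; symmetrically, $q_2<q_1$ gives $q_2<(q_1/d_1)d_2$ and $(d_2,q_2)\in G_{21}$. Corollary~\ref{corollary_global3p} then yields global asymptotic stability of $(\bm u^*,\bm 0)$ in the former case and of $(\bm 0,\bm v^*)$ in the latter, which is exactly part~(i). In particular this conclusion is independent of the size of $q_1$, as it should be.

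For parts~(ii) and~(iii) I set $q_2=q_1$ and vary $d_2$, noting that $d_2>d_1$ gives $q_1<(q_1/d_1)d_2$ while $d_2<d_1$ gives $q_1>(q_1/d_1)d_2$. In part~(ii), where $q_1<\underline q$, I invoke Theorem~\ref{gdyn}: for $d_2>d_1$ the pair $(d_2,q_1)$ satisfies $d_2\ge d_1$ and $q_1\le q_1<(q_1/d_1)d_2$, so it lies in $G_{13}$ and $(\bm u^*,\bm 0)$ is globally stable; for $d_2<d_1$ it satisfies $d_2\le d_1$ and $(q_1/d_1)d_2<q_1\le q_1$, so it lies in $G_{23}$ and $(\bm 0,\bm v^*)$ is globally stable. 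In part~(iii), where $q_1>\overline q$, I instead invoke Theorem~\ref{gdyn1}: for $d_2>d_1$ the pair lies in $G_{22}$ (since $d_2>d_1$ and $0<q_1\le q_1$), giving global stability of $(\bm 0,\bm v^*)$; for $d_2<d_1$ it lies in $G_{12}$ (since $0<d_2<d_1$ and $q_1\ge q_1$), giving global stability of $(\bm u^*,\bm 0)$.

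The one point that requires genuine care---and the main obstacle to a purely mechanical ``cite the theorem'' argument---is that the line $q=q_1$ lies on the common boundary of several regions in~\eqref{GGG}, so a point on it belongs to more than one $G_{ij}$, and one must confirm that the theorem applicable to the given drift regime singles out the correct one. For instance, when $d_2<d_1$ the pair $(d_2,q_1)$ lies in both $G_{12}$ and $G_{23}$, on which Theorems~\ref{gdyn} and~\ref{gdyn1} nominally give opposite verdicts. This is resolved by the defining restrictions of the starred regions: in part~(ii) we have $q_1<\underline q\le\overline q$, so $q_2=q_1$ violates the condition $q_2>\overline q$ in~\eqref{G1213s} and the pair is \emph{not} in $G_{12}^*$, leaving only the $G_{23}$ conclusion; in part~(iii) we have $q_1>\overline q\ge\underline q$, so $q_2=q_1$ violates $q_2<\underline q$ in~\eqref{G23*0} and the pair is \emph{not} in $G_{23}^*$, leaving only the $G_{12}$ conclusion. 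The analogous overlap of $G_{13}$ and $G_{22}$ along $\{d_2>d_1\}$ is harmless because exactly one of these two regions is covered by the theorem of the relevant regime. Verifying these exclusions in each sub-case makes the region bookkeeping consistent and finishes the proof.
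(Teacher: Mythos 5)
Your proposal is correct and follows essentially the same route as the paper, which states this corollary as an immediate consequence of Corollary~\ref{corollary_global3p} and Theorems~\ref{gdyn} and~\ref{gdyn1} via exactly this region bookkeeping: $(d_2,q_2)\in G_{11}$ or $G_{21}$ in part (i), $G_{13}$ or $G_{23}$ in part (ii), and $G_{22}$ or $G_{12}$ in part (iii). Your additional verification that the overlaps along the line $q=q_1$ (e.g.\ $G_{12}\cap G_{23}$) cause no contradiction, because the starred restrictions in \eqref{G1213s} and \eqref{G23*0} exclude the conflicting conclusion in each drift regime, is a careful touch the paper leaves implicit.
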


\begin{remark}
By Corollary \ref{corollary_global3p1}, the species with a smaller drift rate always has competitive advantage. If the drift rate is small, the species with smaller random dispersal rate has competitive advantage; if the drift rate is large,  larger random dispersal rate is selected. 
\end{remark}

\subsection{Coexistence and bistability}
If $(d_2, q_2)$ is in the blank regions of Fig. \ref{small}-\ref{inter},  we show that bistability and coexistence may occur. To this end, we explore the stability/instability of the semi-trivial equilibrium $(\bm 0,\bm v^*(d_2,q_2))$  along the invasion curve, i.e. $q_2=q_{u}^*(d_2)$.
Let
\begin{equation}\label{la1lal}
\hat\la_1(d_2):=\la_1\left(d_1,q_1,\bm 1-\frac{{\bm v^*}\left(d_2,q_{\bm u}^*(d_2)\right)}{\bm k}\right).
\end{equation}
 Then  $\hat\la_1(d_1)=0$,  the semi-trivial equilibrium $(\bm 0,\bm v^*(d_2,q^*_{\bm {u}}(d_2)))$ is stable if
$\hat\la_1(d_2)<0$, and $(\bm 0,\bm v^*(d_2,q^*_{\bm {u}}(d_2)))$ is unstable if $\hat\la_1(d_2)>0$.
The following result for the large drift case can be proved similarly as \cite[Theorem 5.4]{chen2022invasion}, so we omit the proof here.
\begin{theorem}\label{theorem_cobil}
Suppose that $(\bf {H})$ holds, $\bm r\gg\bm 0$, and $d_1,q_1>0$ with $q_1>\overline q$. Let $q=q_{\bm u}^*(d): (0, d^*)\to \mathbb{R}_+$ be defined in Theorem \ref{inva} with $d^*=\infty$ (see Proposition \ref{profi} (iii)).  Then for any $d_2>0$, the following statements holds:
\begin{enumerate}
    \item [{$\rm (i)$}] If $ \hat\la_1(d_2)<0$, then
    \begin{equation*}
       \hat q(d_2) :=\inf\left\{q>0: \ q>q_{\bm u}^*(d_2)\ \text{and} \  \la_1\left(d_1,q_1,\bm 1-\frac{{\bm v^*}\left(d_2,q\right)}{\bm k}\right)\ge 0\right\}
    \end{equation*}
    exists and satisfies 
     \begin{equation}\label{estima2l}
        \begin{cases}
           \hat q(d_2)\in \left( q_{\bm u}^*(d_2),q_1 \right)\;\;&\text{for}\;\; d_2<d_1,\\
           \hat q(d_2)\in \left( q_{\bm u}^*(d_2),\ds\f{q_1}{d_1}d_2\right)\;\;&\text{for}\;\; d_2>d_1.
        \end{cases}
    \end{equation}
    Moreover, for any $q_2\in(q_{\bm u}^*(d_2), \hat q(d_2))$, both semitrivial equilibria $(\bm u^*,\bm 0)$ and $(\bm 0, \bm v^*)$ are locally asymptotically stable and  model \eqref{3p} admits an unstable positive equilibrium.

  \item [{$\rm (ii)$}] If $ \hat\la_1(d_2)>0$, then
    \begin{equation*}
       \hat q(d_2) :=\sup\left\{q>0:\  q<q_{\bm u}^*(d_2)\ \text{and}\  \la_1\left(d_1,q_1,\bm 1-\frac{{\bm v^*}\left(d_2,q\right)}{\bm k}\right)\le 0\right\}
    \end{equation*}
    exists and satisfies
        \begin{equation*}
        \begin{cases}
           \hat q(d_2)\in \left(\ds\f{q_1}{d_1}d_2, q_{\bm u}^*(d_2)\right)\;\; &\text{for}\;\; d_2<d_1,\\
           \hat q(d_2)\in \left(q_1, q_{\bm u}^*(d_2)\right)\;\;&\text{for}\;\; d_2>d_1.
        \end{cases}
    \end{equation*}
    Moreover, for any $q_2\in (\hat q(d_2),q_{\bm u}^*(d_2))$, both  semitrivial equilibria $(\bm u^*,\bm 0)$ and $(\bm 0, \bm v^*)$ are unstable and  model \eqref{3p} admits a stable positive equilibrium.
\end{enumerate}
\end{theorem}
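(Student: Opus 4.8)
The plan is to regard \eqref{3p} as a strongly monotone competitive system and to reduce both assertions to the standard dichotomy for two-species Lotka--Volterra competition: if both semitrivial equilibria are linearly stable the flow is bistable and there is an unstable positive equilibrium, whereas if both are unstable there is a stable positive equilibrium. Since $D$ and $Q$ are irreducible, \eqref{3p} generates a strongly order-preserving semiflow for the competitive order $(\bm u,\bm v)\preceq(\tilde{\bm u},\tilde{\bm v})$ iff $\bm u\le\tilde{\bm u}$ and $\bm v\ge\tilde{\bm v}$, so the machinery already assembled for \cite[Theorem 5.4]{chen2022invasion} carries over essentially unchanged. Consequently the whole task is, for each fixed $d_2$, to decide for which drift values $q_2$ the two semitrivial equilibria are simultaneously stable (case (i)) or simultaneously unstable (case (ii)); the threshold $\hat q(d_2)$ records precisely the value at which $(\bm 0,\bm v^*)$ changes stability.

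First I would set $f(q):=\la_1\left(d_1,q_1,\bm 1-\bm v^*(d_2,q)/\bm k\right)$, the invasion rate of $\bm u$ into the resident $\bm v$. It is continuous in $q$ because $\bm v^*(d_2,q)$ varies continuously by Lemma \ref{DS-single} and the principal eigenvalue depends continuously on its arguments. Stability of $(\bm u^*,\bm 0)$ is governed by the invasion curve via Theorem \ref{inva} (stable for $q>q_{\bm u}^*(d_2)$, unstable for $q<q_{\bm u}^*(d_2)$), while stability of $(\bm 0,\bm v^*)$ is governed by the sign of $f$. Along the invasion curve $f(q_{\bm u}^*(d_2))=\hat\la_1(d_2)$, so the analysis starts from $f<0$ in case (i) and from $f>0$ in case (ii).

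The key step is to pin down the sign of $f$ at a second, explicitly located value of $q$ using the already-proved global results. If $d_2>d_1$ then $(d_2,(q_1/d_1)d_2)\in G_{11}$, so Corollary \ref{corollary_global3p} makes $(\bm u^*,\bm 0)$ globally stable and hence $f((q_1/d_1)d_2)>0$; if $d_2<d_1$ then comparison along $q_2=q_1>\overline q$ with $d_2<d_1$ gives, by Corollary \ref{corollary_global3p1}(iii), that $(\bm u^*,\bm 0)$ is globally stable, hence $f(q_1)>0$. Symmetrically $(d_2,(q_1/d_1)d_2)\in G_{21}$ for $d_2<d_1$, and the comparison at $q=q_1$ for $d_2>d_1$, force $(\bm 0,\bm v^*)$ to be globally stable, i.e. $f<0$, at those points. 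In case (i) I then have $f<0$ at $q_{\bm u}^*(d_2)$ and $f>0$ at the endpoint ($q_1$ when $d_2<d_1$, $(q_1/d_1)d_2$ when $d_2>d_1$); by the intermediate value theorem $f$ has a first zero there, which is exactly $\hat q(d_2)$, and it lies strictly inside the stated interval because the endpoint itself lies in $S_1$ (so $q_{\bm u}^*(d_2)$ is strictly below it, by Proposition \ref{locs}). The reversed signs in case (ii) place $\hat q(d_2)$ in $((q_1/d_1)d_2,q_{\bm u}^*(d_2))$ for $d_2<d_1$ and in $(q_1,q_{\bm u}^*(d_2))$ for $d_2>d_1$. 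Since $\hat q(d_2)$ is an infimum (case (i)) or supremum (case (ii)) and $f$ is continuous, $f$ keeps the sign of $\hat\la_1(d_2)$ strictly between $q_{\bm u}^*(d_2)$ and $\hat q(d_2)$; hence for every $q_2$ in the indicated interval both semitrivial equilibria are stable (case (i)) or both unstable (case (ii)), and the monotone-system dichotomy delivers the unstable, respectively stable, positive equilibrium.

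The hard part will not be the dichotomy, which is black-boxed from monotone dynamical systems theory, but the bookkeeping that makes the endpoint arguments rigorous: verifying that the chosen endpoints really lie in $G_{11}$ or $G_{21}$ (or match the $q_2=q_1$ comparison), and that $q_{\bm u}^*(d_2)<q_1$ resp. $q_{\bm u}^*(d_2)<(q_1/d_1)d_2$ so that the intervals are nonempty; both rely on the region geometry in \eqref{GGG} together with Proposition \ref{locs}. A secondary technical point is to record explicitly that $\hat q(d_2)$ is a genuine first (last) crossing of $f$, so that $f$ does not change sign before it; this follows from continuity of $f$ and the infimum/supremum characterization but should be stated.
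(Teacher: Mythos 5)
Your overall strategy is the same as the paper's: the paper omits the proof of this theorem (referring to \cite[Theorem 5.4]{chen2022invasion}), but its proof of the analogous small-drift result (Theorem \ref{theorem_cobi}) follows exactly your outline --- continuity of the invasion eigenvalue $f(q)=\la_1\left(d_1,q_1,\bm 1-\bm v^*(d_2,q)/\bm k\right)$, its sign at the invasion curve given by $\hat\la_1(d_2)$, nonemptiness of the defining set via the competitive-exclusion results, the inf/sup characterization, and the monotone-systems dichotomy for the existence of the (un)stable positive equilibrium.

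There is, however, one genuine gap: you repeatedly deduce a \emph{strict} sign of $f$ at the endpoints from global asymptotic stability, e.g.\ ``Corollary \ref{corollary_global3p} makes $(\bm u^*,\bm 0)$ globally stable and hence $f((q_1/d_1)d_2)>0$.'' Global asymptotic stability of $(\bm u^*,\bm 0)$ only rules out local asymptotic stability of $(\bm 0,\bm v^*)$, i.e.\ it gives $f\ge 0$; linear \emph{neutral} stability ($f=0$) of $(\bm 0,\bm v^*)$ is entirely compatible with $(\bm u^*,\bm 0)$ being globally asymptotically stable, so the strict inequality does not follow from the corollaries you cite. With only $f\ge 0$ at the endpoint, your intermediate value argument yields $\hat q(d_2)\le q_1$ (resp.\ $\le (q_1/d_1)d_2$), i.e.\ the closed-interval bounds, not the open intervals asserted in \eqref{estima2l}. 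The missing ingredient is Lemma \ref{ne0la}, applied with the roles of the two species swapped (legitimate because the nonlinear terms of \eqref{3p} are symmetric): since $q_2>\overline q$ at the endpoint $(d_2,q_1)$ and $(d_1,q_1)$ lies in the region $\hat G_{22}$ relative to $(d_2,q_1)$ (and similarly $(d_1,q_1)\in \hat G_{11}$ or $\hat G_{21}$ relative to the ray endpoints), the lemma forbids $f=0$ there; combined with $f\ge 0$ from global stability this upgrades the sign to strict and closes the argument. Once this is inserted, the rest of your proof (continuity near the invasion curve giving $\hat q(d_2)>q_{\bm u}^*(d_2)$, the sign of $f$ strictly between $q_{\bm u}^*(d_2)$ and $\hat q(d_2)$, and the black-boxed monotone dichotomy) is sound.
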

\begin{remark}
If $\underline q\le q_1\le \overline q$ (the intermediate drift case), Theorem \ref{theorem_cobil} (i)-(ii) holds for any $d_2<d_1$, and  we omit the statement to save space here.
\end{remark}

The small drift rate case will be handled sightly different from the large drift rate case. 
For any $\theta>0$, by Lemma \ref{theorem_quasi} and Proposition \ref{locs} (ii), the line $q=d\theta$ and the invasion curve $q=q_{\bm u}^*(d)$ have exactly one intersection point $(d^*(\theta), d^*(\theta)\theta)$. So we can reparameterize the invasion curve  as follows:
\begin{equation}\label{invac}
 \begin{cases}
 d=d^*(\theta),\\
 q=q^*(\theta)=d^*(\theta)\theta,
 \end{cases}
 \;\;
 \theta>0.
\end{equation}
Let
\begin{equation}\label{la1la}
\tilde\la_1(\theta):=\la_1\left(d_1,q_1,\bm 1-\frac{{\bm v^*}\left(d^*(\theta),q^*(\theta)\right)}{\bm k}\right), \;\;\theta>0.
\end{equation}
Then the semi-trivial equilibrium $(\bm 0,\bm v^*\left(d^*(\theta),q^*(\theta)\right))$ is stable if  $\tilde\la_1(\theta)<0$ and unstable  if $\tilde\la_1(\theta)>0$. 
Noticing that $q_{\bm u}^*(d_1)=q_1$, we have $d^*(q_1/d_1)=d_1$ and 
$$
\tilde\la_1\left({q_1}/{d_1}\right)=\lambda_1\left(d_1, q_1, \bm 1-\frac{{\bm v^*}\left(d_1,q_1\right)}{\bm k}\right)=0.
$$

\begin{theorem}\label{theorem_cobi}
Suppose that $(\bf {H})$ holds, $\bm r\gg\bm 0$, and $d_1,q_1>0$ with $0<q_1<\underline q$. Then for any $\theta>0$, the following statements holds:
\begin{enumerate}
    \item [{$\rm (i)$}] If $ \tilde\la_1(\theta)<0$, then
    \begin{equation*}
      \tilde d^*(\theta) :=\inf\left\{d>0: \ d>d^*(\theta)\ \text{and} \  \la_1\left(d_1,q_1,\bm 1-\frac{{\bm v^*}\left(d,d\theta \right)}{\bm k}\right)\ge 0\right\}
    \end{equation*}
    exists with $d^*(\theta)<\tilde d^*(\theta)$
    such that for any  $(d_2,q_2)$ with $q_2=d_2\theta$ and  $d^*(\theta)<d_2<\tilde d^*(\theta)$
    both semitrivial equilibria $(\bm u^*,\bm 0)$ and $(\bm 0, \bm v^*)$ are locally asymptotically stable and  model \eqref{3p} admits an unstable positive equilibrium.

  \item [{$\rm (ii)$}] If $ \tilde\la_1(\theta)>0$, then
    \begin{equation*}
       \tilde d^*(\theta) :=\sup\left\{d>0: \ d<d^*(\theta)\ \text{and} \  \la_1\left(d_1,q_1,\bm 1-\frac{{\bm v^*}\left(d, d\theta\right)}{\bm k}\right)\le 0\right\}
    \end{equation*}
    exists with $\tilde d^*(\theta)<d^*(\theta)$ such that 
    for any  $(d_2,q_2)$ with $q_2=d_2\theta$ and  $\tilde d^*(\theta)<d_2< d^*(\theta)$
    both semitrivial equilibria $(\bm u^*,\bm 0)$ and $(\bm 0, \bm v^*)$ are unstable and model  \eqref{3p} admits a stable positive equilibrium.
\end{enumerate}

Moreover, $\tilde d^*(\theta)$ satisfies 
     \begin{equation}\label{estima2}
        \begin{cases}
           (\tilde d^*(\theta),\tilde d^*(\theta)\theta)\in G_{12}\;\;\text{if}\;\; \ds\theta>\frac{q_1}{d_1},\\
           (\tilde d^*(\theta),\tilde d^*(\theta)\theta)\in G_{22}\;\;\text{if}\;\; 0<\theta<\ds\f{q_1}{d_1}.\\
        \end{cases}
    \end{equation}
\end{theorem}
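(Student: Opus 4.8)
The plan is to follow the ray $q=d\theta$ and track the stability of both semitrivial equilibria simultaneously, mirroring the strategy behind the large-drift Theorem \ref{theorem_cobil} (itself modeled on \cite[Theorem 5.4]{chen2022invasion}) but adapted to the reparameterization \eqref{invac}. By construction the ray meets the invasion curve exactly once, at $d=d^*(\theta)$, where $\la_1(d^*(\theta),q^*(\theta),\bm 1-\bm u^*/\bm k)=0$, so $(\bm u^*,\bm 0)$ is neutrally stable there. Since the ray is increasing and crosses the curve once, Theorem \ref{inva} gives $(d,d\theta)\in S_1$ for $d>d^*(\theta)$ and $(d,d\theta)\in S_2$ for $d<d^*(\theta)$; hence $(\bm u^*,\bm 0)$ is locally asymptotically stable for $d>d^*(\theta)$ and unstable for $d<d^*(\theta)$. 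Meanwhile the stability of $(\bm 0,\bm v^*(d,d\theta))$ is governed by the sign of $h(d):=\la_1(d_1,q_1,\bm 1-\bm v^*(d,d\theta)/\bm k)$, which is continuous in $d$ and satisfies $h(d^*(\theta))=\tilde\la_1(\theta)$.

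For part (i), suppose $\tilde\la_1(\theta)<0$, so $h(d^*(\theta))<0$ and by continuity $h<0$ on a right-neighborhood of $d^*(\theta)$, where both semitrivial equilibria are then locally asymptotically stable. I would define $\tilde d^*(\theta)$ as the first $d>d^*(\theta)$ with $h(d)\ge 0$; on $(d^*(\theta),\tilde d^*(\theta))$ one has $h<0$, so both $(\bm u^*,\bm 0)$ and $(\bm 0,\bm v^*)$ are locally asymptotically stable, and since \eqref{3p} is a two-species competition (monotone) system, this bistability forces an unstable positive equilibrium via the standard monotone-dynamical-systems dichotomy already invoked for Theorem \ref{theorem_cobil}. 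Part (ii), with $\tilde\la_1(\theta)>0$, is entirely symmetric: decreasing $d$ below $d^*(\theta)$ keeps $(\bm 0,\bm v^*)$ unstable while $(\bm u^*,\bm 0)$ becomes unstable in $S_2$, so on $(\tilde d^*(\theta),d^*(\theta))$ both semitrivial equilibria are unstable and the dichotomy yields a stable positive equilibrium.

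The genuinely new content is the location claim \eqref{estima2}, and here the key inputs are twofold. First, for $q_1<\underline q$ the invasion curve lies in $G_{12}\cup G_{22}$ (from Proposition \ref{locs}(i),(iii)), so $\theta>q_1/d_1$ forces $(d^*(\theta),q^*(\theta))\in G_{12}$, i.e. $q_1/\theta\le d^*(\theta)<d_1$, while $\theta<q_1/d_1$ forces $(d^*(\theta),q^*(\theta))\in G_{22}$, i.e. $d_1<d^*(\theta)\le q_1/\theta$. Second, I would read off the strict sign of $h$ at the reference abscissae $d=d_1$ and $d=q_1/\theta$ from Corollary \ref{corollary_global3p1}, translating global stability of a semitrivial equilibrium into a strict sign of $h$: for instance, when $\theta>q_1/d_1$, Corollary \ref{corollary_global3p1}(i) gives $h(d_1)>0$ (since $d_1\theta>q_1$) and Corollary \ref{corollary_global3p1}(ii) gives $h(q_1/\theta)<0$ (since $q_1/\theta<d_1$). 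Combining these signs with the intermediate value theorem sandwiches the zero of $h$ defining $\tilde d^*(\theta)$ strictly between $q_1/\theta$ and $d_1$ on the appropriate side, placing $(\tilde d^*(\theta),\tilde d^*(\theta)\theta)$ in $G_{12}$ when $\theta>q_1/d_1$ and in $G_{22}$ when $\theta<q_1/d_1$; the same sign facts show the inf/sup defining $\tilde d^*(\theta)$ is taken over a nonempty set.

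I expect the main obstacle to be the bookkeeping in this last step: one must convert each global-dynamics statement of Corollary \ref{corollary_global3p1} into the correct strict inequality for $h$ at the two reference abscissae, verify the ordering of $q_1/\theta$, $d^*(\theta)$, and $d_1$ in each of the four combinations of case (i)/(ii) with $\theta>q_1/d_1$ or $\theta<q_1/d_1$, and then check that the prescribed first (for the inf) or last (for the sup) sign change indeed lands on the side dictated by the region membership — all while using only continuity of $h$, never an unproven monotonicity. By contrast, the abstract implication from bistability/coexistence to a positive equilibrium is not an obstacle, being exactly the monotone-systems argument reused from Theorem \ref{theorem_cobil}.
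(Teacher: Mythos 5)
Your treatment of parts (i) and (ii) is correct and is essentially the paper's own proof: set $h(d):=\la_1\left(d_1,q_1,\bm 1-{\bm v^*(d,d\theta)}/{\bm k}\right)$, note $h(d^*(\theta))=\tilde\la_1(\theta)$, get nonemptiness of the defining set from sign information along the ray away from $d^*(\theta)$, separate $\tilde d^*(\theta)$ from $d^*(\theta)$ by continuity, read the stability of $(\bm u^*,\bm 0)$ from Theorem \ref{inva} on either side of the unique crossing, and obtain the positive equilibrium from the monotone-systems dichotomy.

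The genuine gap is in your argument for \eqref{estima2}. You propose to obtain \emph{strict} signs $h(d_1)>0$ and $h(q_1/\theta)<0$ by ``translating global stability of a semitrivial equilibrium into a strict sign of $h$'' via Corollary \ref{corollary_global3p1}. That translation is invalid: GAS of $(\bm u^*,\bm 0)$ only forbids $(\bm 0,\bm v^*)$ from being locally asymptotically stable, hence gives $h\ge 0$; GAS of $(\bm 0,\bm v^*)$ only forbids its linear instability, hence gives $h\le 0$. Linear neutral stability, $h=0$, is compatible with both GAS statements --- indeed the paper's proof of Theorem \ref{gdyn}(ii) invokes \cite[Theorem 1.4]{LAM2016Munther} precisely in order to conclude GAS in the neutrally stable case, so no contradiction between ``GAS'' and ``$h=0$'' can be extracted. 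This matters in exactly two of your four sub-cases, where $\tilde d^*(\theta)$ must be separated \emph{strictly} from $d_1$: since $h(\tilde d^*(\theta))=0$ by continuity, in case (i) with $\theta>q_1/d_1$ the non-strict sign only yields $\tilde d^*(\theta)\le d_1$, and if $\tilde d^*(\theta)=d_1$ the endpoint $(d_1,d_1\theta)$ lies in $G_{11}$, not $G_{12}$; in case (ii) with $\theta<q_1/d_1$, if $\tilde d^*(\theta)=d_1$ the endpoint lies in $G_{21}$, not $G_{22}$. What you need is exactly $h(d_1)\ne 0$, and this cannot come from Corollary \ref{corollary_global3p1}. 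It does follow from the symmetry of the model together with the nondegeneracy results: applying Lemma \ref{ne0la}(i) (equivalently, Proposition \ref{locs}(i)) with the roles of $\bm u$ and $\bm v$ exchanged gives $h>0$ strictly when $(d,d\theta)\in G_{11}$ and $h<0$ strictly when $(d,d\theta)\in G_{21}$; this eigenvalue-level information is what the paper's appeal to ``the proof of Theorem \ref{gdyn}'', rather than to its statement, supplies. Your remaining two sub-cases are fine with non-strict signs, because $G_{12}$ and $G_{22}$ contain their boundary pieces on the line $q=q_1$. With this one repair, your proof closes.
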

\begin{proof}
We prove (i), and (ii) can be proved similarly. Fix $\theta>0$.
Suppose $ \tilde\la_1(\theta)<0$. Let 
$$
A=\left\{d>0: \ d>d^*(\theta)\ \text{and} \  \la_1\left(d_1,q_1,\bm 1-\frac{{\bm v^*}\left(d,d\theta \right)}{\bm k}\right)\ge 0\right\}.
$$
By the proof of Theorem \ref{gdyn} (i), 
$(\bm 0,\bm v^*)$ is unstable or neutrally stable if $(d_2, q_2)\in G_{11}\cup G_{13}$, which yields $A\ne\emptyset$.
Since  $ \tilde\la_1(\theta)<0$,  there exists $\epsilon_0>0$ such that
$$
\la_1\left(d_1,q_1,\bm 1-\frac{{\bm v^*}\left(d^*(\theta)+\epsilon, \theta(d^*(\theta)+\epsilon)\right)}{\bm k}\right)<0, \ \ \text{for any } 0<\epsilon<\epsilon_0.
$$
Therefore, $\tilde d^*(\theta)$ exists with $d^*(\theta)<\tilde d^*(\theta)$.

If $(d_2,q_2)$ satisfies  $q_2=d_2\theta$ and  $\tilde d^*(\theta)<d_2< d^*(\theta)$,  by the definition of $\tilde d^*(\theta)$, we  have  $$\la_1\left(d_1,q_1,\bm{r-v^*}(d_2,q_2)\right)<0,$$ which means that $(\bm 0, \bm v^*)$ is locally asymptotically stable. By Theorem \ref{inva}, $(\bm u^*,\bm 0)$ is also locally asymptotically stable. By the monotone dynamical system theory \cite{hess,hsu1996competitive,smith2008monotone},   model \eqref{3p} admits an unstable positive equilibrium. Finally, it is easy to see that  \eqref{estima2} holds by Theorem \ref{gdyn}.
\end{proof}

\section{Discussions and numerical simulations}

In this section, we discuss about the results of the paper and present some numerical simulations. 

Firstly, we address the impact of spatial heterogeneity on model \eqref{3p}. If the environment is homogeneous, i.e. assumption ({\bf H}) is replaced by $k_1=k_2=k_3$, model \eqref{3p} with $n$ patches has been investigated in our recent paper  \cite{chen2022invasion}. The main results in  \cite{chen2022invasion} can be summarized by Fig. \ref{FigA}. In particular, we prove that the invasion curve is between the lines $q=q_1$ and $q=q_1d/d_1$, $(\bm u^*, \bm 0)$ is globally asymptotically stable in $G_1$, and $(\bm 0, \bm v^*)$ is globally asymptotically stable in $G_2$. These results are independent of the magnitude of drift rate $q_1$ and are similar to the large drift rate case in this paper. Biologically, the downstream end is crowded due to the drift and thereby less friendly compared with the upstream end. If the environment perturbs from being uniformly distributed  and the  upstream locations become advantageous, e.g. assumption ({\bf H}) holds, then a larger drift rate may  compensate for it. This may explain why the homogeneous environment case is similar to the larger drift case in this paper.   

\begin{figure}
 \centering
  \includegraphics[width=0.6\textwidth]{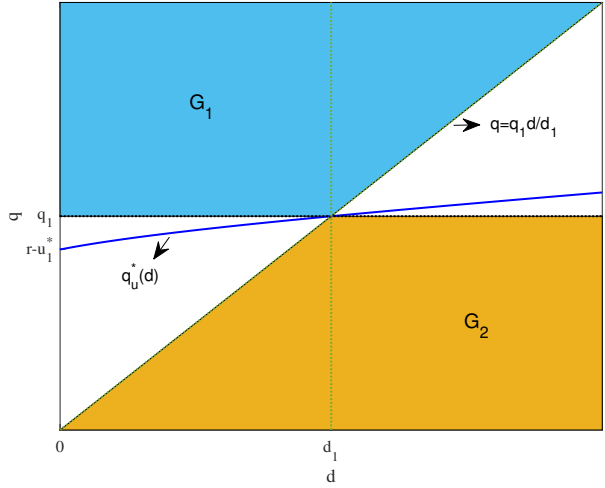}
 \caption{Illustration of the main results for \eqref{3p} with $k_1=k_2=k_3$ in \cite{chen2022invasion}. The blue cure is the invasion curve, which always lies between the lines $q=q_1$ and $q=q_1d/d_1$. Moreover, $(\bm u^*,\bm 0)$ is globally asymptotically stable if $(d_2, q_2)\in G_1$, and  $(\bm 0,\bm v^*)$ is globally asymptotically stable if $(d_2, q_2)\in G_2$.
 }
 \label{FigA}
 \end{figure}

Next, we consider the impact of drift rate on model \eqref{3p}.
By Propositions \ref{profi} and \ref{corollary_global3p1}, if the drift rate $q_1$ is small ($q_1<\underline q$), the invasion curve $q=q^*_{\bm u}(d)$ is defined on a bounded interval and  the species with a smaller random dispersal rate is advantageous; if $q_1$ is large ($q_1>\overline q$), the invasion curve is unbounded  with a slant asymptote $q=\theta d$ for some $\theta>0$ and larger random dispersal rate is favored. The results for the small drift rate case align with the ones in the seminal works \cite{dockery1998evolution,hastings1983can}, which claim that the species with a smaller random dispersal rate will always out-compete the other one in a spatial heterogeneous environment, when both species randomly move in space and are different only by the  movement rate.  When the drift rate becomes large, the outcomes of the competition change dramatically, and the species with a larger dispersal rate may win the competition.   

\begin{figure}
\centering\includegraphics[width=0.6\textwidth]{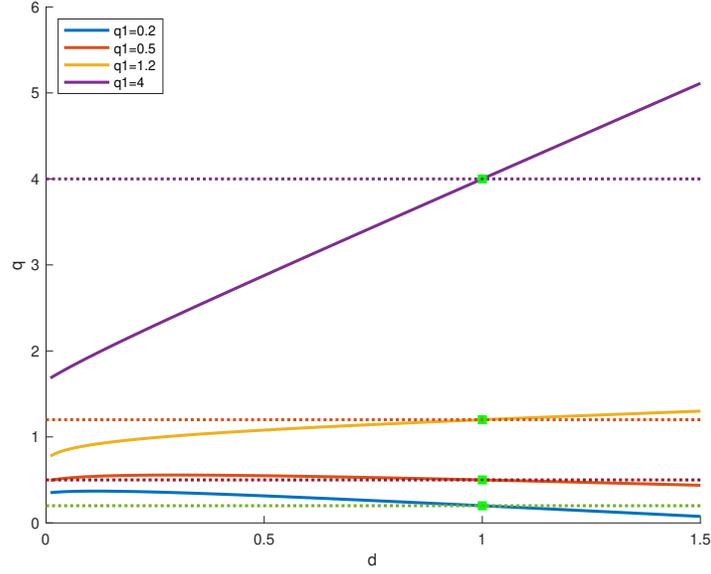}\\
\caption{The invasion curve for different values of $q_1$. Here, $\bm k=(5, 3, 1)$, $\bm r=(1, 2, 1)$, and $d_1=1$. The threshold values for the drift rates are $\underline q=0.4$ and $\overline q=2$.}
\label{interqla1}
\end{figure}

We numerically explore the impact of the drift rate $q_1$ on the shape of the invasion curve $q=q^*_{\bm u}(d)$. Fix $\bm k=(5, 3, 1)$, $\bm r=(1, 2, 1)$, and $d_1=1$. Then we can compute the threshold values for the drift rates: $\underline q=0.4$ and $\overline q=2$. In Fig. \ref{interqla1}, we plot the invasion curves for $q_1=0.2, 0.5, 1.2, 4$. If $q_1=0.2$ or $0.5$, the invasion curves seem to be bounded with $\partial_d q^*_{\bm u}(d_1)<0$, which indicates that a smaller random dispersal rate is favored when $q_1=q_2$ and $d_1\approx d_2$. In sharp contrast with  the smaller drift rate cases, if  $d_1=1.2$ or $4$, the invasion curves seem to be unbounded with $\partial_d q^*_{\bm u}(d_1)>0$. This simulation also shows that  the invasion curve maybe be bounded or unbounded for the intermediate drift case  ($\underline{q}<q_1<\overline{q}$).

Finally, we explore bistablility and coexistence phenomena of  model \eqref{3p}. Let $d_1=1, q_1=1.5$, $\bm r=(3,7,3)$, and $\bm k=(5,3,1)$. We graph the invasion curve $(d^*(\theta), q^*(\theta))$ and $\tilde \lambda(\theta)$ in Fig. \ref{interqla2}.
The stability of $(\bm 0, \bm v^*)$ when $(d_2, q_2)=(d^*(\theta), q^*(\theta))$ is determined by the sign of $\tilde \lambda(\theta)$. In Fig. \ref{interqla2}, we can see that $\tilde \lambda(\theta)$ changes sign, which means that both bistablility and coexistence are possible. Indeed, if we choose $(d_2, q_2)=(3.088, 1.239)$, which is slightly below the invasion curve, then both $(\bm u^*, \bm 0)$ and $(\bm 0, \bm v^*)$ are locally asymptotically stable. As shown in Fig. \ref{interbistable2}, if the initial data is $(\bm u(0),\bm v(0))=((0.1, 0.1, 0.1), (5, 5, 5))$, then the solution of \eqref{3p} converges to $(\bm 0, \bm v^*)$; if the initial data is $(\bm u(0),\bm v(0))=((5, 5, 5), (0.1, 0.1, 0.1))$, then the solution converges to $(\bm u^*, \bm 0)$. Finally, we choose $(d_2, q_2)=(10.28, 0.03)$, which is slightly above the invasion curve ($(\bm u^*, \bm 0)$ is unstable). Since $\tilde\lambda$ is positive, $(\bm 0, \bm v^*)$ is also unstable, and the model has at least one stable positive equilibrium. We graph the solution of \eqref{3p} for initial data $(\bm u(0), \bm v(0))=((1, 1, 1), (1, 1, 1))$ and the solutions seem to converge to a positive equilibrium.

\begin{figure}
\centering\includegraphics[width=1\textwidth]{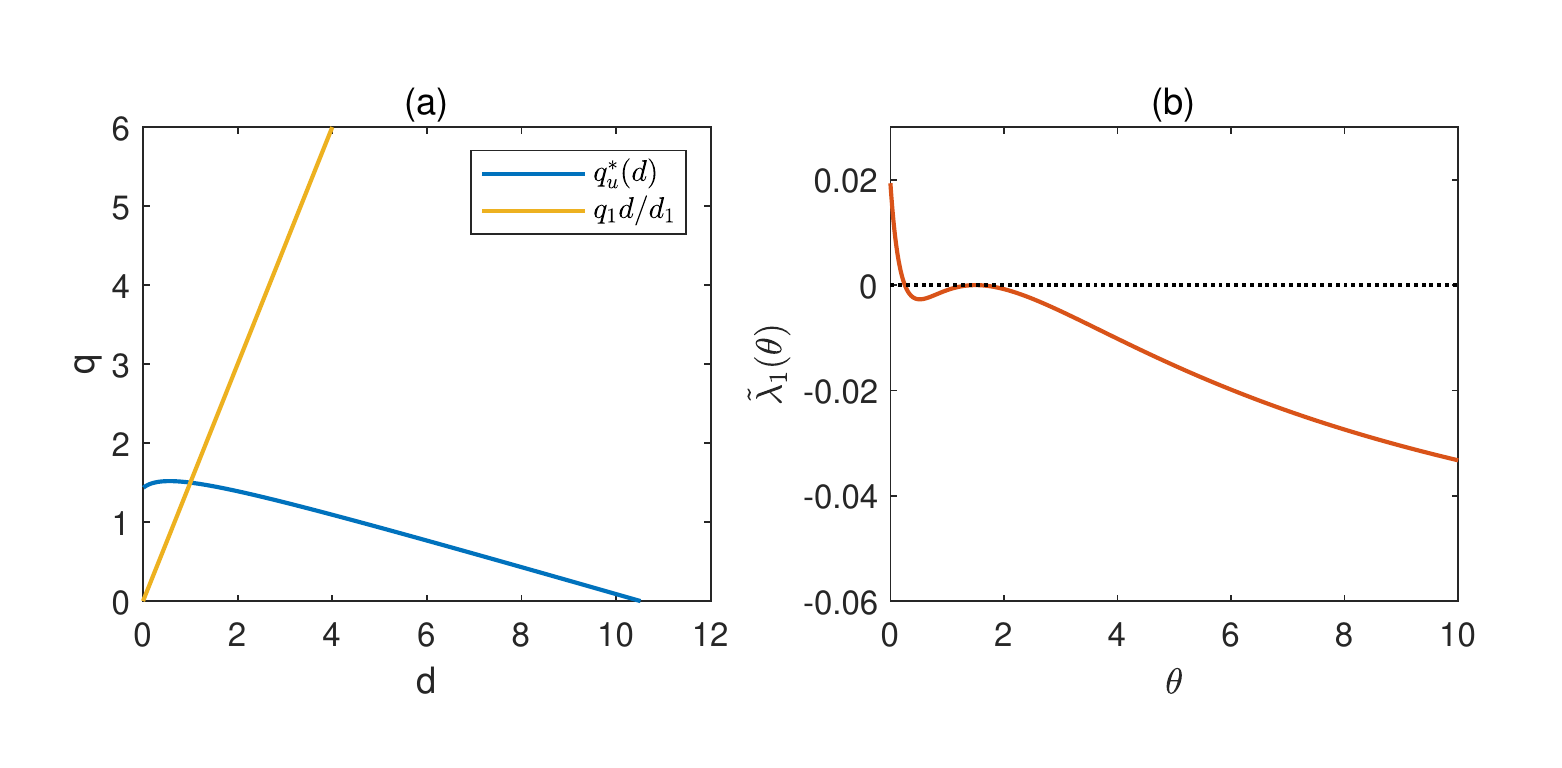}\\
\caption{(a) The invasion curve $(d^*(\theta), q^*(\theta))$ when $d_1=1$, $q_1=1.5$, $\bm r=(3,7,3)$, and $\bm k=(5,3,1)$. (b) The sign of the curve $\tilde{\lambda}_1(\theta)$ determines the stability of $(\bm 0,\bm v^*)$ when $(d_2, q_2)=(d^*(\theta), q^*(\theta))$.}
\label{interqla2}
\end{figure}

\begin{figure}[ht]
\centering\includegraphics[width=1\textwidth]{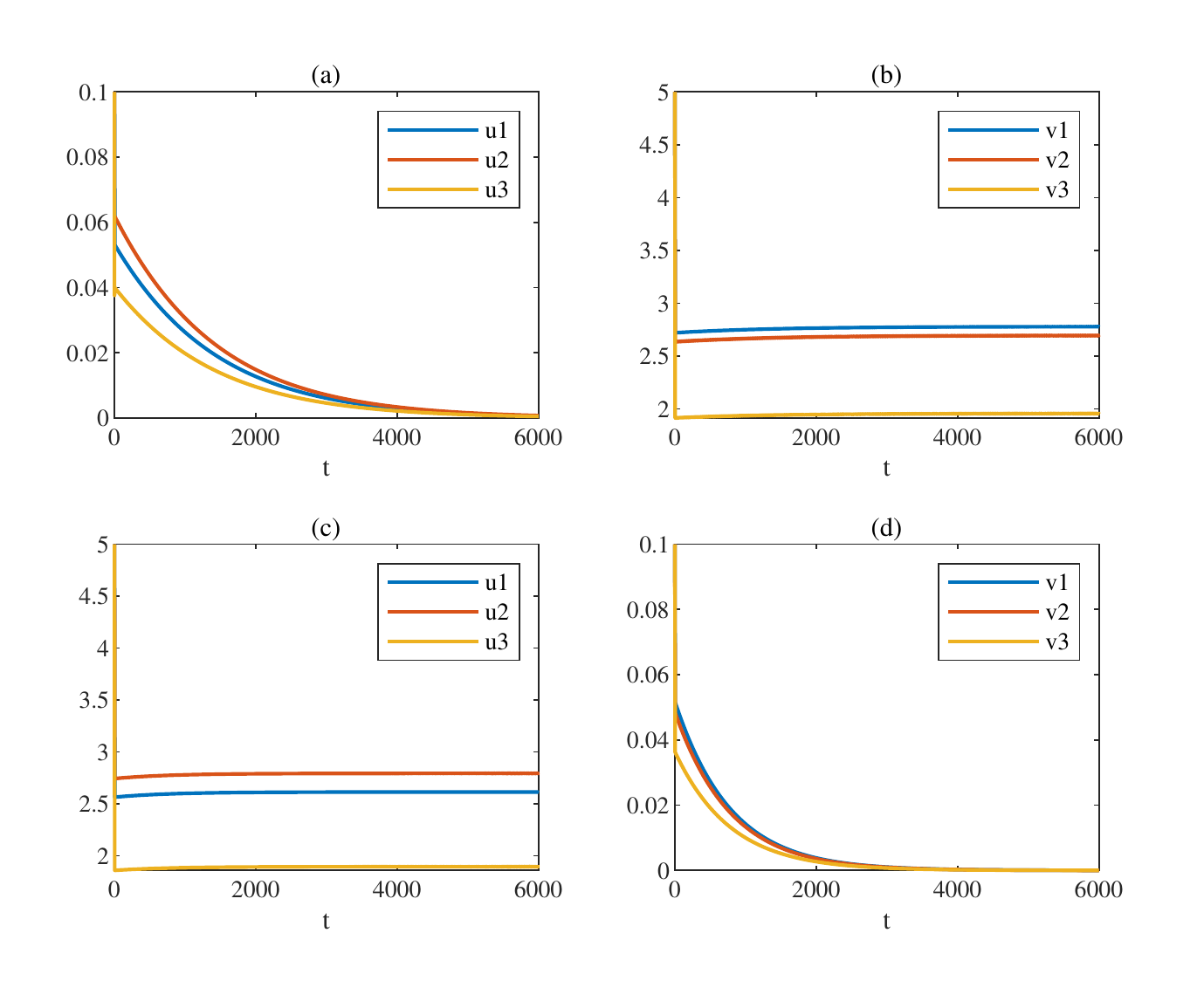}\\
\caption{Solutions of \eqref{3p} with $d_1=1$, $q_1=1.5$, $d_2=3.088$, $q_2=1.239$, $\bm r=(3,7,3)$, $\bm k=(5,3,1)$. For $(a)-(b)$, initial data is $\bm u(0)=(0.1,0.1,0.1)$ and  $\bm v(0)=(5,5,5)$, and species $\bm v$ wins the competition; for $(c)-(d)$, initial data is $\bm u(0)=(5,5,5)$ and $\bm v(0)=(0.1,0.1,0.1)$, and species $\bm u$ wins the competition.}
\label{interbistable2}
\end{figure}

\begin{figure}[ht]
\centering\includegraphics[width=1\textwidth]{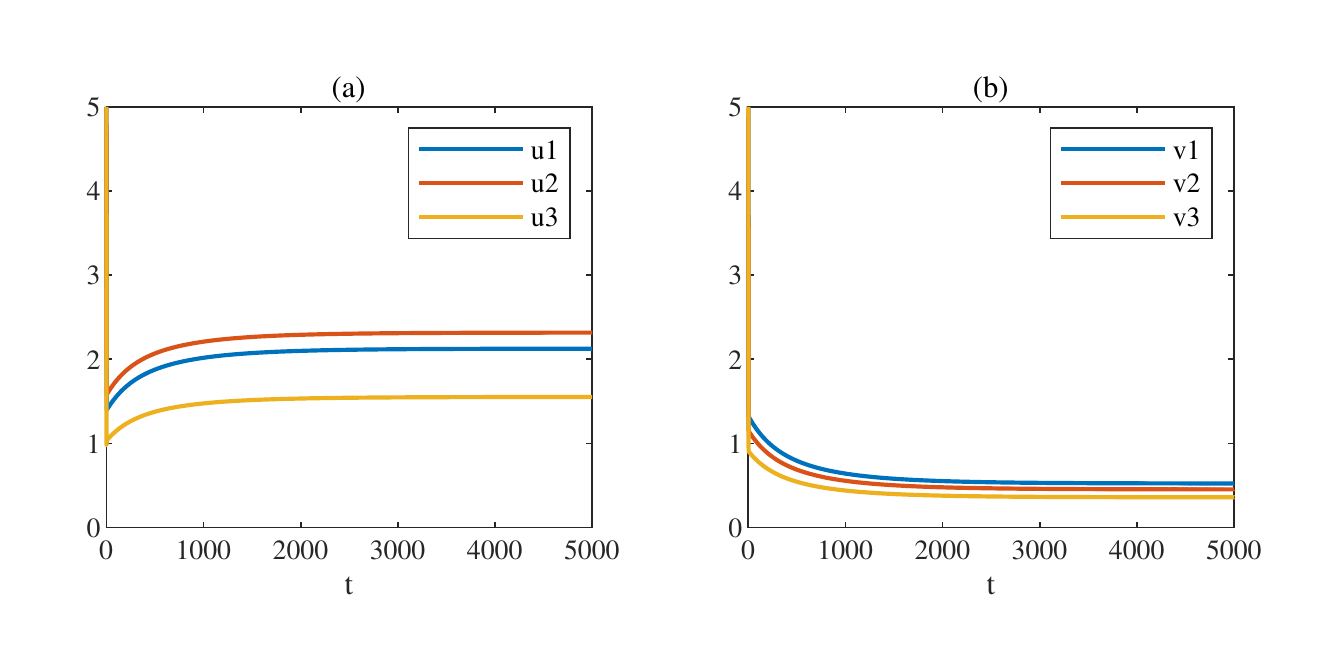}\\

\caption{Solution of \eqref{3p} with $d_1=1$, $q_1=1.5$, $d_2=10.28$, $q_2=0.03$, $\bm r=(3,7,3)$, $\bm k=(5,3,1)$, and initial data is $\bm u(0)=(5,5,5)$ and $\bm v(0)=(5,5,5)$. The two species seem to coexist.}
\label{intercoexist2}
\end{figure}

\section{Proofs for the invasion curve}
In this section, we present the proofs of the results on the invasion curve $q_{\bm u}^*(d)$. We begin with an analysis on $\bm u^*$. A similar result of the following lemma when $r_1=r_2=r_3$ except for the sign of $\sum_{i=1}^3r_i\left(1-{u^*_i}/{k_i}\right)$ can be found in \cite{jiang2020two}.

\begin{lemma}\label{uv}
Suppose that $(\bf {H})$ holds, $\bm r\gg\bm 0$,  $d_1>0$, and $q_1\ge 0$. Then the following statements on $\bm u^*$ hold:
\begin{enumerate}
   \item [${\rm (i)}$] $d_1u_{i+1}^*-(d_1+q_1)u_i^*<0$ for $i=1,2$;
   \item [${\rm (ii)}$] $u^*_1<k_1$ and $u^*_3>k_3$;
 \item [${\rm (iii)}$] If $q_1>\overline q$, then $u^*_1<u^*_2<u^*_3$ and $\sum_{i=1}^3r_i\left(1-\frac{u^*_i}{k_i}\right)>0$;
    \item [${\rm (iv)}$] If $q_1<\underline q$, then $u^*_1>u^*_2>u^*_3$ and  $\sum_{i=1}^3r_i\left(1-\frac{u^*_i}{k_i}\right)<0$.
\end{enumerate}
\end{lemma}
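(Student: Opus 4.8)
The plan is to extract three scalar identities from the equilibrium system $\sum_j(d_1D_{ij}+q_1Q_{ij})u_j^*+r_iu_i^*(1-u_i^*/k_i)=0$ and run everything off them. Writing $f_i:=r_i\left(1-u_i^*/k_i\right)$ and introducing the two fluxes $J_1:=d_1u_2^*-(d_1+q_1)u_1^*$ and $J_2:=d_1u_3^*-(d_1+q_1)u_2^*$, the patch-$1$ and patch-$3$ equations give the one-sided identities
\[
J_1=-u_1^*f_1,\qquad J_2=u_3^*f_3,
\]
the patch-$2$ equation gives $J_2-J_1=-u_2^*f_2$, and summing all three equations (using that every column of $D$ and of $Q$ sums to zero) yields the conservation identity $\sum_{i=1}^3u_i^*f_i=0$. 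From the two flux identities, statement (i) is \emph{equivalent} to (ii): $J_1<0\iff f_1>0\iff u_1^*<k_1$ and $J_2<0\iff f_3<0\iff u_3^*>k_3$. So it suffices to prove (ii), after which (i) is free.

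I would prove (ii) by a domino contradiction using $(\mathbf H)$. Assume $u_1^*\ge k_1$, so $f_1\le0$; then $J_1=-u_1^*f_1\ge0$ gives $u_2^*\ge\frac{d_1+q_1}{d_1}u_1^*>u_1^*\ge k_1>k_2$, hence $f_2<0$. Feeding this into $J_2-J_1=-u_2^*f_2>0$ forces $J_2>J_1\ge0$, and $J_2>0$ propagates to $u_3^*>\frac{d_1+q_1}{d_1}u_2^*>u_2^*>k_2>k_3$, so $f_3<0$; but then the flux identity equates $J_2$ to $u_3^*f_3<0$, contradicting $J_2>0$. The inequality $u_3^*>k_3$ follows from the mirror-image argument starting from $u_3^*\le k_3$.

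For the monotonicity in (iii)--(iv) I would convert the flux identities into exact sign rules for consecutive differences,
\[
u_2^*-u_1^*=\frac{u_1^*}{d_1}\bigl(q_1-f_1\bigr),\qquad u_2^*-u_3^*=\frac{-u_3^*}{d_1+q_1}\bigl(q_1+f_3\bigr),
\]
so that in the slow case $u_1^*>u_2^*\iff q_1<\frac{r_1}{k_1}(k_1-u_1^*)$ and $u_2^*>u_3^*\iff q_1<\frac{r_3}{k_3}(u_3^*-k_3)$. The honest difficulty, and the step I expect to be the main obstacle, is that these thresholds involve the unknown values $u_1^*,u_3^*$, so the crude bounds from $q_1<\underline q$ only settle the ``extreme'' configurations ($u_1^*\le k_2$ or $u_3^*\ge k_2$); one must rule out the mixed configurations in which $\bm u^*$ has an interior extremum. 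I expect to close these with a discrete maximum-principle argument on the patch-$2$ equation, rewritten as
\[
(d_1+q_1)(u_1^*-u_2^*)+d_1(u_3^*-u_2^*)+u_2^*f_2=0.
\]
An interior maximum at patch $2$ makes the first two terms $\le0$ and forces $f_2>0$ (i.e.\ $u_2^*<k_2$), while an interior minimum forces $f_2<0$ (i.e.\ $u_2^*>k_2$). Concretely, assuming the wrong ordering $u_2^*\ge u_1^*$ and splitting on the position of patch $3$: the branch $u_3^*<u_2^*$ makes patch $2$ an interior max yet the threshold forces $f_2<0$ (contradiction via the sign of $f_2$), while the branch $u_3^*\ge u_2^*$ gives a monotone profile with $u_3^*\ge u_2^*>k_2$, contradicting $q_1<\frac{r_3}{k_3}(k_2-k_3)$. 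The second inequality and the fast case (iii) follow by the same dichotomy with all inequalities reversed.

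Finally, for the sign of $\sum_i f_i$ I would introduce no new estimate: substituting $f_1=-J_1/u_1^*$, $f_3=J_2/u_3^*$, $f_2=(J_1-J_2)/u_2^*$ telescopes to
\[
\sum_{i=1}^3 f_i=J_1\Bigl(\tfrac{1}{u_2^*}-\tfrac{1}{u_1^*}\Bigr)+J_2\Bigl(\tfrac{1}{u_3^*}-\tfrac{1}{u_2^*}\Bigr).
\]
Since $J_1,J_2<0$ by (i), once monotonicity is established each bracket has a fixed sign, so both products are negative in the slow case (giving $\sum_i f_i<0$) and positive in the fast case (giving $\sum_i f_i>0$), which completes (iv) and (iii) respectively.
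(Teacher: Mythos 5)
Your proposal is correct and follows essentially the same route as the paper's proof: the same three scalar equilibrium identities drive the same domino contradiction for (i)--(ii), the same patch-1/patch-3 threshold inequalities combined with the sign structure of the patch-2 equation give the monotonicity in (iii)--(iv), and your telescoped formula for $\sum_{i}r_i\left(1-u_i^*/k_i\right)$ is exactly the paper's summation identity obtained by dividing each equation by $u_i^*$ and adding. The differences are purely organizational (you prove (ii) first and recover (i) via the flux equivalence, and you phrase the monotonicity step as a two-branch case split rather than a linear chain of deductions), not mathematical.
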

\begin{proof}
By \eqref{3p}, we have
\begin{equation}\label{3TS-1}
\begin{cases}
d_1u^*_2-(d_1+q_1)u^*_1=-r_1u^*_1\left(1-\ds\frac{u^*_1}{k_1}\right),\\
\left(d_1u_3^*-(d_1+q_1)u^*_2\right)-\left(d_1u_2^*-(d_1+q_1)u^*_1\right)=-r_2u^*_2\left(1-\ds\frac{u^*_2}{k_2}\right),\\
d_1u_3^*-(d_1+q_1)u^*_2=r_3u^*_3\left(1-\ds\frac{u^*_3}{k_3}\right).\\
\end{cases}
\end{equation}
Suppose to the contrary that $d_1u^*_2-(d_1+q_1)u^*_1\ge 0$. Then by the first equation of \eqref{3TS-1}, we have  $u_2^*\ge u_1^*\ge k_1$. This, together with assumption $(\bf {H})$ and the second equation of
 \eqref{3TS-1}, implies that $d_1u_3^*-(d_1+q_1)u^*_2>0$ and $u_3^*>k_3$, which contradicts the third equation of \eqref{3TS-1}. Therefore, we have $d_1u^*_2-(d_1+q_1)u^*_1<0$.  Similarly, we can prove $d_1u_3^*-(d_1+q_1)u^*_2<0$. This proves (i).
By (i) and the first and third equations of \eqref{3TS-1}, we can easily obtain (ii).

The proof of (iv) is similar to that of (iii), so we only prove (iii) here.  Suppose $q_1>\overline q$. We rewrite \eqref{3TS-1} as follows:
\begin{equation}\label{3TS}
\begin{cases}
d_1(u^*_2-u^*_1)=-r_1u^*_1\left(1-\ds\frac{q_1}{r_1}-\ds\frac{u^*_1}{k_1}\right),\\
d_1(u^*_3-u^*_2)-(d_1+q_1)(u^*_2-u^*_1)=-r_2u^*_2\left(1-\ds\frac{u^*_2}{k_2}\right),\\
-(d_1+q_1)(u^*_3-u^*_2)=-r_3u^*_3\left(1+\ds\frac{q_1}{r_3}-\ds\frac{u^*_3}{k_3}\right).\\
\end{cases}
\end{equation}
Suppose to the contrary that $u^*_1\ge u^*_2$. Then by the first equation of \eqref{3TS}, we have $k_1-\ds\frac{q_1k_1}{r_1}-u^*_1\ge0$. Since   $q_1>\ds\f{r_1}{k_1}(k_1-k_2)$, we obtain $$k_2>k_1-\ds\frac{q_1k_1}{r_1}\ge u_1^*\ge u_2^*.$$ Then by the second equation of \eqref{3TS}, we get $u^*_2>u^*_3$. This, combined with  $q_1>\ds\f{r_3}{k_3}(k_2-k_3)$, yields $$k_3\left(1+\ds\f{q_1}{r_3}\right)>k_2>u^*_2>u^*_3,$$ which contradicts the last equation of \eqref{3TS}. This proves $u^*_1< u^*_2$.

Suppose to the contrary that $u^*_2\ge u^*_3$. Then by the last equation of \eqref{3TS}, we have $u_3^*\ge k_3\left(1+\ds\f{q_1}{r_3}\right)$. By $q_1>\overline q$, we obtain 
$$
u_2^*\ge u_3^*\ge k_3\left(1+\ds\f{q_1}{r_3}\right)>k_2>k_1-\frac{q_1k_1}{r_1}.
$$
Then by the second equation of \eqref{3TS}, we have $u_1^*>u_2^*$. By the first equation of \eqref{3TS}, we get
$$
0>d_1(u^*_2-u^*_1)=-r_1u^*_1\left(1-\ds\frac{q_1}{r_1}-\ds\frac{u^*_1}{k_1}\right)>0,
$$
which is a contradiction. This proves $u_2^*<u_3^*$. 

Dividing the $i$-th equation of \eqref{3TS} by $u_i^*$, we have
\begin{equation}\label{induct5}
\begin{cases}
-(d_1+q_1)+d_1\frac{u^*_{2}}{u^*_1}+r_1\left(1-\frac{u_1^*}{k_1}\right)=0,\\
(d_1+q_1)\left(\frac{u^*_{1}}{u^*_2}-1\right)-d_1\left(1-\frac{u^*_{3}}{u^*_2}\right)+r_2\left(1-\frac{u_2^*}{k_2}\right)=0,\\
(d_1+q_1) \frac{u^*_{2}}{u^*_3}-d_1 +r_3\left(1-\frac{u_3^*}{k_3}\right)=0.
\end{cases}
\end{equation}
Adding up the equations in \eqref{induct5}, we obtain
\begin{equation}\label{induct6}
\begin{split}
&\sum_{i=2}^{3} \left((d_1+q_1)\left(\frac{u_{i-1}^*}{u_i^* }-1\right) +d_1\left(\frac{u_{i}^*}{u_{i-1}^*}-1\right)\right)+\sum_{i=1}^3 r_i\left(1-\frac{u_i^*}{k_i}\right)\\
=&\sum_{i=2}^{3}\frac{(u_{i-1}^*-u_i^*)((d_1+q_1)u_{i-1}^*-d_1u_i^*)}{u_{i-1}^*u_i^*}+ \sum_{i=1}^3r_i\left(1-\frac{u^*_i}{k_i}\right)=0.
\end{split}
\end{equation}
Then by (i) and $u_1^*<u_2^*<u_3^*$, we have $\sum_{i=1}^3r_i\left(1-\frac{u^*_i}{k_i}\right)>0$. This proves (iii).
\end{proof}

\subsection{Proof of Theorem \ref{inva}}
We prove the existence of the invasion curve $q=q_{\bm u}^*(d)$ in this subsection.

\begin{lemma}\label{eigenv}
Suppose that $(\bf {H})$ holds,  $\bm r\gg\bm 0$, and $d_1,q_1>0$.  Then the following statements hold about the semitrivial equilibrium  $(\bm u^*, \bm 0)$ of \eqref{3p}: 
\begin{itemize}
    \item[{${\rm (i)}$}] If $\sum_{i=1}^3 r_i\left(1-{u_i^*}/{k_i}\right)\ge 0$, then for any $d>0$ there exists $q_{\bm u}^*(d)>0$ such that
$\la_1\left(d,q_{\bm u}^*(d), \bm {1}-{\bm u^*}/{\bm k}\right)=0$, $\la_1\left(d,q, \bm {1}-{\bm u^*}/{\bm k}\right)<0$ for all $q>q_{\bm u}^*(d)$, and $\la_1\left(d,q, \bm {1}-{\bm u^*}/{\bm k}\right)>0$ for all $q<q_{\bm u}^*(d)$;

\item[{${\rm (ii)}$}] If $\sum_{i=1}^3 r_i\left(1-{u_i^*}/{k_i}\right)< 0$, then there exists  $d_0>0$ such that $\la_1(d_0,0,\bm {1}-{\bm u^*}/{\bm k})=0$,  $\la_1(d,0,\bm {1}-{\bm u^*}/{\bm k})<0$ for all $d>d_0$, and $\la_1(d,0,\bm {1}-{\bm u^*}/{\bm k})>0$ for all $d<d_0$. Moreover, the following results hold:
 \begin{enumerate}
\item[{${\rm (ii_1)}$}] For any $d\in(0,d_0)$,  there exists $q_{\bm u}^*(d)>0$ such that the statement in ${\rm (i)}$ holds;
\item[{${\rm (ii_2)}$}] For any $d\in [d_0, \infty)$, we have $\la_1(d,q,\bm {1}-{\bm u^*}/{\bm k})< 0$ for all $q>0$.
\end{enumerate}
\end{itemize}
\end{lemma}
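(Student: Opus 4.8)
The plan is to study, for fixed $d>0$, the principal eigenvalue of \eqref{3eigen} as a function of $q$. Write $m_i:=r_i(1-u_i^*/k_i)$, $M:=\mathrm{diag}(m_1,m_2,m_3)$, and $B(d,q):=dD+qQ+M$, so that $\lambda_1(d,q,\bm 1-\bm u^*/\bm k)=s(B(d,q))$; I abbreviate this $\lambda_1(d,q)$. Since $B(d,q)$ is irreducible and essentially nonnegative, $\lambda_1(d,q)$ is a simple eigenvalue, depends analytically on $q$, and carries a positive right eigenvector $\phi\gg\bm 0$ and a positive left eigenvector $\psi\gg\bm 0$. The one structural input I use throughout is Lemma \ref{uv}(ii): $u_1^*<k_1$ and $u_3^*>k_3$, hence $m_1>0>m_3$; in particular the $m_i$ are not all equal.

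The key step, and the one that makes the whole statement work, is a single-crossing property: I claim that whenever $\lambda_1(d,q)=0$ one has $\partial_q\lambda_1(d,q)<0$. By simple-eigenvalue perturbation, $\partial_q\lambda_1=\psi^T Q\phi/(\psi^T\phi)$, and since $Q\phi=(-\phi_1,\ \phi_1-\phi_2,\ \phi_2)^T$ this numerator equals $\phi_1(\psi_2-\psi_1)+\phi_2(\psi_3-\psi_2)$. At a zero we have $B(d,q)^T\psi=\bm 0$; reading the first and third scalar equations gives $\psi_2=\left(1-\frac{m_1}{d+q}\right)\psi_1$ and $\psi_3=\frac{d}{d-m_3}\psi_2$. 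Because $\psi\gg\bm 0$, $m_1>0$, and $m_3<0$, these force $0<\psi_3<\psi_2<\psi_1$, so $\psi^T Q\phi<0$ and $\partial_q\lambda_1(d,q)<0$, as claimed.

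Next I record the two endpoints. At $q=0$, $\lambda_1(d,0)=s(dD+M)$, and since $D$ is irreducible, essentially nonnegative, and has zero column sums (so $s(D)=0$ and $\bm\theta=(\tfrac13,\tfrac13,\tfrac13)$), Lemma \ref{theorem_quasi} shows $d\mapsto\lambda_1(d,0)$ is strictly decreasing (the $m_i$ are not all equal) from $\lim_{d\to0}=\max_i m_i\ (\ge m_1>0)$ down to $\lim_{d\to\infty}=\frac13\sum_i m_i$. For the other endpoint, a direct computation gives the eigenvalues of $dD+qQ$ as $0$ and $-(2d+q)\pm\sqrt{d(d+q)}$; after the diagonal similarity $S=\mathrm{diag}\big(1,\sqrt{1+q/d},\,1+q/d\big)$, which symmetrizes $dD+qQ$, the two nonzero eigenvalues tend to $-\infty$ as $q\to\infty$ while the normalized kernel direction concentrates on patch $3$, so a Rayleigh-quotient estimate yields $\lim_{q\to\infty}\lambda_1(d,q)=m_3<0$. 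In particular $\lambda_1(d,\cdot)$ is eventually negative.

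Finally I assemble the cases. If $\sum_i m_i\ge0$ (case (i)), then $\lambda_1(d,0)>0$ for every $d>0$ (strictly decreasing to a nonnegative limit) while $\lambda_1(d,q)<0$ for large $q$; the intermediate value theorem gives a zero, and the single-crossing property makes it the unique zero with $\lambda_1>0$ below and $\lambda_1<0$ above, which defines $q_{\bm u}^*(d)$. If $\sum_i m_i<0$ (case (ii)), the $q=0$ analysis yields a unique $d_0>0$ with $\lambda_1(d_0,0)=0$ and the stated monotone sign pattern in $d$; for $d\in(0,d_0)$ we have $\lambda_1(d,0)>0$ and the case-(i) argument produces $q_{\bm u}^*(d)$, giving (ii$_1$); for $d\ge d_0$ we have $\lambda_1(d,0)\le0$, so $\lambda_1(d,\cdot)<0$ just to the right of $0$ (by continuity when $d>d_0$, and by $\partial_q\lambda_1(d_0,0)<0$ when $d=d_0$), and if $\lambda_1(d,q)\ge0$ for some $q>0$ then at the smallest positive zero $q^*$ the function would rise into zero, forcing $\partial_q\lambda_1(d,q^*)\ge0$ and contradicting single crossing; hence $\lambda_1(d,q)<0$ for all $q>0$, giving (ii$_2$). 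The only genuinely delicate point is the $q\to\infty$ estimate; everything else reduces to the derivative-sign computation, Lemma \ref{theorem_quasi}, and the intermediate value theorem.
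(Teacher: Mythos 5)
Your proof is correct, and its skeleton is the same as the paper's: a single-crossing property ($\partial_q\lambda_1<0$ at any zero of $\lambda_1(d,\cdot)$), the $q=0$ analysis via Lemma \ref{theorem_quasi}, negativity of $\lambda_1(d,q)$ for large $q$, and assembly by the intermediate value theorem, all resting on Lemma \ref{uv}(ii) ($m_1>0>m_3$). Two sub-steps are packaged differently. For single crossing, you use the perturbation formula $\partial_q\lambda_1=\psi^TQ\phi/(\psi^T\phi)$ and solve for the left eigenvector $\psi$ explicitly at a zero; the paper differentiates the right-eigenvector equation and sums against the weights $\beta_i=\left(d/(d+q)\right)^{i-1}$ (its Eq. \eqref{sp66}). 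These are secretly the same computation: the diagonal similarity that symmetrizes $dD+qQ$ shows the left eigenvector is exactly $\beta_i$ times the right one, so your version is a cleaner packaging of the paper's weighting trick. For the $q\to\infty$ behavior, the paper normalizes the eigenvector, passes to a convergent subsequence, and divides the eigenvalue equation by $q$ to get $\lim_{q\to\infty}\lambda_1(d,q)=r_3\left(1-u_3^*/k_3\right)<0$; you instead compute the spectrum of $dD+qQ$ explicitly (your formula $-(2d+q)\pm\sqrt{d(d+q)}$ checks out against the trace and the sum of principal $2\times2$ minors), symmetrize, and argue by Rayleigh quotient. Both yield the same limit $m_3<0$; the paper's compactness argument is shorter, while yours is more quantitative but is only sketched at the one point that needs care --- you must argue that any near-maximizer of the quotient aligns with the kernel direction because the two nonzero eigenvalues diverge to $-\infty$, a routine but necessary detail.
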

\begin{proof}
For simplicity, we denote $\la_1(d,q):=\la_1\left(d,q,\bm {1}-{\bm u^*}/{\bm k}\right)$. An essential step of the proof is to show the following claim.

\emph{Claim 1:} Fixing $d>0$,  equation  $\lambda_1(d, q)=0$ has at most one root for $q\in [0, \infty)$. 

\noindent\emph{Proof of Claim:} Let $\bm\psi$ be the positive eigenvector corresponding to $\la_1(d,q)$ with  $\sum_{i=1}^3\psi_i=1$.
Then, we have
\begin{equation}\label{esti}
\la_1(d,q)\psi_i=d\sum_{j=1}^3 D_{ij}\psi_j+q\sum_{j=1}^3 Q_{ij}\psi_j+r_i\left(1-\frac{u_i^*}{k_i}\right)\psi_i, \;\;i=1,2,3.
\end{equation}
Differentiating \eqref{esti} with respect to $q$ and denoting $'=\partial/\partial q$, we obtain \begin{equation}\label{esti11}
  \la_1'\psi_i+\la_1\psi'_i=d\sum_{j=1}^3 D_{ij}\psi'_j+\sum_{j=1}^3 Q_{ij}\psi_j+q\sum_{j=1}^3 Q_{ij}\psi'_j+r_i\left(1-\frac{u_i^*}{k_i}\right)\psi'_i, \;\;i=1,2,3.
\end{equation}
Multiplying \eqref{esti} by $\psi_i'$ and \eqref{esti11} by $\psi_i$ and taking the difference of them, we have
\begin{equation}\label{esti2}
  \la_1'\psi^2_i=\sum_{j=1}^3 (dD_{ij}+qQ_{ij})(\psi_i\psi'_j-\psi_i'\psi_j)+\sum_{j=1}^3 Q_{ij}\psi_i\psi_j, \;\;i=1,2,3.
\end{equation}

Motivated by  \cite[Eq. (3.7)]{chen2022invasion}, we introduce $(\beta_1,\beta_2,\beta_3)=\left(1,{d}/(d+q), d^2/(d+q)^2\right)$. Multiplying \eqref{esti2} by $\beta_i$ and summing up in $i$, we obtain
\begin{equation}\label{sp66}
\lambda_1' \sum_{i=1}^3 \beta_i \psi_i^2
=\sum_{i=1}^{2}\left(\frac{d}{d+q}\right)^{i-1}\left(-\psi_i+\frac{d}{d+q}\psi_{i+1}\right)\psi_i.
\end{equation}
Suppose $\la_1\left(d,\tilde q\right)=0$ for some $\tilde q\ge 0$. By Lemma \ref{uv} (ii), we see that
\begin{equation*}
\begin{split}
&d\psi_{2}-(d+\tilde q)\psi_{1}=-\psi_1r_1\left(1-\frac{u^*_1}{k_1}\right)<0,\\
&d\psi_{3}-(d+\tilde q)\psi_{2}=\psi_3r_1\left(1-\frac{u^*_3}{k_3}\right)<0.
\end{split}
\end{equation*}
Therefore by \eqref{sp66}, we have $\lambda'(d, \tilde q)<0$. This proves the claim. 

According to the claim, whether the equation $\lambda_1(d, q)=0$ has a root in $q$ is determined by the sign of $\lambda_1(d, 0)$ and $\lim_{q\to\infty}\la_1(d,q)$. 

\emph{Claim 2:} $\lim_{q\to\infty}\la_1(d,q)<0$.

\noindent \emph{Proof of claim:} Adding up all the equations of \eqref{esti}, we have
\begin{equation}\label{limila}
\la_1(d,q)=\sum_{i=1}^3r_i\left(1-\frac{u_i^*}{k_i}\right)\psi_i,
\end{equation}
which implies that $\la_1(d,q)$ is bounded for $d,q>0$. So up to a subsequence, we may assume $\lim_{q\to\infty}\bm \psi=\tilde{\bm \psi}$.
Dividing \eqref{esti} by $q$ and taking $q\to \infty$, we obtain
\begin{equation*}
\sum_{i=1}^3Q_{ij}\tilde \psi_{j}=0,\;\;i=1,2,3,
\end{equation*}
which yields $\tilde{\bm \psi}=(0,0,1)^T$. Then taking $q\to\infty$ in \eqref{limila}, we have $$\lim_{q\to\infty}\la_1(d,q)=r_3\left(1-\frac{u_3^*}{k_3}\right)<0,$$
where we have used Lemma \ref{uv} (ii) in the last step. This proves the claim.

By Lemma \ref{theorem_quasi}, $\lambda_1(d, 0)$ is strictly decreasing in $d$ with $$
\lim_{d\to0}\la(d,0)=\max_{1\le i\le 3}r_i\left(1-\frac{u_i^*}{k_i}\right)>0\;\; \text{and}\;\; \lim_{d\to\infty}\la(d,0)=\ds\frac{1}{3}\sum_{i=1}^3 r_i\left(1-\frac{u_i^*}{k_i}\right),
$$
where we have used Lemma \ref{uv} (ii) again to see that $1-u_1^*/k_1>0$.
Now the desired results follow from this and Claims 1-2.
\end{proof}

We are ready to prove Theorem \ref{inva}.\\
\emph{Proof of Theorem \ref{inva}.} Let $d_0$ be defined in Lemma \ref{eigenv}, $d^*$ be defined by  \eqref{dstar}, and $q=q^*_{\bm u}(d): (0, d^*)\to \mathbb{R}_+$ be defined in Lemma \ref{eigenv}. Then Theorem \ref{inva} (i)-(ii) follows from Lemmas \ref{uv}-\ref{eigenv} and the fact that the stability/instability of   $(\bm u^*, \bm 0)$ is determined by the sign of $\la_1\left(d,q, \bm {1}-{\bm u^*}/{\bm k}\right)$.
The continuity of $q=q^*_{\bm u}(d)$ follows from $\partial_q \la_1\left(d,q^*_{\bm u}(d), \bm {1}-{\bm u^*}/{\bm k}\right)<0$ (Claim 1 of Lemma \ref{eigenv}) and the implicit function theorem. 
\qed

\subsection{Proof of Propositions \ref{locs}-\ref{profi}}
First we  prove the following useful result:
\begin{lemma}\label{ne0la}
Suppose that $(\bf {H})$ holds, $\bm r\gg\bm 0$,  and $d_1,q_1>0$. Then we have
$$
\la_1\left(d_2,q_2,\bm {1}-\frac{\bm u^*}{\bm k}\right)\ne0,
$$ 
if one of the following conditions holds:
\begin{enumerate}
    \item [{${\rm (i)}$}] $(d_2,q_2)\in G_{11}\cup G_{21}$;
    \item [{${\rm (ii)}$}] $q_1>\overline q$ and $(d_2,q_2)\in G_{12}\cup G_{22}$;
    \item [{${\rm (iii)}$}] $q_1<\underline q$ and  $(d_2,q_2)\in G_{13}\cup G_{23}$.
\end{enumerate}
\end{lemma}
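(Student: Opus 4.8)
The plan is to prove the sharper statement that $\la_1(d_2,q_2,\bm1-\bm u^*/\bm k)$ keeps a fixed \emph{nonzero} sign on each region. I abbreviate $\la_1(d,q):=\la_1(d,q,\bm1-\bm u^*/\bm k)=s\big(dD+qQ+\mathrm{diag}(r_i(1-u_i^*/k_i))\big)$ as in the proof of Lemma~\ref{eigenv}, and recall that $\bm u^*\gg\bm0$ is the Perron eigenvector of $d_1D+q_1Q+\mathrm{diag}(r_i(1-u_i^*/k_i))$ for the eigenvalue $0$, so that $\la_1(d_1,q_1)=0$. Two one-variable monotonicity facts drive everything. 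First, along a ray $q=\theta d$ the matrix equals $d(D+\theta Q)+\mathrm{diag}(r_i(1-u_i^*/k_i))$, and $D+\theta Q$ is irreducible, essentially nonnegative, with zero column sums, hence $s(D+\theta Q)=0$; Lemma~\ref{theorem_quasi} then makes $d\mapsto\la_1(d,\theta d)$ strictly decreasing (strict since $r_1(1-u_1^*/k_1)>0>r_3(1-u_3^*/k_3)$ by Lemma~\ref{uv}(ii)). Second, for fixed $d$ the map $q\mapsto\la_1(d,q)$ has at most one zero and tends to $r_3(1-u_3^*/k_3)<0$ as $q\to\infty$ (the two claims in the proof of Lemma~\ref{eigenv}).

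For (i) I use the reference ray $q=(q_1/d_1)d$ through $(d_1,q_1)$. For $(d_2,q_2)\in G_{11}$ I first evaluate at the point $(d_2,(q_1/d_1)d_2)$ of this ray: radial monotonicity and $d_2\ge d_1$ give $\la_1\le0$ there, strictly if $d_2>d_1$; since $q_2\ge(q_1/d_1)d_2$, the at-most-one-zero property in $q$ together with $\la_1<0$ at $q=\infty$ forces $\la_1(d_2,q_2)<0$, the case $d_2=d_1$ being covered by $\partial_q\la_1(d_1,q_1)<0$. The region $G_{21}$ is symmetric and yields $\la_1(d_2,q_2)>0$. This proves (i).

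Because the regions in (ii)--(iii) lie on rays different from the reference ray, I instead reduce to the sign of the single-variable function $g(d):=\la_1(d,q_1)$ on the horizontal line $q=q_1$. The at-most-one-zero property in $q$ shows that $g(d)<0$ implies $\la_1(d,q)<0$ for all $q\ge q_1$, and $g(d)>0$ implies $\la_1(d,q)>0$ for all $0<q\le q_1$; hence $G_{12},G_{13}$ (where $q\ge q_1$) and $G_{22},G_{23}$ (where $q\le q_1$) all reduce to the sign of $g$. Concretely it suffices to show that for $q_1>\overline q$ one has $g<0$ on $(0,d_1)$ and $g>0$ on $(d_1,\infty)$, and for $q_1<\underline q$ the reverse. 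I obtain this by proving that $d_1$ is the \emph{unique, transversal} zero of $g$. Differentiating the eigenvalue relation in $d$ at a zero $\bar d$ with Perron eigenvector $\psi\gg\bm0$, and weighting the $i$-th equation by $\gamma_i=(\bar d/(\bar d+q_1))^{i-1}$ so that the movement terms cancel by detailed balance (exactly the mechanism behind \eqref{sp66}), produces
$$g'(\bar d)\sum_{i}\gamma_i\psi_i^2=B_1(\psi_1-\psi_2)+\frac{\bar d}{\bar d+q_1}B_2(\psi_2-\psi_3),$$
where $B_1=\frac{\bar d}{\bar d+q_1}\psi_2-\psi_1$ and $B_2=\frac{\bar d}{\bar d+q_1}\psi_3-\psi_2$ are both negative at a zero, by rows $1$ and $3$ of the eigenvalue equation and $1-u_1^*/k_1>0>1-u_3^*/k_3$.

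The decisive step is the monotonicity of $\psi$ at a zero. Rows $1$ and $3$ give $\psi_2/\psi_1=(\bar d+q_1-r_1(1-u_1^*/k_1))/\bar d$ and $\psi_3/\psi_2=(\bar d+q_1)/(\bar d-r_3(1-u_3^*/k_3))$, so $\psi$ is strictly increasing precisely when $q_1>r_1(1-u_1^*/k_1)$ and $q_1>-r_3(1-u_3^*/k_3)$, and strictly decreasing when both inequalities reverse. These are exactly the thresholds: the resident identities \eqref{3TS-1} give that $r_1(1-u_1^*/k_1)<q_1$ is equivalent to $u_1^*<u_2^*$, and $-r_3(1-u_3^*/k_3)<q_1$ to $u_2^*<u_3^*$, so by Lemma~\ref{uv}(iii) $\psi$ is increasing when $q_1>\overline q$ and by Lemma~\ref{uv}(iv) $\psi$ is decreasing when $q_1<\underline q$. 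Substituting the signs of $\psi_1-\psi_2$ and $\psi_2-\psi_3$ into the identity, together with $B_1,B_2<0$, gives $g'(\bar d)>0$ at every zero when $q_1>\overline q$ and $g'(\bar d)<0$ at every zero when $q_1<\underline q$. In either case all zeros are transversal crossings of a single sign, so $g$ has at most one; as $g(d_1)=0$ this zero is $d_1$, and the sign of $g$ on each side of $d_1$ is then dictated by the sign of $g'(d_1)$, which is the desired pattern. I expect this eigenvector-monotonicity and derivative-sign step to be the main obstacle: it is the only place the hypotheses $q_1>\overline q$ and $q_1<\underline q$ are genuinely used, and it hinges on finding the weight $\gamma_i$ that cancels the drift and diffusion cross-terms; transversality conveniently removes any need to compute the limits of $g$ at $0$ and $\infty$.
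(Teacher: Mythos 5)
Your proof is correct, but it takes a genuinely different route from the paper's. The paper argues by contradiction: assuming $\la_1\left(d_2,q_2,\bm 1-\bm u^*/\bm k\right)=0$ with positive eigenvector $\bm\phi$, it forms the flux quantities $\tilde f_j=d_1u^*_{j+1}-(d_1+q_1)u^*_j$ and $\tilde g_j=d_2\phi_{j+1}-(d_2+q_2)\phi_j$, establishes the two weighted summation-by-parts identities \eqref{ineq2mo}--\eqref{ineq3mo} coupling $\bm\phi$ to $\bm u^*$, and then in each region forces every summand to have a strict sign (via $\tilde f_j,\tilde g_j<0$, the slope inequality $(q_1-q_2)/(d_1-d_2)\ge q_2/d_2$ in case (i), and the monotonicity of $\bm u^*$ from Lemma \ref{uv} in cases (ii)--(iii)), so the identities cannot hold. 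You instead prove the stronger sign-definiteness statement by one-dimensional slicing: case (i) from strict radial monotonicity of $d\mapsto\la_1(d,\theta d)$ (Lemma \ref{theorem_quasi}, since $D+\theta Q$ has zero column sums and the diagonal $r_i(1-u_i^*/k_i)$ is non-constant by Lemma \ref{uv}(ii)) combined with the ``at most one zero in $q$, always crossed downward'' structure from the claims in Lemma \ref{eigenv}; cases (ii)--(iii) by showing $g(d)=\la_1(d,q_1)$ vanishes only at $d_1$, which you obtain from the same detailed-balance derivative identity as the paper's \eqref{sp6600} but evaluated at an \emph{arbitrary} zero $\bar d$, together with the key observation that the eigenvector monotonicity conditions $q_1\gtrless r_1(1-u_1^*/k_1)$ and $q_1\gtrless -r_3(1-u_3^*/k_3)$ do not involve $\bar d$ and are equivalent, through the resident identities \eqref{3TS-1}, to the monotonicity of $\bm u^*$ given by Lemma \ref{uv}(iii)--(iv); uniform transversality then yields uniqueness of the zero. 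Both routes rest on Lemma \ref{uv}, but yours buys more: it directly delivers the conclusions of Proposition \ref{locs} (the actual signs, hence stability/instability on each region) without the homotopy-in-$\bm k$ argument and the appeal to \cite[Theorem 4.2]{chen2022invasion} that the paper uses there, whereas the paper's coupling-identity mechanism is more uniform, treating every point $(d_2,q_2)$ of all regions by one algebraic computation with no eigenvalue-perturbation or transversality reasoning.
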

\begin{proof}
Suppose to the contrary that $\lambda_1\left(d_2,q_2,\bm {1}-{\bm u^*}/{\bm k}\right)=0$, and let $\bm \phi\gg0$ be a corresponding eigenvector. Note that $\lambda_1\left(d_1,q_1,\bm {1}-{\bm u^*}/{\bm k}\right)=0$ with a corresponding eigenvector $\bm u^*$.
Let
\begin{equation*}
\tilde  f_0=\tilde f_3=0,\;\;\tilde g_0=\tilde g_3=0,
\end{equation*}
and 
\begin{equation}\label{f}
\tilde f_j=d_1 u^*_{j+1}-(d_1+q_1)u_j^*,\;\;\tilde g_j=d_2\phi_{j+1}-(d_2+q_2)\phi_j, \;\; j=1, 2.
\end{equation}
Then we have
\begin{subequations}\label{sup}
\begin{align}
& \tilde f_{j}-\tilde f_{j-1}=-r_ju^*_j\left(1-\ds\f{u_j^*}{k_j}\right),\;\;\;j=1,2,3,  \label{supp-1}\\
&  \tilde g_{j}-\tilde g_{j-1}=-r_j\phi_j\left(1-\ds\f{u_j^*}{k_j}\right),\;\;\;j=1,2,3.\label{supp-2}
\end{align}
\end{subequations}
 Using similar arguments as in \cite[Lemma 5.7]{chen2022invasion}, we can show
\begin{equation}\label{ineq2mo}
\sum_{j=1}^{2}\left[(d_1-d_2)(\phi_{j+1}-\phi_j)-(q_1-q_2)\phi_j\right]\tilde f_j{\ds\f{d_1^j}{\left(d_1+q_1\right)^{j+1}}}=0,
\end{equation}
and
\begin{equation}\label{ineq3mo}
\sum_{k=1}^{2}\left[(d_2-d_1)(u^*_{j+1}-u^*_j)-(q_2-q_1)u^*_j\right]\tilde g_j{\ds\f{d_2^j}{\left(d_2+q_2\right)^{j+1}}}=0.
\end{equation}
Indeed, multiplying \eqref{supp-1} by $\left(\ds\f{d_1}{d_1+q_1}\right)^j\phi_j$, and summing up from $j=1$ to $j=3$, we have
\begin{equation}\label{sumf}
\begin{split}
&-\sum_{j=1}^3r_ju^*_j\phi_j\left(1-\ds\f{u_j^*}{k_j}\right)\left(\f{d_1}{d_1+q_1}\right)^j\\
=&\sum_{j=1}^{3}\left(\tilde f_j-\tilde f_{j-1}\right)\phi_j\left(\f{d_1}{d_1+q_1}\right)^j\\
=&\tilde f_3\phi_3\left(\f{d_1}{d_1+q_1}\right)^3-\tilde f_{0}\phi_1\left(\f{d_1}{d_1+q_1}\right)+\sum_{j=1}^{2}\tilde f_j\ds\left(\f{d_1}{d_1+q_1}\right)^j\left(\phi_j-\ds\f{d_1}{d_1+q_1}\phi_{j+1}\right)\\
=&-\sum_{j=1}^{2}(d_1\phi_{j+1}-(d_1+q_1)\phi_j)\tilde f_j\ds\f{d_1^j}{\left(d_1+q_1\right)^{j+1}},
\end{split}
\end{equation}
where we have used $\tilde f_3=\tilde f_0=0$ in the last step.
Similarly, multiplying \eqref{supp-2} by $\left(\ds\f{d_1}{d_1+q_1}\right)^ju^*_j$ and summing up from $j=1$ to $j=3$, we obtain
\begin{equation}\label{sumtidg}
\begin{split}
-\sum_{j=1}^3r_ju^*_j\phi_j\left(1-\ds\f{u_j^*}{k_j}\right)\left(\f{d_1}{d_1+q_1}\right)^j
=-\sum_{j=1}^{2} \tilde g_j\tilde f_j\ds\f{d_1^j}{\left(d_1+q_1\right)^{j+1}}.
\end{split}
\end{equation}
Taking the difference of  \eqref{sumf} and \eqref{sumtidg}, we obtain  \eqref{ineq2mo}. Similarly, we can prove \eqref{ineq3mo}.

By Lemma \ref{uv} (i)-(ii) and \eqref{supp-2}, we  have $\tilde f_j, \tilde g_j<0$ for  $j=1,2$. Now we obtain a contradiction for each of (i)-(iii).

(i) We only consider the case  $(d_2,q_2)\in G_{21}$, since the case  $(d_2,q_2)\in G_{11}$ can be studied similarly.  Suppose $(d_2,q_2)\in G_{21}$. Then we have $d_2\le d_1, q_2\le q_1d_2/d_1$, and $(d_1, q_1)\neq (d_2, q_2)$. If $d_1\neq d_2$, then it is easy to check that 
\begin{equation*}
\ds\f{q_1-q_2}{d_1-d_2}\ge\ds\f{q_2}{d_2}.
\end{equation*}
(This inequality can be found in  \cite[Lemma 2.4]{zhou2016lotka}.) This, together with $\tilde g_1,\tilde g_2<0$, yields
\begin{equation}\label{dq2}
(d_1-d_2)(\phi_{j+1}-\phi_j)-(q_1-q_2)\phi_j<0,\;\; j=1,2.
\end{equation}
If $d_1=d_2$, then $q_1>q_2$ and \eqref{dq2} also holds. 
Then by $\tilde f_1,\tilde f_2<0$ and \eqref{ineq2mo}, we have
\begin{equation*}
0<\sum_{j=1}^{2}\left[(d_1-d_2)(\phi_{j+1}-\phi_j)-(q_1-q_2)\phi_j\right]\tilde f_j{\ds\f{d_1^j}{\left(d_1+q_1\right)^{j+1}}}=0,
\end{equation*}
which is a contradiction.


(ii) We only need to obtain a contradiction for the case $(d_2,q_2)\in G_{22}$, since the case $(d_2, q_2)\in G_{12}$ can be studied similarly.
Suppose $(d_2,q_2)\in G_{22}$. Then we have $d_2>d_1$ and $q_2\le q_1$. By Lemma \ref{uv} (i), we have
$u_1^*<u_2^*<u^*_3$, which implies that
$$
(d_2-d_1)(u^*_{j+1}-u^*_j)-(q_2-q_1)u^*_j>0, \ \ j=1, 2.
$$
This, combined with  $\tilde g_1, \tilde g_2<0$ and \eqref{ineq3mo}, gives a contradiction.

(iii) We only obtain a contradiction for the case $(d_2,q_2)\in G_{23}$, since the case $(d_2,q_2)\in G_{13}$ can be studied similarly.
Suppose $(d_2,q_2)\in G_{23}$. Then we have $d_2\le d_1$, $q_2\le q_1$, and $(d_1, q_1)\neq (d_2, q_2)$. By Lemma \ref{uv} (ii), we have
$u_1^*>u_2^*>u^*_3$, which implies that
$$
(d_2-d_1)(u^*_{j+1}-u^*_j)-(q_2-q_1)u^*_j>0, \ \ j=1, 2.
$$
This combined with  $\tilde g_1, \tilde g_2<0$ and \eqref{ineq3mo} gives a contradiction.
\end{proof}

We are ready to prove Propositions \ref{locs}-\ref{profi}. 

\noindent\emph{Proof of Proposition \ref{locs}.} (i) We only prove the case $G_{21}\subset S_2$, (i.e., $(\bm u^*,\bm 0)$ is unstable for $(d_2,q_2)\in G_{21}$), since the case $ G_{11}\subset S_1$ can be proved similarly.
To avoid confusion, we denote $\bm u^*$ by $\bm u^*_{\bm k}$.  It is easy to see that $\bm u^*_{\bm k}$ depends continuously on $\bm k$.

Suppose $(d_2, q_2)\in G_{21}$. We need to prove that $\lambda_1\left(d_2,q_2,\bm {1}-{\bm u^*_{\bm k}}/{\bm k}\right)>0$. Suppose to the contrary that  $\lambda_1\left(d_2,q_2,\bm {1}-{\bm u^*_{\bm k}}/{\bm k}\right)\le 0$. By Lemma \ref{ne0la}, we must have $\lambda_1\left(d_2,q_2,\bm {1}-{\bm u^*_{\bm k}}/{\bm k}\right)< 0$.
By \cite[Theorem 4.2]{chen2022invasion}, we have $\lambda_1\left(d_2,q_2,\bm {1}-{\bm u^*_{\bm k'}}/{\bm k'}\right)>0$, where $\bm k'=(k_3,k_3,k_3)$.

Let $\Lambda(s):=\lambda_1\left(d_2,q_2,\bm {1}-{\bm u^*_{\bm k_1(s)}}/{\bm k_1(s)}\right)$, where
$\bm k_1(s)=s\bm k+(1-s)\bm k'$ satisfies ({\bf H}) for any $s\in [0, 1]$. Since
$$
\Lambda(1)=\lambda_1\left(d_2,q_2,\bm {1}-\frac{\bm u^*_{\bm k}}{\bm k}\right)<0\;\;\text{ and}\;\;
\Lambda(0)=\lambda_1\left(d_2,q_2,\bm {1}-\frac{\bm u^*_{\bm k'}}{\bm k'}\right)>0,$$
there exists $s_0\in(0,1)$ such that
$\Lambda(s_0)=0$, which contradicts Lemma \ref{ne0la}.

(ii) By \cite[Theorem 4.2]{chen2022invasion}, if $(d_2,q_2)\in G_{22}$ then $\lambda_1\left(d_2,q_2,\bm {1}-{\bm u^*_{\bm k'}}/{\bm k'}\right)>0$; and if $(d_2,q_2)\in G_{12}$, then $\lambda_1\left(d_2,q_2,\bm {1}-{\bm u^*_{\bm k'}}/{\bm k'}\right)<0$,
where $\bm k'=(k_3,k_3,k_3)$. Then using similar arguments as (i), we can prove (ii).

(iii) Let $\bm\psi$ be the positive eigenvector corresponding to $\lambda_1:=\la_1\left(d_2,q_1,\bm {1}-{\bm u^*}/{\bm k}\right)$ with  $\sum_{i=1}^3\psi_i=1$.
Then, we have
\begin{equation}\label{esti00}
\la_1\psi_i=d_2\sum_{j=1}^3 D_{ij}\psi_j+q_1\sum_{j=1}^3Q_{ij}\psi_j+r_i\left(1-\frac{u_i^*}{k_i}\right)\psi_i, \;\;i=1,2,3.
\end{equation}
Differentiating \eqref{esti} with respect to $d_2$ and denoting $'=\partial/\partial d_2$, we obtain \begin{equation}\label{esti100}
  \la_1'\psi_i+\la_1\psi'_i=d_2\sum_{j=1}^3 D_{ij}\psi'_j+\sum_{j=1}^3 D_{ij}\psi_j+q_1\sum_{j=1}^3Q_{ij}\psi'_j+r_i\left(1-\frac{u_i^*}{k_i}\right)\psi'_i, \;\;i=1,2,3.
\end{equation}
Multiplying \eqref{esti00} by $\psi_i'$ and \eqref{esti100} by $\psi_i$ and taking the difference of them, we have
\begin{equation}\label{esti200}
  \la_1'\psi^2_i=\sum_{j=1}^3 (d_2D_{ij}+q_1Q_{ij})(\psi_i\psi'_j-\psi_i'\psi_j)+\sum_{j=1}^3 D_{ij}\psi_i\psi_j, \;\;i=1,2,3.
\end{equation}

Similar to the proof of Lemma \ref{eigenv}, let $(\beta_1,\beta_2,\beta_3)=\left(1,{d_2}/(d_2+q_1), d_2^2/(d_2+q_1)^2\right)$. Multiplying \eqref{esti200} by $\beta_i$ and adding up them in $i$, we obtain
\begin{equation}\label{sp6600}
\lambda_1' \sum_{i=1}^3 \beta_i \psi_i^2
=\sum_{i=1}^{2}\left(\frac{d_2}{d_2+q_1}\right)^{i-1}\left(-\psi_i+\frac{d_2}{d_2+q_1}\psi_{i+1}\right)(\psi_i-\psi_{i+1}).
\end{equation} 
Note that $\la_1\left(d_1,q_1,\bm {1}-{\bm u^*}/{\bm k}\right)=0$ with a corresponding eigenvector $(u_1^*, u^*_2, u_3^*)^T$. Moreover, by Lemma \ref{uv}, we have $d_1u_{i+1}^*-(d_1+q_1)u_i^*<0$ for $i=1, 2$ and $u_1^*>u^*_2>u_3^*$. Then it follows from \eqref{sp6600} that 
\begin{equation}\label{smaqsign}
\left.\f{\partial \la_1\left(d_2,q_1,\bm {1}-{\bm u^*}/{\bm k}\right)}{\partial d_2}\right|_{d_2=d_1}<0.
\end{equation}
This implies that $\la_1\left(d_2,q_1,\bm {1}-{\bm u^*}/{\bm k}\right)>0$ if $0<d_1-d_2\ll1$ and $\la_1\left(d_2,q_1,\bm {1}-\frac{\bm u^*}{\bm k}\right)<0$ if
$0<d_2-d_1\ll 1$. Then, by Lemma \ref{ne0la}, we have $G_{13}\subset S_1$ and $G_{23}\subset S_2$.
\qed

\begin{remark}
A similar inequality of  \eqref{smaqsign} is proved in \cite{jiang2020two}, and we include the proof for completeness here.
\end{remark}


\noindent\emph{Proof of Proposition \ref{profi}.} For any $0<d<d^*$, let $\bm\psi$ be the eigenvector corresponding to $\la_1\left(d,q_{\bm u}^*(d),\bm {1}-{\bm u^*}/{\bm k}\right)=0$ with $\bm\psi\gg0$ and $\sum_{i=1}^3\psi_i=1$. Then
\begin{equation}\label{esti1}
d\sum_{j=1}^3 D_{ij}\psi_j+q_{\bm u}^*(d)\sum_{j=1}^3 Q_{ij}\psi_j+r_i\left(1-\frac{u_i^*}{k_i}\right)\psi_i=0, \;\;i=1,2,3.
\end{equation}

(i) Up to a subsequence, we may assume $\lim_{d\to0}\bm \psi=\tilde{\bm \psi}$ for some $\tilde{\bm \psi}\ge \bm 0$ and  $\sum_{i=1}^3\tilde\psi_i=1$.
We first claim that  $q_{\bm u}^*(d)$ is bounded for $d\in (0,\delta)$ with $\delta\ll1$.
If it is not true, then dividing \eqref{esti1} by $q_{\bm u}^*(d)$ and taking $d\to0$, we have
\begin{equation}\label{esti4}
\sum_{j=1}^3Q_{ij}\tilde \psi_{j}=0,\;\;i=1,2,3,
\end{equation}
which yields $\tilde{\bm \psi}=(0,0,1)^T$.
Adding up all the equations of \eqref{esti1}, we have
\begin{equation}\label{esti3}
\sum_{i=1}^3r_i\left(1-\frac{u_i^*}{k_i}\right)\tilde\psi_i=0.
\end{equation}
Taking $d\to0$ in \eqref{esti3}, we have
$k_3-u_3^*=0$, which contradicts Lemma \ref{uv} (ii). This proves the claim.
By the claim, up to a subsequence, we may assume $\lim_{d\to0}q^*(\theta)=\tilde q_0\in[0,\infty)$.
Consequently,  for sufficiently small $\epsilon>0$, there exists $\bar d>0$ such that $q_{\bm u}^*(d)<\tilde q_0+\epsilon$ for all $0<d<\bar d$.
It follows from Lemma \ref{eigenv} that
\begin{equation}\label{qbound}
\la_1\left(d, \tilde q_0+\epsilon,  \bm {1}-\frac{\bm u^*}{\bm k}\right)<0
\end{equation}
for all  $0<d<\bar d$.
Hence,
\begin{equation*}
\begin{split}
&\lim_{d\to 0}\la_1\left(d, \tilde q_0+\epsilon,  \bm {1}-\frac{\bm u^*}{\bm k}\right)\\=&\ds\max\left\{r_1\left(1-\frac{u_1^*}{k_1}\right)-(\tilde q_0+\epsilon),r_2\left(1-\frac{u_2^*}{k_2}\right)-(\tilde q_0+\epsilon),r_3\left(1-\frac{u_3^*}{k_3}\right)\right\}\le 0.\\
\end{split}
\end{equation*}
Since $k_3-u_3^*=0$ and $\epsilon>0$ was arbitrary, 
\begin{equation}\label{supp}
\ds\max\left\{r_1\left(1-\frac{u_1^*}{k_1}\right)-\tilde q_0,r_2\left(1-\frac{u_2^*}{k_2}\right)-\tilde q_0\right\}\le0.
\end{equation}
Therefore, we have $\tilde q_0\ge q_0>0$.
Similarly, we can prove  $\tilde q_0\le q_0$. This proves (i).

Now we prove (ii)-(iv). If we show that
\eqref{qinf1} holds when $\sum_{i=1}^3 r_i\left(1-{u_i^*}/{k_i}\right)< 0$, and 
\eqref{qinf2} holds when $\sum_{i=1}^3 r_i\left(1-{u_i^*}/{k_i}\right)\ge 0$.
Then (iv) holds, and (ii)-(iii) follow from Theorem \ref{uv} (iii)-(iv).

By Lemma \ref{eigenv} (i), the function $q=q_{\bm u}^*(d)$ is defined for $d\in(0,\infty)$ when $\sum_{i=1}^3 r_i\left(1-{u_i^*}/{k_i}\right)\ge 0$.
We claim that ${q_{\bm u}^*(d)}/{d}$ is bounded for $d\in(\delta,\infty)$ for any fixed $\delta>1$. If it is not true, up to a subsequence, we may assume
\begin{equation*}
\lim_{d\to \infty}\ds\f{q_{\bm u}^*(d)}{d}=\infty\;\;
\text{and}\;\;\lim_{d\to \infty} \bm\psi=\hat{\bm\psi}
\end{equation*}
for some $\hat{\bm \psi}\ge \bm 0$ and  $\sum_{i=1}^3\hat\psi_i=1$. Then dividing \eqref{esti1} by $q_{\bm u}^*(d)$ and taking $d\to\infty$, we can obtain a contradiction  using similar arguments as in the proof of (i). Therefore, ${q_{\bm u}^*(d)}/{d}$ is bounded for $d\in(\delta,\infty)$. Then using similar arguments as in the proof of \cite[Proposition 4.4]{chen2022invasion}, we can show that \eqref{qinf2} holds.

 By Lemma \ref{eigenv} (ii), the function $q=q_{\bm u}^*(d)$ is defined for $d\in(0,d_0)$   
 when \break
 $\sum_{i=1}^3 r_i\left(1-{u_i^*}/{k_i}\right)< 0$. Using similar arguments as in (i), we can show that $q_{\bm u}^*(d)$ is bounded for $d\in(d_0-\delta,d_0)$ for some $\delta\ll1$. Then,
up to a subsequence, we may assume
\begin{equation*}
\lim_{d\to d_0} q_{\bm u}^*(d)=\eta\;\;\text{and}\;\;\lim_{d\to d_0} \bm\psi=\bm\psi^*
\end{equation*}
for some ${\bm \psi^*}\ge \bm 0$ and  $\sum_{i=1}^3\psi^*_i=1$. Taking $d\to d_0$ in \eqref{esti1}, we see that
\begin{equation}\label{qdinfty0}
d_0\sum_{j=1}^3 D_{ij}\psi^*_j+\eta\sum_{j=1}^3 Q_{ij}\psi^*_j+r_i\left(1-\frac{u_i^*}{k_i}\right)\psi^*_i=0, \;\;i=1,2,3,
\end{equation}
which yields $\la_1\left(d_0,\eta, \bm {1}-{\bm u^*}/{\bm k}\right)=0$. By the proof of Lemma \ref{eigenv},  $\la_1\left(d_0,q, \bm {1}-{\bm u^*}/{\bm k}\right)=0$ has at most one root for $q\in [0,\infty)$. Since $\la_1\left(d_0,0, \bm {1}-{\bm u^*}/{\bm k}\right)=0$, we must have  $\eta=0$. This proves \eqref{qinf1}.\qed

\section{Proofs for the competitive exclusion results}

Let $(\bm u,\bm v)$ be a positive equilibrium of model \eqref{3p}. Define
\begin{equation}\label{figi0}
\begin{split}
&f_0=f_3=0,\;\;g_0=g_3=0,\\
&f_j=d_1u_{j+1}-(d_1+q_1)u_j,\;\;g_j=d_2v_{j+1}-(d_2+q_2)v_j,\;\; j=1,2.
\end{split}
\end{equation}
Clearly, we have 
\begin{equation}\label{f-k0}
f_{j}-f_{j-1}=-r_ju_j\left(1-\ds\f{u_j+v_j}{k_j}\right),\;\;\;j=1,2,3,
\end{equation}
and
\begin{equation}\label{g-k0}
g_{j}-g_{j-1}=-r_jv_j\left(1-\ds\f{u_j+v_j}{k_j}\right),\;\;\;j=1,2,3.
\end{equation}

Then we have the following result about  the sign of $f_j,g_j$, $j=1,2$.
\begin{lemma}\label{sign}
Suppose that $(\bf {H})$ holds, $\bm r\gg \bm 0$, and $d_1, q_1, d_2, q_2>0$. If $(\bm u,\bm v)$ is a positive equilibrium of model \eqref{3p}, then we have $f_1,g_1,f_2,g_2<0$.
\end{lemma}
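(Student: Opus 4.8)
The plan is to reduce the four inequalities to two sign conditions on the growth factors and then propagate signs patch-by-patch, exactly in the spirit of the proof of Lemma \ref{uv}. Write $h_j := 1-(u_j+v_j)/k_j$ for $j=1,2,3$. Using the boundary conditions $f_0=f_3=0$ and $g_0=g_3=0$ from \eqref{figi0} together with the equilibrium identities \eqref{f-k0}--\eqref{g-k0}, the extreme equations ($j=1$ and $j=3$) give $f_1=-r_1u_1h_1$, $g_1=-r_1v_1h_1$, $f_2=r_3u_3h_3$, and $g_2=r_3v_3h_3$. Since $r_i,u_i,v_i>0$, the four quantities $f_1,g_1,f_2,g_2$ are all negative if and only if $h_1>0$ and $h_3<0$. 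Thus the lemma is equivalent to $u_1+v_1<k_1$ and $u_3+v_3>k_3$, the two-species analogues of Lemma \ref{uv}(ii).

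The engine of the propagation is the implication $f_j>0\Rightarrow u_{j+1}>u_j$: indeed $d_1u_{j+1}>(d_1+q_1)u_j>d_1u_j$ because $q_1,u_j>0$, and the same implication holds with $(f,u,d_1,q_1)$ replaced by $(g,v,d_2,q_2)$. Because $f_j$ and $g_j$ share the common factor $h_j$ in \eqref{f-k0}--\eqref{g-k0}, they always carry the same sign, so the two species can be advanced in parallel while assumption $({\bf H})$ is applied to the combined density $u_j+v_j$.

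To prove $h_1>0$ I would argue by contradiction: assume $h_1\le 0$. Then $f_1=-r_1u_1h_1\ge 0$ and $g_1\ge 0$, whence $u_2>u_1$ and $v_2>v_1$; combined with $u_1+v_1\ge k_1>k_2$ (from $h_1\le 0$ and $({\bf H})$) this gives $u_2+v_2>k_2$, i.e. $h_2<0$. Feeding $h_2<0$ into \eqref{f-k0}--\eqref{g-k0} yields $f_2>f_1\ge 0$ and $g_2>g_1\ge 0$, so $u_3>u_2$, $v_3>v_2$, and $u_3+v_3>k_2>k_3$, i.e. $h_3<0$. But the $j=3$ equation in \eqref{f-k0} then reads $-f_2=f_3-f_2=-r_3u_3h_3>0$, forcing $f_2<0$ and contradicting $f_2>0$. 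The proof that $h_3<0$ runs symmetrically: assuming $h_3\ge 0$ one deduces $f_2,g_2\ge 0$, hence $u_2+v_2<u_3+v_3\le k_3<k_2$ so $h_2>0$, then $f_1,g_1>0$ and $u_1+v_1<k_3<k_1$ so $h_1>0$, and finally the $j=1$ equation $f_1=-r_1u_1h_1<0$ contradicts $f_1>0$.

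The one point requiring care—and the main, mild, obstacle—is that the two species disperse with different coefficients $(d_1,q_1)$ and $(d_2,q_2)$, so one cannot simply add the $f$- and $g$-equations and work with a single quantity $u+v$; the monotonicity steps $u_{j+1}>u_j$ and $v_{j+1}>v_j$ must be carried out separately for each species. This is harmless precisely because the shared factor $h_j$ keeps $f_j$ and $g_j$ sign-synchronized at every stage, which is what allows $({\bf H})$ to be invoked on $u_j+v_j$ while the individual monotonicities are tracked species-by-species.
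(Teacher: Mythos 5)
Your proof is correct and takes essentially the same approach as the paper: your forward-propagation contradiction (assume $h_1\le 0$, so $f_1,g_1\ge 0$, push $h_2<0$, then $h_3<0$ into the $j=3$ identity) is exactly the paper's argument that $f_1<0$, and your backward propagation for $h_3<0$ is precisely the ``similar arguments'' the paper invokes to get $f_2,g_2<0$. The upfront reduction to the two conditions $h_1>0$, $h_3<0$ via the shared factor $h_j$ (so that $f_j$ and $g_j$ are sign-synchronized) is a clean repackaging of the same steps, not a different method.
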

\begin{proof}
First we prove $f_1<0$. Suppose to the contrary that $f_1\ge 0$. By \eqref{f-k0}-\eqref{g-k0}, we have $k_1-u_1-v_1\le0$ and $g_1\ge0$. Since $f_1,g_1\ge 0$, we have $u_2>u_1$ and $v_2>v_1$. This combined with $(\bf {H})$ implies that $k_2-u_2-v_2<0$. Then by \eqref{f-k0}-\eqref{g-k0} again, we obtain that $f_2,g_2>0$ and $k_3-u_3-v_3<0$, which contradicts \eqref{f-k0} with $j=3$. Therefore, we have $f_1<0$. Consequently, by  \eqref{f-k0}-\eqref{g-k0} with $j=1$, we have $g_1<0$. Using similar arguments, we can prove $f_2,g_2<0$.
\end{proof}

The following result is similar to \cite[Lemma 5.7]{chen2022invasion} with $j=1$ and $j=n=3$ (see also the proof of Lemma \ref{ne0la}). Thus, we omit the proof.
\begin{lemma}\label{indfg2}
Suppose that $(\bf {H})$ holds, $\bm r\gg \bm 0$, and $d_1, q_1, d_2, q_2>0$. If $(\bm u,\bm v)$ is a positive equilibrium of model \eqref{3p},  then the following equations hold:
\begin{equation}\label{fg0}
\sum_{j=1}^{2}\left[(d_1-d_2)(v_{j+1}-v_j)-(q_1-q_2)v_j\right]f_j{\ds\f{d_1^j}{\left(d_1+q_1\right)^{j+1}}}=0,
\end{equation}
and
\begin{equation}\label{gf20}
\sum_{j=1}^{2}\left[(d_2-d_1)(u_{j+1}-u_j)-(q_2-q_1)u_j\right]g_k{\ds\f{d_2^j}{\left(d_2+q_2\right)^{j+1}}}=0.
\end{equation}
\end{lemma}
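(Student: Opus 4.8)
The plan is to mimic, almost verbatim, the computation that produced \eqref{ineq2mo}--\eqref{ineq3mo} in the proof of Lemma \ref{ne0la}, with the eigenvector $\bm\phi$ and the single-species profile $\bm u^*$ replaced by the two components $\bm v$ and $\bm u$ of the coexistence equilibrium, and the eigenvalue relations replaced by the equilibrium relations \eqref{f-k0}--\eqref{g-k0}. The essential device is a discrete summation by parts against the geometric weights $\left(d_1/(d_1+q_1)\right)^j$, whose telescoping is driven entirely by the boundary conditions $f_0=f_3=g_0=g_3=0$ fixed in \eqref{figi0}.

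To prove \eqref{fg0} I would first multiply \eqref{f-k0} by $\left(d_1/(d_1+q_1)\right)^j v_j$ and sum over $j=1,2,3$. Abel summation, together with $f_0=f_3=0$, collapses the telescoping right-hand side to $-\sum_{j=1}^2 f_j \f{d_1^j}{(d_1+q_1)^{j+1}}\bigl(d_1 v_{j+1}-(d_1+q_1)v_j\bigr)$, while the left-hand side is the symmetric quantity $-\sum_{j=1}^3 r_j u_j v_j\bigl(1-(u_j+v_j)/k_j\bigr)\left(d_1/(d_1+q_1)\right)^j$. Next I would multiply \eqref{g-k0} by $\left(d_1/(d_1+q_1)\right)^j u_j$ and sum; the same summation by parts produces exactly the same left-hand side, and a right-hand side equal to $-\sum_{j=1}^2 g_j f_j \f{d_1^j}{(d_1+q_1)^{j+1}}$, where one uses the identification $d_1u_{j+1}-(d_1+q_1)u_j=f_j$.

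Subtracting the two identities cancels the common nonlinear left-hand side and leaves $\sum_{j=1}^2 f_j \f{d_1^j}{(d_1+q_1)^{j+1}}\bigl[g_j-(d_1v_{j+1}-(d_1+q_1)v_j)\bigr]=0$. Expanding $g_j=d_2v_{j+1}-(d_2+q_2)v_j$ and regrouping gives $g_j-(d_1v_{j+1}-(d_1+q_1)v_j)=(d_2-d_1)(v_{j+1}-v_j)+(q_1-q_2)v_j$, which is the negative of the bracket appearing in \eqref{fg0}; multiplying through by $-1$ yields \eqref{fg0}. The identity \eqref{gf20} follows by the mirror-image computation, weighting \eqref{g-k0} by $\left(d_2/(d_2+q_2)\right)^j u_j$ and \eqref{f-k0} by $\left(d_2/(d_2+q_2)\right)^j v_j$, summing, and taking the difference, with the regrouping this time producing $(d_2-d_1)(u_{j+1}-u_j)-(q_2-q_1)u_j$.

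I expect no genuine obstacle here, since the content is purely the bookkeeping of index shifts and geometric weights. The one point that requires care is recognizing that the two different weightings yield \emph{identical} left-hand sides—this is precisely what makes the nonlinear reaction terms vanish upon subtraction—and keeping the sign conventions consistent so that the final brackets match \eqref{fg0}--\eqref{gf20} rather than their negatives.
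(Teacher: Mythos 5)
Your proposal is correct and is essentially the paper's own argument: the paper omits the proof of this lemma precisely because it is the same weighted Abel-summation computation carried out for \eqref{ineq2mo}--\eqref{ineq3mo} in the proof of Lemma \ref{ne0la}, with the eigenvector $\bm\phi$ and profile $\bm u^*$ replaced by $\bm v$ and $\bm u$ and the eigenvalue relations replaced by \eqref{f-k0}--\eqref{g-k0}, exactly as you do. Your sign bookkeeping (the brackets emerging as the negatives of those in \eqref{fg0}--\eqref{gf20}, which is immaterial since the sums vanish) is also consistent with the paper's computation.
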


An essential step to prove the competitive exclusion results for model \eqref{3p} is to show the nonexistence of positive equilibrium: 
\begin{lemma}\label{nonex}
Suppose that $(\bf {H})$ holds, $\bm r\gg \bm 0$, and $d_1,q_1>0$. Let $G^*_{12}$ and $G_{23}^*$ be defined by \eqref{G1213s} and \eqref{G23*0}, respectively.
Then model \eqref{3p} admits no positive equilibrium, if one of the following conditions holds:
\begin{enumerate}
    \item [{${\rm (i)}$}] $(d_2,q_2)\in G_{11}\cup G_{21}$;
     \item [{${\rm (ii)}$}] $q_1>\overline q$ and $(d_2,q_2)\in G_{12}\cup G_{22}\cup G_{23}^*$;
       \item [{${\rm (iii)}$}] $q_1<\underline q$ and  $(d_2,q_2)\in G_{13}\cup G_{23}\cup G^*_{12}$.
    \item [{${\rm (iv)}$}] $\underline q\le q_1\le\overline q$ and $(d_2,q_2)\in G^*_{12}\cup G_{23}^*$;
   
\end{enumerate}
\end{lemma}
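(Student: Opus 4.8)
The plan is to argue by contradiction: assume that \eqref{3p} has a positive equilibrium $(\bm u,\bm v)$ and extract a contradiction from the identities of Lemma \ref{indfg2}. By Lemma \ref{sign} we already have $f_1,f_2,g_1,g_2<0$, and the weights $\frac{d_1^j}{(d_1+q_1)^{j+1}}$, $\frac{d_2^j}{(d_2+q_2)^{j+1}}$ are positive, so in each of \eqref{fg0} and \eqref{gf20} the only factors of indefinite sign are the bracketed coefficients. The whole scheme is therefore to show that, under the stated parameter restrictions, one of the two identities has both of its bracketed coefficients ($j=1,2$) strictly of a single sign; then every summand is strictly of that sign and the sum cannot vanish, a contradiction. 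This mirrors the proof of Lemma \ref{ne0la}, with the roles of $\bm u^*,\bm\phi,\tilde f_j,\tilde g_j$ now played by $\bm u,\bm v,f_j,g_j$.

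For case (i) I would use only the sign information. Take $(d_2,q_2)\in G_{21}$ and work with \eqref{fg0}, whose coefficient is $(d_1-d_2)(v_{j+1}-v_j)-(q_1-q_2)v_j$. From $g_j<0$ one gets $v_{j+1}-v_j<\frac{q_2}{d_2}v_j$, and since $(d_2,q_2)\in G_{21}$ the elementary inequality $\frac{q_1-q_2}{d_1-d_2}\ge\frac{q_2}{d_2}$ (see \cite[Lemma 2.4]{zhou2016lotka}) forces this coefficient to be negative; multiplied by $f_j<0$ every summand is positive, so the sum is positive, contradicting \eqref{fg0}. The region $G_{11}$ is symmetric, using \eqref{gf20}, $f_j<0$, and $u_{j+1}-u_j<\frac{q_1}{d_1}u_j$. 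No information on the spatial profile of the equilibrium is required here.

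For cases (ii)--(iv) the sign conditions alone do not suffice. Indeed, substituting $v_{j+1}-v_j=(g_j+q_2v_j)/d_2$ collapses \eqref{fg0} to a single sign constraint that only excludes $G_{11}\cup G_{21}$, so a genuinely new ingredient is needed: the magnitude of the drift. When $q_1>\overline q$ I expect the resident profile to satisfy $u_1<u_2<u_3$, and when $q_1<\underline q$ to satisfy $u_1>u_2>u_3$, exactly as in Lemma \ref{uv}(iii)--(iv). Granting such monotonicity, the coefficient $(d_2-d_1)(u_{j+1}-u_j)-(q_2-q_1)u_j$ of \eqref{gf20} acquires a definite sign on the relevant regions (e.g.\ on $G_{22}$, where $d_2>d_1$ and $q_2\le q_1$, a monotone-increasing $\bm u$ makes it positive), and the contradiction closes as in case (i). The sub-region cut-offs $G_{12}^*$ and $G_{23}^*$ are precisely the places where the drift of the invader is itself extreme ($q_2>\overline q$, resp.\ $q_2<\underline q$), so there I would instead use \eqref{fg0} with the corresponding monotonicity of $\bm v$.

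The hard part is exactly this monotonicity at a coexistence state. For the single-species equilibrium $\bm u^*$ it is free from Lemma \ref{uv}, because the logistic factor is $r_i(1-u_i^*/k_i)$ and the three-equation chain argument self-closes using $w_2=u_2^*$. At a positive equilibrium the factor is instead $r_i\bigl(1-\tfrac{u_i+v_i}{k_i}\bigr)$; the relations $f_1,f_2<0$ only yield $u_1+v_1<k_1$ and $u_3+v_3>k_3$, leaving the middle total density $u_2+v_2$ uncontrolled, so the chain argument of Lemma \ref{uv} breaks at patch $2$. Recovering the monotonicity of one species' profile at coexistence is therefore the crux; I would attempt it by tracking $\bm u$, $\bm v$ and $\bm w=\bm u+\bm v$ simultaneously, exploiting the balance relations $\sum_{i=1}^3 r_iu_i(1-\tfrac{u_i+v_i}{k_i})=0=\sum_{i=1}^3 r_iv_i(1-\tfrac{u_i+v_i}{k_i})$ that follow from $f_3=g_3=0$, or else by a continuation in $\bm k$ from the homogeneous case $\bm k=(k_3,k_3,k_3)$ as in the proof of Proposition \ref{locs}.
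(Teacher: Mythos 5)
Your overall scheme is the paper's: argue by contradiction, feed the signs $f_1,f_2,g_1,g_2<0$ from Lemma \ref{sign} into the two identities of Lemma \ref{indfg2}, and show that under each parameter restriction one of \eqref{fg0}, \eqref{gf20} has both bracketed coefficients of a single strict sign. Your case (i) is complete and coincides with the paper's argument (including the inequality $\frac{q_1-q_2}{d_1-d_2}\ge\frac{q_2}{d_2}$ from \cite[Lemma 2.4]{zhou2016lotka}), and you have correctly matched each remaining region with the identity and the monotonicity statement that kills it: $u_1<u_2<u_3$ on $G_{22}$ when $q_1>\overline q$, $u_1>u_2>u_3$ on $G_{13}$ when $q_1<\underline q$, and the symmetric statements for $\bm v$ on $G_{12}^*$ and $G_{23}^*$.

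The genuine gap is exactly the point you flag as the crux: the monotonicity of one species' profile at a hypothetical positive equilibrium is only "expected" and "attempted", never proved, yet it is the main technical content of the paper's proof of cases (ii)--(iv). Your first proposed strategy (tracking $\bm u$, $\bm v$ and the total density together) is in fact the paper's route, and your worry that the chain argument "breaks at patch 2" is unfounded once both species are tracked. For instance, in case (ii) with $(d_2,q_2)\in G_{22}$ (so $d_2>d_1$, $q_2\le q_1$, $q_1>\overline q$): assume $u_1\ge u_2$; then \eqref{3sequv1} gives $u_1+v_1\le k_1-q_1k_1/r_1\le k_1-q_2k_1/r_1$, and inserting this bound into the $\bm v$-equation \eqref{3sequv2} forces $v_1\ge v_2$ --- this is where $q_2\le q_1$ transfers the sign information from the $\bm u$-equation to the $\bm v$-equation; hence $u_2+v_2\le u_1+v_1<k_2$ by $q_1>\frac{r_1}{k_1}(k_1-k_2)$, so \eqref{3sequv3}--\eqref{3sequv4} give $u_3<u_2$ and $v_3<v_2$, whence $u_3+v_3<k_2<k_3\left(1+q_1/r_3\right)$ by $q_1>\frac{r_3}{k_3}(k_2-k_3)$, contradicting \eqref{3sequv5}; the inequality $u_2<u_3$ follows by running the same chain upward from patch 3. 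So the middle total density is not "uncontrolled": the contradiction hypothesis itself controls it, patch by patch. By contrast, your fallback alternatives would not close the argument: the summed balance relations carry no patch-by-patch information, and the homotopy in $\bm k$ used in Proposition \ref{locs} relies on continuity of a principal eigenvalue that cannot cross zero along the deformation; there is no analogous continuation principle ruling out positive equilibria along a path in $\bm k$ without substantial extra machinery. The plan is the right plan, but the decisive lemma is left unproven.
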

\begin{proof}
Suppose to the contrary that model \eqref{3p} admits a positive equilibrium $(\bm u,\bm v)$. Then we will obtain a contradiction for each of the cases (i)-(iv).

(i) We only consider the case $(d_2,q_2)\in  G_{21}$. Since the nonlinear terms of \eqref{3p} are symmetric, the case  $(d_2,q_2)\in  G_{11}$ can be proved similarly. Suppose $(d_2,q_2)\in  G_{21}$. Then we have $d_2\le d_1, q_2\le q_1d_2/d_1$ and $(d_1, q_1)\neq (d_2, q_2)$. 
First, we claim that
\begin{equation}\label{dq}
(d_1-d_2)(v_{j+1}-v_j)-(q_1-q_2)v_j<0\;\;\text{for}\;\; j=1,2.
\end{equation}
Indeed if $d_1=d_2$, then $q_1>q_2$ and \eqref{dq} holds.
If $d_1>d_2$, then it is easy to check that
\begin{equation*}
\ds\f{q_1-q_2}{d_1-d_2}\ge\ds\f{q_2}{d_2}.
\end{equation*}
(This inequality is in \cite[Lemma 2.4]{zhou2016lotka}). This, combined with $g_1,g_2<0$, proves  \eqref{dq}.
Then by $f_1, f_2<0$ and  \eqref{fg0}, we have
\begin{equation*}
0<\sum_{j=1}^{2}\left[(d_1-d_2)(v_{j+1}-v_j)-(q_1-q_2)v_j\right]f_j{\ds\f{d_1^j}{\left(d_1+q_1\right)^{j+1}}}=0,
\end{equation*}
which is a contradiction.

(ii) We first consider the case  $(d_2,q_2)\in  G_{22}$. Since the nonlinear terms of \eqref{3p} are symmetric, the case  $(d_2,q_2)\in  G_{12}$ can be proved similarly. Suppose $(d_2, q_2)\in G_{22}$. Then, $d_2>d_1$ and $q_2\le q_1$.
By \eqref{3p}, we have
\begin{subequations}\label{3sequv}
\begin{align}
&d_1(u_2-u_1)=-r_1u_1\left(1-\ds\f{q_1}{r_1}-\ds\f{u_1+v_1}{k_1}\right),\label{3sequv1}\\
&d_2(v_2-v_1)=-r_1v_1\left(1-\ds\f{q_2}{r_2}-\ds\f{u_1+v_1}{k_1}\right),\label{3sequv2}\\
&(d_1+q_1)(u_1-u_2)-d_1(u_2-u_3)=-r_2u_2\left(1-\ds\f{u_2+v_2}{k_2}\right),\label{3sequv3}\\
&(d_2+q_2)(v_1-v_2)-d_2(v_2-v_3)=-r_2v_2\left(1-\ds\f{u_2+v_2}{k_2}\right),\label{3sequv4}\\
&(d_1+q_1)(u_2-u_3)=-r_3u_3\left(1+\ds\f{q_1}{r_3}-\ds\f{u_3+v_3}{k_3}\right),\label{3sequv5}\\
&(d_2+q_2)(v_2-v_3)=-r_3u_3\left(1+\ds\f{q_2}{r_3}-\ds\f{u_3+v_3}{k_3}\right).\label{3sequv6}
\end{align}
\end{subequations}
Then we show that $u_1<u_2<u_3$.
Suppose to the contrary that $u_1\ge u_2$. Then, by \eqref{3sequv1}, we see that $$u_1+v_1\le k_1-\ds\f{q_1k_1}{r_1}\le  k_1-\ds\f{q_2k_1}{r_1},$$
where we have used $q_2\le q_1$ in the last inequality.
This, combined with \eqref{3sequv2}, implies that
$v_1\ge v_2$. Noticing that $$q_1>\overline q\ge \ds\f{r_1}{k_1}(k_1-k_2),$$ we have 
$$
u_2+v_2\le u_1+v_1\le k_1-\ds\f{q_1k_1}{r_1} <k_2,
$$
and consequently $u_3<u_2$ and $v_3<v_2$ by \eqref{3sequv3}-\eqref{3sequv4}. This, combined with $q_1>\overline q$, implies that 
$$
u_3+v_3<u_2+v_2<k_2<k_3+\ds\f{k_3q_1}{r_3},
$$
which contradicts \eqref{3sequv5}.
Similarly, we can show that $u_2<u_3$.
So, $u_1<u_2<u_3$, which leads to
$$
(d_2-d_1)(u_{j+1}-u_j)-(q_2-q_1)u_j>0, \ \ j=1, 2.
$$
Then by Lemma \ref{sign} and  \eqref{gf20}, we have
\begin{equation*}
0>\sum_{j=1}^{2}\left[(d_2-d_1)(u_{j+1}-u_j)-(q_2-q_1)u_j\right]g_j{\ds\f{d_2^j}{\left(d_2+q_2\right)^{j+1}}}=0,
\end{equation*}
which is a contradiction.

Now suppose that $(d_2,q_2)\in G_{23}^*$. Then $q_2<\underline q$ and $(d_1,q_1)\in \hat{G}_{13}$, where
\begin{equation}\label{hatg13}
\hat G_{13}:=\{(d, q):\; d\ge d_2,q_2\le q< \frac{q_2}{d_2}d,(d,q)\ne(d_2,q_2)\}.
\end{equation}
Since the nonlinear terms of \eqref{3p} are symmetric, this case can be proved similarly as   the case $(d_2,q_2)\in G_{13}$ (the proof is immediately below).

(iii) Suppose that $(d_2,q_2)\in G_{13}$. Then we have $d_2\ge d_1$, $q_1\le q_2\le q_1d_2/d_1$, and $(d_1, q_1)\neq (d_2, q_2)$. We show that $u_1>u_2>u_3$. Suppose to the contrary that $u_1\le u_2$. Then, by \eqref{3sequv1} and $q_2\ge q_1$, we have
\begin{equation*}
u_1+v_1\ge k_1-\ds\f{q_1k_1}{r_1}>k_1-\ds\f{q_2k_1}{r_1}.
\end{equation*}
This, combined with \eqref{3sequv2}, yields $v_1\le v_2$. Noting that
\begin{equation*}
q_1<\underline{q}<\f{r_1}{k_1}(k_1-k_2),
\end{equation*}
we have
\begin{equation*}
k_2<k_1-\f{q_1k_1}{r_1}\le u_1+v_1\le u_2+v_2.
\end{equation*}
Then by \eqref{3sequv3} and \eqref{3sequv4}, we have $u_3>u_2$ and $v_3>v_2$. Since $q_1<\underline{q}$, we have
\begin{equation*}
k_3+\f{k_3q_1}{r_3}<k_2<u_2+v_2<u_3+v_3,
\end{equation*}
which contradicts \eqref{3sequv5}. Similarly, we can show  $u_2>u_3$. Therefore, we have $u_1>u_2>u_3$ and
\begin{equation*}
(d_2-d_1)(u_{j+1}-u_{j})-(q_2-q_1)u_j<0,\;\;j=1,2.
\end{equation*}
Then it follows from Lemma \ref{sign} and  \eqref{gf20} that
\begin{equation*}
0<\sum_{j=1}^{2}\left[(d_2-d_1)(u_{j+1}-u_j)-(q_2-q_1)u_j\right]g_j{\ds\f{d_2^j}{\left(d_2+q_2\right)^{j+1}}}=0,
\end{equation*}
which is a contradiction.

For the case $(d_2,q_2)\in  G_{23}$, using similar arguments as above, we can obtain  $v_1>v_2>v_3$,
which leads to
$$
(d_1-d_2)(v_{j+1}-v_j)-(q_1-q_2)v_j<0, \ \ j=1, 2.
$$
This, combined with Lemma \ref{sign} and  \eqref{fg0}, implies that
\begin{equation*}
0<\sum_{j=1}^{2}\left[(d_1-d_2)(v_{j+1}-v_j)-(q_1-q_2)v_j\right]f_j{\ds\f{d_1^j}{\left(d_1+q_1\right)^{j+1}}}=0,
\end{equation*}
which is a contradiction.

Next suppose that $(d_2,q_2)\in G_{12}^*$. Then $q_2>\overline q$ and $(d_1,q_1)\in \hat{G}_{22}$, where
\begin{equation*}
\hat G_{22}:=\{(d,q):\; d>d_2,0<q\le q_2\}.
\end{equation*}
Since the nonlinear terms of \eqref{3p} are symmetric, this case can be proved similarly the case $(d_2,q_2)\in G_{22}$ in (ii).

(iv) If $(d_2,q_2)\in G_{23}^*$, the proof is the similar the corresponding case in (ii); If $(d_2,q_2)\in G_{12}^*$, the proof is the similar the corresponding case in (iii).
\end{proof}

We are ready to prove Theorems \ref{gdyn}, \ref{gdyn1} and \ref{gdyn2}.

\noindent\emph{Proof of Theorem \ref{gdyn}.} 
(i) Suppose that $q_1<\underline q$ and $(d_2,q_2)\in G_{21}\cup G_{23}$. By Lemma \ref{nonex} (i) and (iii),
model \eqref{3p} admits no positive equilibrium.
By Theorem \ref{locs} (i) and (iii), $(\bm u^*, \bm 0)$ is unstable. Then it follows from  the monotone dynamical system theory \cite{hess,hsu1996competitive, LAM2016Munther,smith2008monotone} that  $(\bm u^*,\bm 0)$ is globally asymptotically stable.


(ii) Suppose that $q_1<\underline q$ and $(d_2,q_2)\in  G_{11}\cup G_{12}^*$.  By Lemma \ref{nonex} (i) and (iii),
model \eqref{3p} admits no positive equilibrium.  By the monotone dynamical system theory \cite{hess,hsu1996competitive, LAM2016Munther,smith2008monotone}, it suffices to show that
$(\bm 0,\bm v^*)$ is unstable. If $(d_1,q_1)\in G_{12}^*$, then $q_2>\overline q$ and $(d_1,q_1)\in \tilde G_{22}:=\{(d,q):d>d_2,0<q\le q_2\}$. Since the nonlinear terms of model \eqref{3p} are
symmetric, it follows from Proposition \ref{locs} (ii) that $(\bm 0,\bm v^*)$ is unstable. If  $(d_2,q_2)\in G_{11}$, then $(d_1,q_1)\in \tilde G_{21}$, where
$$
\tilde G_{21}:=\left\{(d,q):0<d \le d_2,0<q\le \ds\f{q_2}{d_2}d,(d,q)\ne(d_2,q_2)\right\}.
$$
Similarly, it follows from Proposition \ref{locs} (i) that $(\bm 0,\bm v^*)$ is unstable.

Finally,  suppose that $q_1<\underline q$ and $(d_2,q_2)\in G_{13}$. By Proposition \ref{locs} (iii), $(\bm u^*, \bm 0)$ is locally asymptotically stable. By Lemma \ref{nonex}  (iv),
model \eqref{3p} admits no positive equilibrium. If  $(\bm 0, \bm v^*)$ is   locally asymptotically stable, then model \eqref{3p} admits one unstable positive steady state, which is a contradiction. If  $(\bm 0, \bm v^*)$ is unstable, then the  monotone dynamical system theory \cite{hess,hsu1996competitive, LAM2016Munther,smith2008monotone} implies that $(\bm u^*, \bm 0)$ is globally asymptotically stable. If $(\bm 0, \bm v^*)$ is neutrally stable, by \cite[Theorem 1.4]{LAM2016Munther}, $(\bm u^*, \bm 0)$ is globally asymptotically stable. This proves (ii).
\qed

\noindent\emph{Proof of Theorems \ref{gdyn1} and \ref{gdyn2}.} We only need to prove the case $(d_2,q_2)\in G_{23}^*$, since the other cases can be proved  using similar arguments in the proof of Theorem \ref{gdyn}. If $(d_2,q_2)\in G_{23}^*$, then $q_2<\underline q$ and $(d_1,q_1)\in \hat G_{13}$, where
$$
\hat G_{13}:=\{(d, q):\; d\ge d_2,q_2\le q< \frac{q_2}{d_2}d,(d,q)\ne(d_2,q_2)\}.
$$
Since the nonlinear terms of model \eqref{3p} are symmetric,  it follows from Theorem \ref{gdyn} (ii) that $(\bm 0,\bm v^*)$ is  globally asymptotically stable.

\section*{Appendix}
In the appendix, we study the relations of $\overline q$, $\underline q$ and $q_0$. For convenience, we recall the definition of $\overline q$, $\underline q$ and $q_0$:
\begin{subequations}\label{q0qq2}
\begin{align}
&\overline q=\max\left\{\ds\f{r_1}{k_1}(k_1-k_2),\ds\f{r_3}{k_3}(k_2-k_3)\right\},\label{q0qq2-1}\\
&\underline q=\min\left\{\ds\f{r_1}{k_1}(k_1-k_2),\ds\f{r_3}{k_3}(k_2-k_3)\right\},\label{q0qq2-2}\\
&q_0=\max\left\{r_1\left(1-\frac{u_1^*}{k_1}\right), r_2\left(1-\frac{u_2^*}{k_2}\right)\right\}.\label{q0qq2-3}
\end{align}
\end{subequations}
\begin{lemma}\label{relation}
Suppose that $(\bf {H})$ holds, $\bm r\gg\bm 0$, and $d_1,q_1>0$. Then the following statements hold:
\begin{enumerate}
  \item [\rm (i)] If $q_1<\underline q$, then $q_0>q_1$;
  \item [\rm (ii)] If $q_1>\overline q$, then $q_0<q_1$;
  \item [\rm (iii)] If $q_1>\underline q$, then $q_0>\underline q$;
  \item [\rm (iv)] If $q_1<\overline q$, then $q_0<\overline q$.
\end{enumerate}
\end{lemma}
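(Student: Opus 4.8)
The plan is to phrase everything through the quantities $m_i:=r_i(1-u_i^*/k_i)$, so that $q_0=\max\{m_1,m_2\}$, and to abbreviate $A:=\frac{r_1}{k_1}(k_1-k_2)$, $B:=\frac{r_3}{k_3}(k_2-k_3)$, so that $\underline q=\min\{A,B\}$ and $\overline q=\max\{A,B\}$. I would first record four elementary facts coming from \eqref{3TS-1} and Lemma \ref{uv}: (a) $m_1>0$ and $m_3<0$ (Lemma \ref{uv}(ii)); (b) the sign rules $\operatorname{sign}(m_1-q_1)=\operatorname{sign}(u_1^*-u_2^*)$ and $\operatorname{sign}(m_2-q_1)=\operatorname{sign}(2d_1u_2^*-(d_1+q_1)u_1^*-d_1u_3^*)$, read off from the first two equations of \eqref{3TS-1}; (c) since $A=r_1(1-k_2/k_1)$ and $|m_3|=\frac{r_3}{k_3}(u_3^*-k_3)$, one has $m_1\ge A\Leftrightarrow u_1^*\le k_2$ and $|m_3|\ge B\Leftrightarrow u_3^*\ge k_2$; and (d) the conservation identity
\[
u_1^*m_1+u_2^*m_2+u_3^*m_3=0,
\]
obtained by comparing the third equation of \eqref{3TS-1} with the sum of the first two. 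Identity (d) will be the engine for (iii) and (iv).

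Parts (i) and (ii) I would dispatch at once. If $q_1<\underline q\le A$, then Lemma \ref{uv}(iv) gives $u_1^*>u_2^*$, so (b) yields $m_1>q_1$ and hence $q_0\ge m_1>q_1$. If $q_1>\overline q$, then Lemma \ref{uv}(iii) gives $u_1^*<u_2^*<u_3^*$; the first rule in (b) gives $m_1<q_1$, while $d_1u_2^*<(d_1+q_1)u_1^*$ (Lemma \ref{uv}(i)) together with $u_2^*<u_3^*$ gives $2d_1u_2^*<(d_1+q_1)u_1^*+d_1u_3^*$, so the second rule in (b) gives $m_2<q_1$; hence $q_0<q_1$.

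For (iii) I would argue by contradiction, assuming $q_1>\underline q$ but $m_1\le\underline q$. By (c), $m_1\le\underline q\le A$ forces $u_1^*\ge k_2$, while $m_1\le\underline q<q_1$ and (b) force $u_1^*<u_2^*$; propagating through the second equation of \eqref{3TS-1} (where $u_2^*>u_1^*\ge k_2$ makes $m_2<0$) gives $u_3^*>u_2^*$, an increasing profile with $u_3^*>k_2$, so $|m_3|>B\ge\underline q\ge m_1$ by (c). Then (d) gives $u_1^*m_1=u_2^*|m_2|+u_3^*|m_3|\ge u_3^*|m_3|>u_3^*m_1$, i.e. $u_1^*>u_3^*$, contradicting the increasing profile; hence $m_1>\underline q$ and $q_0>\underline q$. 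Part (iv) is the mirror image, but now I must exclude both $m_1\ge\overline q$ and $m_2\ge\overline q$ under $q_1<\overline q$. If $m_1\ge\overline q$, then (c) and (b) give $u_1^*\le k_2$ and $u_1^*>u_2^*$, the second equation of \eqref{3TS-1} forces a decreasing profile with $u_3^*<k_2$, so $|m_3|<B\le\overline q\le m_1$, and (d) yields $u_1^*m_1<u_3^*|m_3|<u_3^*m_1$, i.e. $u_1^*<u_3^*$, a contradiction. If $m_2\ge\overline q$, then $m_2>0$ gives $u_2^*<k_2$, and the second rule in (b) together with $d_1u_2^*<(d_1+q_1)u_1^*$ gives $u_2^*>u_3^*$; from (d), $u_2^*m_2<u_3^*|m_3|$, and I would split on $u_3^*$: if $u_3^*<k_2$ then $|m_3|<B\le m_2$ forces $u_2^*<u_3^*$, while if $u_3^*\ge k_2$ then $u_2^*>u_3^*\ge k_2$ forces $m_2<0$ --- both impossible. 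Hence $q_0<\overline q$.

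The main obstacle is that a purely node-by-node sign analysis never closes: propagating the profile only pins $q_1$ against the single threshold $B$, leaving the putative configuration internally consistent. The contradiction appears only after invoking the global identity (d), which couples all three patches; the most delicate case is $m_2\ge\overline q$, where one first needs Lemma \ref{uv}(i) to locate the interior peak ($u_2^*>u_3^*$) and then a case split according to whether the downstream density $u_3^*$ lies below or above $k_2$. I expect verifying these sign-bookkeeping steps --- especially that the inequalities derived from (d) point the right way in each regime $\overline q=A$ versus $\overline q=B$ --- to be where the real care is required.
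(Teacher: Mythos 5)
Your proposal is correct and takes essentially the same route as the paper's appendix proof: parts (i)--(ii) are the paper's computations, and your contradiction arguments for (iii)--(iv) mirror the paper's case analyses, since your conservation identity (d) is exactly the combination of the paper's flux relations $\tilde f_1=-u_1^*\,r_1(1-u_1^*/k_1)$, $\tilde f_2-\tilde f_1=-u_2^*\,r_2(1-u_2^*/k_2)$, $\tilde f_2=u_3^*\,r_3(1-u_3^*/k_3)$ in \eqref{fpro}. The only difference is organizational: where the paper splits into cases (A1)--(A3), (B1)--(B3), (C1)--(C3) and contradicts a threshold bound, you negate the easy cases up front and contradict the monotonicity of the profile --- the same inequality read in the opposite direction.
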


\begin{proof}
By \eqref{f}-\eqref{sup} and Lemma \ref{uv} (i), we have
\begin{subequations}\label{fpro}
\begin{align}
&\tilde f_1=d_1u_{2}^*-(d_1+q_1)u_1^*=-r_1u_1^*\left(1-\frac{u_1^*}{k_1}\right)<0,\label{fpro-1}\\
&\tilde f_2=d_1u_3^*-(d_2+q_2)u_2^*=r_3\left(1-\frac{u_3^*}{k_3}\right)<0,\label{fpro-2}\\
&\tilde f_2-\tilde f_1=-r_2u_2^*\left(1-\frac{u_2^*}{k_2}\right),\label{fpro-3}
\end{align}
\end{subequations}
which will be used  in the proof below.

(i) By Lemma \ref{uv} (iv), we have $u^*_1>u^*_2>u^*_3$.
This, together with \eqref{q0qq2-3} and \eqref{fpro-1}, implies that
\begin{equation}\label{q0q1}
q_0\ge r_1\left(1-\frac{u_1^*}{k_1}\right)=\f{(d_1+q_1)u^*_1-d_1u^*_2}{u^*_1}=d_1\left(1-\f{u^*_2}{u^*_1}\right)+q_1>q_1.
\end{equation}

(ii) By Lemma \ref{uv} (iii), we have
$u^*_1<u^*_2<u^*_3$.
Then by \eqref{fpro-1} again, we obtain
\begin{equation}\label{r1f1f2}
r_1\left(1-\frac{u_1^*}{k_1}\right)=d_1\left(1-\f{u^*_2}{u^*_1}\right)+q_1<q_1.
\end{equation}
By \eqref{fpro-3}, we obtain that 
\begin{equation}\label{r2f1f2}
\begin{split}
r_2\left(1-\frac{u_2^*}{k_2}\right)=&\ds\f{\left((d_1+q_1)u^*_2-d_1u_3^*\right)+\tilde f_1}{u^*_2}\\
=&d_1\left(1-\f{u^*_3}{u^*_2}\right)+\f{\tilde f_1}{u^*_2}+q_1<q_1,
\end{split}
\end{equation}
where we have used \eqref{fpro-3} and $u_2^*<u_3^*$ in the last step.
It follows from \eqref{q0qq2-3}, \eqref{r1f1f2} and \eqref{r2f1f2} that $q_0<q_1$.

(iii) We divide the proof into three cases:
\begin{equation*}
({\rm A1})\;\;u_1^*<k_2, \;\;({\rm A2})\;\; u^*_1\ge u^*_2,\;\;({\rm A3})\;\; k_2\le u_1^*<u_2^*.
\end{equation*}
For case (A1), we see from \eqref{q0qq2-2} and \eqref{q0qq2-3}
that
\begin{equation*}
q_0\ge r_1\left(1-\frac{u_1^*}{k_1}\right)>\ds\f{r_1}{k_1}(k_1-k_2)\ge\underline q.
\end{equation*}
For case (A2), we see from \eqref{q0qq2-3} and \eqref{fpro-1} that
\begin{equation*}
q_0\ge r_1\left(1-\frac{u_1^*}{k_1}\right)=d_1\left(1-\f{u^*_2}{u^*_1}\right)+q_1\ge q_1>\underline q.
\end{equation*}
Now we consider (A3). Suppose to the contrary that $q_0\le\underline q$. This, combined with \eqref{q0qq2-2}-\eqref{q0qq2-3}, yields
\begin{equation}\label{3cont}
r_1\left(1-\frac{u_1^*}{k_1}\right)\le q_0\le \underline q\le\f{r_3}{k_3}\left(k_2-k_3\right).
\end{equation}
Noticing that $u_2^*>k_2$, we see from \eqref{fpro-3} that
\begin{equation}\label{u1u2u3}
\tilde f_2-\tilde f_1=d_1(u^*_3-u^*_2)-(d_1+q_1)(u^*_2-u^*_1)=-r_2u^*_2\left(1-\ds\frac{u^*_2}{k_2}\right)>0.
\end{equation}
Since $u^*_1< u^*_2$, we see from \eqref{u1u2u3} that $u^*_2< u^*_3$. Then we have
\begin{equation}\label{monou}
0>\tilde f_2>\tilde f_1\;\;\text{and}\;\;k_2\le u^*_1< u^*_2< u^*_3,
\end{equation}
which yields
\begin{equation}\label{f1f2}
-\ds\f{\tilde f_2}{u^*_3}<-\ds\f{\tilde f_1}{u^*_1}.
\end{equation}
This, together with \eqref{monou}, \eqref{fpro-1} and \eqref{fpro-2}, implies that
\begin{equation*}
r_1\left(1-\f{u^*_1}{k_1}\right)=-\ds\f{\tilde f_1}{u^*_1}>-\ds\f{\tilde f_2}{u^*_3}=\ds\f{r_3}{k_3}\left(u^*_3-k_3\right)>\ds\f{r_3}{k_3}\left(k_2-k_3\right),
\end{equation*}
which contradicts \eqref{3cont}. Therefore, $q_0>\underline q$ for case (A3).

(iv) We first show that
\begin{equation}\label{r1}
r_1\left(1-\frac{u_1^*}{k_1}\right)<\overline q,
\end{equation}
and the proof is divided  into three cases:
\begin{equation*}
({\rm B1})\;\;u^*_1> k_2,\;\;({\rm B2})\;\;u_1^*\le u_2^* ,\;\;({\rm B3})\;\; k_2\ge u^*_1> u^*_2.
\end{equation*}
For case (B1), we have
\begin{equation*}
r_1\left(1-\frac{u_1^*}{k_1}\right)<\f{r_1}{k_1}(k_1-k_2)\le\overline q.
\end{equation*}
For case (B2),  we see from \eqref{fpro-1} that
\begin{equation}\label{q0q1-2}
 r_1\left(1-\frac{u_1^*}{k_1}\right)=d_1\left(1-\f{u^*_2}{u^*_1}\right)+q_1\le q_1<\overline q.
\end{equation}
For case (B3), using similar arguments as the above case (A3), we have
\begin{equation*}
0>\tilde f_1>\tilde f_2\;\;\text{and}\;\;k_2\ge u^*_1> u^*_2> u^*_3.
\end{equation*}
This, combined with \eqref{fpro-1} and \eqref{fpro-2}, implies that
\begin{equation*}
r_1\left(1-\frac{u_1^*}{k_1}\right)=-\f{\tilde f_1}{u^*_1}<-\f{\tilde f_2}{u^*_1}<-\f{\tilde f_2}{u^*_3}
=r_3\left(\frac{u_3^*}{k_3}-1\right)<\f{r_3}{k_3}(k_2-k_3)\le\overline q.
\end{equation*}

Then we show that
\begin{equation}\label{r2}
r_2\left(1-\frac{u_2^*}{k_2}\right)<\overline q,
\end{equation}
and the proof is also divided into three cases:
\begin{equation*}
({\rm C1})\;\;u^*_2\le u_3^*,\;\;({\rm C2})\;\;u_2^*> u_3^*\ge k_2 ,\;\;({\rm C3})\;\; u_2^*> u_3^*\;\;\text{and}\;\;k_2>u_3^*.
\end{equation*}
For case (C1), we see from \eqref{r2f1f2} that
\begin{equation*}
r_2\left(1-\frac{u_2^*}{k_2}\right)<q_1<\overline q.
\end{equation*}
For case (C2), we have 
\begin{equation*}
r_2\left(1-\frac{u_2^*}{k_2}\right)<0<\overline q.
\end{equation*}
For case (C3),  we see from \eqref{fpro} that
\begin{equation*}
r_2\left(1-\frac{u_2^*}{k_2}\right)=\f{\tilde f_1-\tilde f_2}{u^*_2}<-\f{\tilde f_2}{u^*_2}<-\f{\tilde f_2}{u^*_3}
=\f{r_3}{k_3}\left(u^*_3-k_3\right)<\f{r_3}{k_3}\left(k_2-k_3\right)\le\overline q.
\end{equation*}
By \eqref{r1} and \eqref{r2}, we see that (iv) holds.
\end{proof}
\begin{remark}
By  $\underline q\le \overline q$ and Lemma \ref{relation}, we see that
 if $q_1< \underline q$, then $q_1<q_0<\overline q$; if $q_1> \overline q$, then $\underline q<q_0<q_1$; and if $\underline q<q_1<\overline q$, then $\underline q<q_0<\overline q$.
\end{remark}
\bibliographystyle{abbrv}
\bibliography{ref}

\begin{thebibliography}{10}

\bibitem{altenberg2012resolvent}
L.~Altenberg.
\newblock Resolvent positive linear operators exhibit the reduction phenomenon.
\newblock {\em Proc. Natl. Acad. Sci. USA}, 109(10):3705--3710, 2012.

\bibitem{stephen2007ideal}
R.~S. Cantrell, C.~Cosner, D.~L. Deangelis, and V.~Padron.
\newblock The ideal free distribution as an evolutionarily stable strategy.
\newblock {\em J. Biol. Dyn.}, 1(3):249--271, 2007.

\bibitem{cantrell2006movement}
R.~S. Cantrell, C.~Cosner, and Y.~Lou.
\newblock Movement toward better environments and the evolution of rapid
  diffusion.
\newblock {\em Mathematical biosciences}, 204(2):199--214, 2006.

\bibitem{cantrell2010evolution}
R.~S. Cantrell, C.~Cosner, and Y.~Lou.
\newblock Evolution of dispersal in heterogeneous landscapes.
\newblock {\em Spatial ecology}, pages 213--229, 2010.

\bibitem{cantrell2012evolutionary}
R.~S. Cantrell, C.~Cosner, and Y.~Lou.
\newblock Evolutionary stability of ideal free dispersal strategies in patchy
  environments.
\newblock {\em J. Math. Biol.}, 65(5):943--965, 2012.

\bibitem{cantrell2017evolution}
R.~S. Cantrell, C.~Cosner, Y.~Lou, and S.~J. Schreiber.
\newblock Evolution of natal dispersal in spatially heterogeneous environments.
\newblock {\em Math. Biosci.}, 283:136--144, 2017.

\bibitem{chen2022invasion}
S.~Chen, J.~Liu, and Y.~Wu.
\newblock Invasion analysis of a two-species {L}otka-{V}olterra competition
  model in an advective patchy environment.
\newblock {\em Stud. Appl. Math.}, To appear, 2022.

\bibitem{chen2021}
S.~Chen, J.~Shi, Z.~Shuai, and Y.~Wu.
\newblock Evolution of dispersal in advective patchy environments.
\newblock {\em Submitted}, 2021.

\bibitem{chen2022global}
S.~Chen, J.~Shi, Z.~Shuai, and Y.~Wu.
\newblock Global dynamics of a {L}otka-{V}olterra competition patch model.
\newblock {\em Nonlinearity}, 35(2):817, 2022.

\bibitem{chen2022two}
S.~Chen, J.~Shi, Z.~Shuai, and Y.~Wu.
\newblock Two novel proofs of spectral monotonicity of perturbed essentially
  nonnegative matrices with applications in population dynamics.
\newblock {\em SIAM Journal on Applied Mathematics}, 82(2):654--676, 2022.

\bibitem{chen2012dynamics}
X.~Chen, K.-Y. Lam, and Y.~Lou.
\newblock Dynamics of a reaction-diffusion-advection model for two competing
  species.
\newblock {\em Discrete \& Continuous Dynamical Systems}, 32(11):3841, 2012.

\bibitem{cheng2019coexistence}
C.-Y. Cheng, K.-H. Lin, and C.-W. Shih.
\newblock Coexistence and extinction for two competing species in patchy
  environments.
\newblock {\em Math. Biosci. Eng.}, 16(2):909--946, 2019.

\bibitem{cosner1996variability}
C.~Cosner.
\newblock Variability, vagueness and comparison methods for ecological models.
\newblock {\em Bull. Math. Biol.}, 58(2):207--246, 1996.

\bibitem{dieckmann1996dynamical}
U.~Dieckmann and R.~Law.
\newblock The dynamical theory of coevolution: a derivation from stochastic
  ecological processes.
\newblock {\em J. Math. Biol.}, 34(5):579--612, 1996.

\bibitem{dockery1998evolution}
J.~Dockery, V.~Hutson, K.~Mischaikow, and M.~Pernarowski.
\newblock The evolution of slow dispersal rates: a reaction diffusion model.
\newblock {\em J. Math. Biol.}, 37(1):61--83, 1998.

\bibitem{geritz1998evolutionarily}
S.~Geritz, E.~Kisdi, G.~Mesze, and J.~A.~J. Metz.
\newblock Evolutionarily singular strategies and the adaptive growth and
  branching of the evolutionary tree.
\newblock {\em Evol. Biol.}, 12(1):35--57, 1998.

\bibitem{gourley2005two}
S.~A. Gourley and Y.~Kuang.
\newblock Two-species competition with high dispersal: the winning strategy.
\newblock {\em Math. Biosci. Eng.}, 2(2):345--362, 2005.

\bibitem{hamida2017evolution}
Y.~Hamida.
\newblock {\em The Evolution of Dispersal for the Case of Two-Patches and
  Two-Species with Travel Loss}.
\newblock PhD thesis, The Ohio State University, 2017.

\bibitem{hastings1983can}
A.~Hastings.
\newblock Can spatial variation alone lead to selection for dispersal?
\newblock {\em Theoret. Population Biol.}, 24(3):244--251, 1983.

\bibitem{hess}
P.~Hess.
\newblock {\em Periodic-Parabolic Boundary Value Problems and Positivity},
  volume 247 of {\em Pitman Research Notes in Mathematics Series}.
\newblock Longman Scientific \& Technical, Harlow, 1991.

\bibitem{hsu1996competitive}
S.~B. Hsu, H.~L. Smith, and P.~Waltman.
\newblock Competitive exclusion and coexistence for competitive systems on
  ordered {B}anach spaces.
\newblock {\em Trans. Amer. Math. Soc.}, 348(10):4083--4094, 1996.

\bibitem{HuangJin}
Q.-H. Huang, Y.~Jin, and M.~A. Lewis.
\newblock {$R_0$} analysis of a {B}enthic-drift model for a stream population.
\newblock {\em SIAM J. Appl. Dyn. Syst.}, 15(1):287--321, 2016.

\bibitem{jiang2020two}
H.~Jiang, K.-Y. Lam, and Y.~Lou.
\newblock Are two-patch models sufficient? the evolution of dispersal and
  topology of river network modules.
\newblock {\em Bull. Math. Biol.}, 82(10):Paper No. 131, 42, 2020.

\bibitem{Jiang-Lam-Lou2021}
H.~Jiang, K.-Y. Lam, and Y.~Lou.
\newblock Three-patch models for the evolution of dispersal in advective
  environments: varying drift and network topology.
\newblock {\em Bull. Math. Biol.}, 83(10):Paper No. 109, 46, 2021.

\bibitem{jin2011seasonal}
Y.~Jin and M.~A. Lewis.
\newblock Seasonal influences on population spread and persistence in streams:
  critical domain size.
\newblock {\em SIAM J. Appl. Math.}, 71(4):1241--1262, 2011.

\bibitem{johnson1990evolution}
M.~L. Johnson and M.~S. Gaines.
\newblock Evolution of dispersal: theoretical models and empirical tests using
  birds and mammals.
\newblock {\em Annual review of ecology and systematics}, pages 449--480, 1990.

\bibitem{kirkland2006evolution}
S.~Kirkland, C.-K. Li, and S.~J. Schreiber.
\newblock On the evolution of dispersal in patchy landscapes.
\newblock {\em SIAM J. Appl. Math.}, 66(4):1366--1382, 2006.

\bibitem{lam2015evolution}
K.~Y. Lam, Y.~Lou, and F.~Lutscher.
\newblock Evolution of dispersal in closed advective environments.
\newblock {\em J. Biol. Dyn.}, 9(suppl. 1):188--212, 2015.

\bibitem{LAM2016Munther}
K.-Y. Lam and D.~Munther.
\newblock A remark on the global dynamics of competitive systems on ordered
  {B}anach spaces.
\newblock {\em Proc. Amer. Math. Soc.}, 144(3):1153--1159, 2016.

\bibitem{levin1984dispersal}
S.~A. Levin, D.~Cohen, and A.~Hastings.
\newblock Dispersal strategies in patchy environments.
\newblock {\em Theoret. Population Biol.}, 26(2):165--191, 1984.

\bibitem{li2010global}
M.~Y. Li and Z.~Shuai.
\newblock Global-stability problem for coupled systems of differential
  equations on networks.
\newblock {\em J. Differential Equations}, 248(1):1--20, 2010.

\bibitem{lin2014global}
K.-H. Lin, Y.~Lou, C.-W. Shih, and T.-H. Tsai.
\newblock Global dynamics for two-species competition in patchy environment.
\newblock {\em Math. Biosci. Eng.}, 11(4):947--970, 2014.

\bibitem{lou2019ideal}
Y.~Lou.
\newblock Ideal free distribution in two patches.
\newblock {\em J. Nonlinear Model Anal.}, 2:151--167, 2019.

\bibitem{lou2014evolution}
Y.~Lou and F.~Lutscher.
\newblock Evolution of dispersal in open advective environments.
\newblock {\em J. Math. Biol.}, 69(6-7):1319--1342, 2014.

\bibitem{lou2018coexistence}
Y.~Lou, H.~Nie, and Y.~Wang.
\newblock Coexistence and bistability of a competition model in open advective
  environments.
\newblock {\em Math. Biosci.}, 306:10--19, 2018.

\bibitem{lou2016qualitative}
Y.~Lou, D.-M. Xiao, and P.~Zhou.
\newblock Qualitative analysis for a {L}otka-{V}olterra competition system in
  advective homogeneous environment.
\newblock {\em Discrete Contin. Dyn. Syst.}, 36(2):953--969, 2016.

\bibitem{lou2015evolution}
Y.~Lou and P.~Zhou.
\newblock Evolution of dispersal in advective homogeneous environment: the
  effect of boundary conditions.
\newblock {\em J. Differential Equations}, 259(1):141--171, 2015.

\bibitem{Lu1993}
Z.~Y. Lu and Y.~Takeuchi.
\newblock Global asymptotic behavior in single-species discrete diffusion
  systems.
\newblock {\em J. Math. Biol.}, 32(1):67--77, 1993.

\bibitem{lutscher2006effects}
F.~Lutscher, M.~A. Lewis, and E.~McCauley.
\newblock Effects of heterogeneity on spread and persistence in rivers.
\newblock {\em Bull. Math. Biol.}, 68(8):2129--2160, 2006.

\bibitem{lutscher2007spatial}
F.~Lutscher, E.~McCauley, and M.~A. Lewis.
\newblock Spatial patterns and coexistence mechanisms in systems with
  unidirectional flow.
\newblock {\em Theoret. Population Biol.}, 71(3):267--277, 2007.

\bibitem{lutscher2005effect}
F.~Lutscher, E.~Pachepsky, and M.~A. Lewis.
\newblock The effect of dispersal patterns on stream populations.
\newblock {\em SIAM Rev.}, 47(4):749--772, 2005.

\bibitem{ma2020evolution}
L.~Ma and D.~Tang.
\newblock Evolution of dispersal in advective homogeneous environments.
\newblock {\em Discrete Contin. Dyn. Syst.}, 40(10):5815--5830, 2020.

\bibitem{mcpeek1992evolution}
M.~A. McPeek and R.~D. Holt.
\newblock The evolution of dispersal in spatially and temporally varying
  environments.
\newblock {\em The American Naturalist}, 140(6):1010--1027, 1992.

\bibitem{noble2015evolution}
L.~Noble.
\newblock {\em Evolution of dispersal in patchy habitats}.
\newblock PhD thesis, The Ohio State University, 2015.

\bibitem{smith2008monotone}
H.~L. Smith.
\newblock {\em Monotone {D}ynamical {S}ystems: {A}n {I}ntroduction to the
  {T}heory of {C}ompetitive and {C}ooperative {S}ystems}.
\newblock American Mathematical Society, Providence, RI, 1995.

\bibitem{speirs2001population}
D.~C. Speirs and W.~S.~C. Gurney.
\newblock Population persistence in rivers and estuaries.
\newblock {\em Ecology}, 82(5):1219--1237, 2001.

\bibitem{vasilyeva2011population}
O.~Vasilyeva and F.~Lutscher.
\newblock Population dynamics in rivers: analysis of steady states.
\newblock {\em Can. Appl. Math. Q.}, 18(4):439--469, 2010.

\bibitem{vasilyeva2012flow}
O.~Vasilyeva and F.~Lutscher.
\newblock How flow speed alters competitive outcome in advective environments.
\newblock {\em Bull. Math. Biol.}, 74(12):2935--2958, 2012.

\bibitem{xiang2019evolutionarily}
J.-J. Xiang and Y.~Fang.
\newblock Evolutionarily stable dispersal strategies in a two-patch advective
  environment.
\newblock {\em Discrete \& Continuous Dynamical Systems-B}, 24(4):1875, 2019.

\bibitem{yan2022competition}
X.~Yan, H.~Nie, and P.~Zhou.
\newblock On a competition-diffusion-advection system from river ecology:
  Mathematical analysis and numerical study.
\newblock {\em SIAM J. Appl. Dyn. Syst.}, 21(1):438--469, 2022.

\bibitem{zhao2016lotka}
X.-Q. Zhao and P.~Zhou.
\newblock On a {L}otka-{V}olterra competition model: the effects of advection
  and spatial variation.
\newblock {\em Calc. Var. Partial Differential Equations}, 55(4):73, 2016.

\bibitem{zhou2016lotka}
P.~Zhou.
\newblock On a {L}otka-{V}olterra competition system: diffusion vs advection.
\newblock {\em Calc. Var. Partial Differential Equations}, 55(6):137, 2016.

\end{thebibliography}
\end{document}